\newtheorem{theorem}{Theorem}
\newtheorem{proposition}{Proposition}
\newtheorem{corollary}{Corollary}
\newtheorem{lemma}{Lemma}
\newtheorem{definition}{Definition}
\newtheorem{assumption}{Assumption}
\newtheorem{remark}{Remark}
\DeclarePairedDelimiter\floor{\lfloor}{\rfloor}
\def\forall{\text{for all\ }}
\def\l{\lambda}
\def\a{\alpha}
\def\b{\beta}
\newcommand{\aij}{\alpha_{ij}}
\newcommand{\bij}{\beta_{ij}}
\newcommand{\Ni}{\mathcal{N}_i}
\newcommand{\Nil}{\mathcal{N}_i^{\mathcal{L}}}
\newcommand{\Nim}{\mathcal{N}_i^{\mathcal{M}}}
\newcommand{\xl}{x_{\mathcal{L}}}
\newcommand{\Txl}{\Tilde{x}_{\mathcal{L}}}
\newcommand{\xm}{x_{\mathcal{M}}}
\newcommand{\wl}{W_{\mathcal{L}}}
\newcommand{\barwl}{\widebar{W}_{\mathcal{L}}}
\newcommand{\wm}{W_{\mathcal{M}}}
\def\reals{\mathbb{R}}
\newcommand{\oea}[1]{\textcolor{cyan}{#1}}
\newcounter{sideremark}
\def\LL{\mathcal{L}}
\def\M{\mathcal{M}}
\title{\LARGE 
Multi-Agent Trustworthy Consensus under Random Dynamic Attacks }
\author{Orhan Eren Akg\"un$^*$, Sarper Ayd{\i}n$^*$, Stephanie Gil, and Angelia Nedi\'c %
\thanks{O.\ E.\ Akg\"un, S.\ Ayd{\i}n,  and S.\ Gil are with the School of Engineering and Applied Sciences, Harvard
University, Cambridge, MA 02138. E-mail: {\tt\small erenakgun@g.harvard.edu, saydin@seas.harvard.edu, sgil@seas.harvard.edu}. A.\ Nedi\'c are with the School of Electrical, Computer and Energy Engineering, Arizona State University, Tempe, AZ 85281. E-mail:{\tt\small \; angelia.nedich@asu.edu.} $^*$Indicates equal contribution.  
This work has been supported by the NSF awards CNS-2147641, CNS-2147694, and the Defense Advanced Research Projects Agency (DARPA) under Grant No. D24AP00319-00. Approved for public release; distribution is unlimited. The project was or is sponsored by DARPA, and the content does not necessarily reflect the position or policy of the Government, nor should official endorsement be inferred.}}
\begin{document}
\normalsize
\maketitle

%
\begin{abstract}

In this work, we study the consensus problem in which legitimate agents send their values over an undirected communication network in the presence of an unknown subset of malicious or faulty agents. In contrast to former works, we generalize and characterize the properties of consensus dynamics with dependent sequences of malicious transmissions with dynamic (time-varying) rates, based on not necessarily independent trust observations. We consider a detection algorithm utilizing stochastic trust observations available to legitimate agents. Under these conditions, legitimate agents almost surely classify their neighbors and form their trusted neighborhoods correctly with decaying misclassification probabilities. We further prove that the consensus process converges almost surely despite the existence of malicious agents. For a given value of failure probability, we characterize the deviation from the nominal consensus value ideally occurring when there are no malicious agents in the system. We also examine the convergence rate of the process in finite time. Numerical simulations show the convergence among agents and indicate the deviation under different attack scenarios.
\end{abstract}

%

%
\section{Introduction}

In a consensus problem, a set of agents aim to agree on a value using local computations and local interactions over a communication network. The consensus problem is a basis for distributed optimization \cite{NedicOzdaglar,nedic2018distributed}, control \cite{andreasson2014distributed,meng2017distributed}, and estimation \cite{garin2010survey, schizas2007consensus}. In addition, it has also critical relevance for multi-agent coordination applications in cyber-physical systems as it ensures agreement on direction, location, and velocity among agents \cite{kia2019tutorial,bullo2006rendezvous, martinez2009distributed}. Consensus algorithms relying on the full cooperativeness of each agent, are vulnerable to malicious and faulty inputs from agents \cite{pasqualetti2011consensus,sundaram2010distributed}. 

In this paper, we analyze the consensus problem \cite{degroot1974reaching,olfati2004consensus} in the presence of malicious (or faulty) agents whose attack behavior may evolve over time. Specifically, we consider dynamic attack rates, where malicious agents can make attack decisions based on their own history, resulting in potentially dependent and strategic sequences of malicious behavior. This setting captures the possibility of adaptive adversaries that aim to evade detection by selectively choosing when to attack. The premise of this paper is to address and mitigate such dynamic malicious behavior based on detection using quantifiable ``trust" observations as side information derived from the physical aspects of the communication network. 

Resilient consensus methods dealing with malicious agents and data have different methodologies to address the problem. Some studies only utilize transmitted data for detection and elimination. However, they have restrictions on the network connectivity and the total number of malicious agents in the system \cite{dolev1982byzantine,pasqualetti2011consensus,sundaram2010distributed,leblanc2013resilient}. Since these restrictions affect solving other related multi-agent problems, such as optimization~\cite{sundaram2019optimization} and spectrum sensing~\cite{rawat2011sensing}, another body of the literature proposes using additional side information for the assessment of the identity (legitimate vs.\ malicious) of agents \cite{gil2017guaranteeing,cavorsi2023ICRA,Pierson2016,xiong2023securearray}. For example, in a Sybil attack where malicious agents generate imaginary agent identities in a system to have greater influence over the consensus dynamics \cite{gil2019consensus}, or in a location misreporting attack where malicious agents send false data to other agents, the study  \cite{gil2017guaranteeing} details the computation of stochastic trust observations $\alpha_{ij}(t)\in[0,1]$ from wireless signal information, which assess how likely a transmission from a communication link $(i,j)$ at time $t$ is trustworthy or not. In the given attack scenarios, the trust values $\alpha_{ij}(t)$ are derived from checking the uniqueness and directions of wireless signals for Sybil and location misreporting attacks. It is shown in~\cite{gil2017guaranteeing}
that the malicious agents attacking persistently can be detected when
the expected values of trustworthy and malicious transmissions are separated with some constant $\epsilon\in(0,1/2)$, i.e., $\mathbb{E}(\alpha_{ij}(t))\geq 1-\epsilon$ if it is legitimate and $\mathbb{E}(\alpha_{ij}(t))\leq \epsilon$ if it is malicious.

The former work \cite{yemini2021characterizing} proves that malicious agents can be detected via trust observations and the trustworthy agents can reach consensus even in the cases when malicious agents constitute the majority of the total number of agents. The critical property that \cite{yemini2021characterizing} employs is that the value $\epsilon$ is bounded above by $1/2$, which is used as a threshold value to separate the accumulated trust values obtained by summing the trust values $\alpha_{ij}(t)$ over time. However, using a fixed threshold for classification may fail in scenarios where malicious agents exhibit dynamic or random behavior. In fact, in some cases,  intermittently attacking malicious agents can cause more harm to the system than those that attack continuously  \cite{nurellari2018detection,kailkhura2015detection}. The core challenge lies in the fact that the dynamic nature of malicious agents leads to mixed distributions of accumulated trust values, which can closely resemble those of legitimate agents, making them indistinguishable under a fixed threshold detection mechanism based on attack frequency. Consequently, distinguishing attackers from legitimate agents necessitates the use of dynamic (time-varying) thresholds for trust evaluation. In the conference version of this paper \cite{aydin2024multi}, we proposed a new detection algorithm on par with the consensus process resilient against intermittent attacks and failures.  We showed that agents are correctly classified with
probability one if the trust observations for each agent are identical and independently distributed over time. This assumption may not hold in certain scenarios, such as when malicious agents dynamically adjust their attack rates using past available information, creating dependency among trust observations.

In this work, we are motivated by the lack of results that address the dynamic and strategic nature of malicious behavior in the context of resilient consensus dynamics. We extend prior analyses by considering settings where trust observations may be temporally dependent and non-identically distributed, thereby unifying and generalizing existing results on the detection of both intermittent and static malicious behavior. The works \cite{yemini2021characterizing,aydin2024multi}, utilize the concept of \textit{trusted neighborhood} in which legitimate agents choose trustworthy agents to include their data on their updates. Following \cite{aydin2024multi}, we use the Trusted Neighborhood Learning Algorithm (\cref{alg_trust_neig}) to let agents determine their trusted neighborhoods. The algorithm is built on differentiating agents with pairwise comparisons. At first, agents select their most trusted neighbor at each time based on accumulated trust values and then compare it with the remaining neighbor agents. This comparison checks the difference between the accumulated trust values with time-varying thresholds. As a result, agents execute consensus updates only with transmitted data from trusted neighbors. 
In more detail, our contributions in this study are summarized as follows,

    \noindent 1) \textit{Classification and Detection:} Using the detection algorithm (\cref{alg_trust_neig})  we prove that misclassification probabilities decrease (near)-exponentially as time increases (Lemmas \ref{lem_misp_legit}-\ref{lem_misp_mal}). These results show that no misclassification error happens in the trusted neighborhoods after a finite but random time (\cref{lem_als_w}), which we characterize in terms of the difference between expected trust values for malicious and legitimate transmissions, a lower bound on the attack rates, and the parameters of \cref{alg_trust}.\\
    \noindent 2) \textit{Asymptotic  Convergence:}  Relying on almost surely correct classification (\cref{lem_als_w}), we show that legitimate agents reach consensus 
    almost surely (\cref{cor_con}).\\
    \noindent 3) \textit{Deviation from Nominal Consensus:} For a given probability of attack, we characterize the maximal deviation experienced by an agent from the nominal consensus based on the properties of the trust values, the parameters of \cref{alg_trust_neig}, and the numbers of legitimate and malicious agents (Theorem~\ref{thm_dev}).\\
    \noindent 4) \textit{Convergence Rate:} We show that the consensus process converges geometrically fast with a high probability depending on the algorithmic parameters in addition to the numbers of legitimate and malicious agents (\cref{thm_rate}).

\subsection{Related Work}
The consensus problem is well-studied under the conditions of  (strongly)-connected communication networks and (fully) cooperative agents. The previous works \cite{lynch1996distributed,xiao2004fast,olshevsky2009convergence,nedich2015convergence} derive and establish asymptotic convergence properties and convergence rates. Another line of works extends the results for the cases of random failures in communication links \cite{shi2015consensus} and noisy information \cite{touri2009distributed,kar2008distributed}, limited channels \cite{li2010distributed}, and communication delays \cite{liu2018products}. Moreover, the design of weights assigned to other agents in the consensus process has been a subject of interest. The studies consider time-varying weights as a function of (system) states \cite{sluvciak2016consensus}, node degrees \cite{xiao2006distributed}, and negative weights in competitive settings \cite{wu2018consensus}. Overall, these works do not directly address the malicious activity in the consensus problem.

Resilient consensus methods address the presence of malicious agents and focus on mitigating their impact on the system's behavior. As such, the resilient consensus algorithms mainly have two steps \textit{i)} detection/removal of malicious transmissions/agents, and \textit{ii)} consensus update with remaining neighbors/transmissions. The main difference in these studies stems from the issue of the removal of malicious activity. The studies \cite{khalyavin2024non,sundaram2010distributed,leblanc2013resilient} 
utilize Mean Subsequence Reduced (MSR) algorithm (see the review in \cite{ishii2022overview}) to sort incoming data and remove outlier transmissions that are either too large or too small. As extensions of this approach, two recent studies propose new detection methods using information from two-hop neighbors in directed networks \cite{yuan2024resilient}, and a distributed model
predictive control (MPC) for detection of malicious inputs \cite{wei2024resilient}. The disadvantage of these approaches is that they require greater connectivity among legitimate agents, a bounded number of malicious agents, and direct gathering of information from more than just one-hop neighbors. 

In contrast with aforementioned approaches, trust-based methods \cite{yemini2021characterizing,akgun2023learning,hadjicostis2024trustworthy,ballotta2024role} seek to assess the trustworthiness of neighbors with additional trust observations over time rather than solely using transmitted values for detection of anomalies. The works \cite{yemini2021characterizing,akgun2023learning,ballotta2024role} assume implicitly that the set of malicious agents is static and these agents persistently attack. In \cite{hadjicostis2024trustworthy}, similar to \cite{yuan2024resilient}, the algorithm utilizes two-hop neighbor information and further assumes almost surely correct classification of agents without analysis of the behavior of trust observations. The work \cite{fioravanti2023secure} considers resilient gossip algorithm for intermittent malicious attacks. The algorithm uses information from two-hop neighbors and assumes that no agents behave maliciously at the initialization stage, which may not hold when malicious agents attack with dynamic rates. Unlike these works, our approach does not require 
multi-hop information, as it relies only on the availability of trust observations from the immediate neighbors of the agents.


Different concepts of trust have been investigated, such as those where the agents decide on the trustworthiness of other agents using observations \cite{cheng2021general,yang2024enhancing,pippin2014trust}, watermarking \cite{mo2015physical}, sensing \cite{krotofil2015process}, and wireless signals \cite{xiong2023securearray,gil2017guaranteeing,cavorsi2024exploiting}. 
We will use the concept of trust developed in~\cite{gil2017guaranteeing,yemini2021characterizing,cavorsi2024exploiting}. However, unlike these works,
in this paper we are considering sequences of trust observations that are not-necessarily independent. 
\section{Consensus Dynamics with Malicious Agents}\label{sec:model}
In this section, we formally introduce the problem. In \Cref{sec:consensus_model}, we define the linear consensus dynamics. \Cref{sec:attack_trust_model} describes the attack model for malicious agents and introduces the stochastic trust observations. In \Cref{sec:detection}, we present the detection algorithm used by legitimate agents to classify their neighbors as either legitimate or malicious, and describe how agents assign consensus weights based on these classifications.
\subsection{Notation}
We denote the absolute value of a scalar and the cardinality of a finite set by $|\!\cdot\!|$. We write $x_{i}$ and $A_{ij}$ for the $i$-th entry of a vector $x$ and the $ij$-th entry of a matrix $A$, respectively. When the notation is heavy, we use $[\cdot]_i$ and $[\cdot]_{ij}$, respectively, for the $i$-th entry of a vector and the $ij$-th entry of a matrix, such as when a vector or a matrix are expressed as products and/or summations of other vectors and matrices. For matrices $A$ and $B$, we write $A>B$ (or $A\ge B$) when $A_{ij}>B_{ij}$ 
(or $A_{ij} \ge B_{ij}$) for all $i,j$. 
The backward matrix product of the matrices $H(k)$, is defined as:
\begin{align}
    \prod_{k=\tau}^t H (k):= \begin{cases}
 H(t) \cdots H (\tau+1)  H (\tau) \: &\text{if } t \ge \tau, \\
I \: &\text{otherwise},
\end{cases}
\end{align}
where $I$ corresponds to the identity matrix. 
\subsection{Consensus in Presence of Untrustworthy Agents}\label{sec:consensus_model}
We consider the consensus process among multiple agents defined by the set $\mathcal{N}= \{1,\ldots, N\}$. The agents exchange information with their neighbors through a static undirected graph  $G(\mathcal{N}, \mathcal{E})$, where $\mathcal{E}\subseteq \mathcal{N} \times \mathcal{N}$ is the set of undirected edges among the agents. Thus, we have $(i,j) \in \ccalE$ if and only if $(j,i) \in \ccalE$.
Each agent $i \in \ccalN$ has a set of neighboring agents
denoted by $\Ni=\{ j \in \ccalN\mid (i,j) \in \ccalE\}$.
The set of agents $\mathcal{N}$ consists of two disjoint subsets: legitimate agents, who are always trustworthy, and malicious agents, who may or may not be trustworthy. The set of legitimate and malicious agents are denoted, respectively, by $\mathcal{L}$ and $\mathcal{M}$, satisfying $\mathcal{L}\cup\mathcal{M}=\mathcal{N}$ and $\mathcal{L}\cap \mathcal{M}=\emptyset$. These sets are fixed over time and assumed to be unknown by legitimate agents. For each legitimate agent $i\in \mathcal{L}$, we denote its set of legitimate neighbors by $\Nil = \Ni \cap \mathcal{L}$ and malicious neighbors by 
$\Nim = \Ni \cap \mathcal{M}$. Legitimate agents assign nonnegative weights $w_{ij}(t)$ to their neighbors, with $w_{ij}(t) \in [0,1]$ if $(i,j) \in \mathcal{E}$ and $w_{ij}(t) = 0$ otherwise. These weights change over time and we will detail how agents determine these weights later on. All legitimate agents $i\in \ccalL$ start with an arbitrary initial value $x_i(0)\in\reals$ and update their values according to the following consensus dynamic, starting at some time $T_0\ge0$, $\forall t \ge T_0-1$,
\begin{equation} \label{eq_con}
     x_i(t+1)=w_{ii}(t)x_i(t)+\sum_{j \in \mathcal{N}_i} w_{ij}(t)x_j(t),
 \end{equation}
 where $x_i(t) \in \reals$ for all $i \in \ccalN$. 
 Before the start time $T_0$, the legitimate agents do not update their values, i.e., $x_i(t)=x_i(0)$ for all $0\le t < T_0$. According to Eq. \eqref{eq_con}, each legitimate agent $i \in \ccalL$ updates its value as a weighted average of its own and its neighbors’ values, with $w_{ii}(t) > 0$, $w_{ij}(t) \ge 0$, and $w_{ii}(t) + \sum_{j \in \mathcal{N}i} w_{ij}(t) = 1$.

 We consider the cases where agents' initial values lie within the interval $[-\eta,\eta]$ for some $\eta>0$ that is known to all agents. Since legitimate agents update their values in the consensus process by taking a convex combination of their own value and those of their neighbors, this assumption ensures that $|x_i(t)| \le \eta$ for all $i\in\mathcal{N}$ and $t\geq0$.
 Malicious agents also send any values within $[-\eta, \eta]$ but avoid values outside this interval, as these would result in their immediate detection. We detail malicious agents behavior in the next section.


Next, we express the consensus dynamics in matrix form for use in later analysis. We let the vector $x(t) \in \reals^N$ consist of the agents' values, where $x_i(t)$ is the value of agent $i$. Given the disjoint sets $\LL$ and $\M$ of legitimate and malicious agents, without loss of generality, we assume that the agents are indexed in a such way that the last $\M$ agents are malicious. Thus, 
we can write $x(t)=[ \xl(t)^\top, \xm(t)^\top]^\top$ without loss of generality, where $\xl(t)\in \mathbb{R}^{|\ccalL|}$ represents the values of legitimate agents and $\xm(t) \in \mathbb{R}^{ |\ccalM|}$ those of malicious agents. Then, the consensus process~\eqref{eq_con} in the vector form is given by
 \begin{equation} \label{eq_con_dy}
\xl(t+1) 
=
     \begin{bmatrix}
\wl(t) & \wm(t)
\end{bmatrix}
\cdot
  \begin{bmatrix}
\xl(t) \\
\xm(t) 
\end{bmatrix},
\end{equation}
where $\wl(t) \in \mathbb{R}^{|\ccalL|\times |\ccalL|}$ and $\wm(t) \in \mathbb{R}^{|\ccalL|\times |\ccalM|}$ are the weight matrices that legitimate agents associate with legitimate and malicious agents, respectively.
In what follows, the weight matrices $\wl(t)$ and $\wm(t)$ will depend on the start time $T_0$.
To capture this dependence, we will write $\xl(T_0,t)$ instead of $\xl(t)$.
For all $t\geq T_0$, we decompose $\xl(T_0, t)$ into two terms to separate the contributions of legitimate and malicious agents as follows: for all $t \ge T_0-1$,
\begin{equation}\label{eq_con_sum}
     \xl(T_0,t)= \Tilde{x}_{\ccalL}(T_0,t)+\phi_{\ccalM}(T_0,t),
 \end{equation}
where 
\begin{align}  
    \Tilde{x}_L(T_0,t)&= \bigg ( \prod_{k=T_0-1}^{t-1} \wl(k) \bigg)\xl(0), \label{eq_con_part1}\\
    \phi_{\ccalM}(T_0,t)&=  \sum_{k=T_0-1}^{t-1} \bigg ( \prod_{s=k+1}^{t-1} \wl(s) \bigg) \wm(k) \xm(k). \label{eq_sep_Dy}
\end{align}
The term $\Tilde{x}_L(T_0,t) \in  \mathbb{R}^{|\ccalL|}$ results from consensus dynamics among legitimate agents, while the vector $\phi_{\ccalM}(T_0,t) \in \mathbb{R}^{|\ccalL|}$ captures the influence of malicious agent inputs $\xm(k) \in \mathbb{R}^{|\ccalM|}$. The relations~\eqref{eq_con_sum}-\eqref{eq_sep_Dy} are central in the subsequent analysis, as they capture the consensus dynamics among the legitimate agents in terms of the starting time $T_0$, the initial vector $x(0)$, and the influence of the malicious inputs.

\subsection{Attack and Trust Models}\label{sec:attack_trust_model}
We consider an attack model where at each time step, every malicious agent decides whether to attack the system. For a malicious agent $m\in \M$, we denote its attack decision at time $t$ by the indicator random variable $f_m(t) \in \{0,1\}$, where $\{f_m(t)=1\}$ indicated the event of an attack. When attacking, a malicious agent is allowed to transmit any value within the interval $[-\eta,\eta]$. 

We focus on the setting where each legitimate agent $i$ gathers a stochastic trust observation $\alpha_{ij}(t)\in [0,1]$ associated with transmissions from its neighbor $j$, where a larger value corresponds to a higher likelihood of an attack originating from neighbor $j$ at time $t$. This stochastic inter-agent trust model captures scenarios where legitimate agents can leverage physical channels of information, such as sensor observations and wireless fingerprints, to assess the trustworthiness of their neighbors, and studied in various other works \cite{gil2017guaranteeing, yemini2021characterizing,yemini2022resilentopt,akgun2023learning,cavorsi2023ICRA,ballotta2024role,yong2024}. The side information from these physical sources enables cross-validation of transmissions through the physical environment.  See ~\cite{gil2017guaranteeing, cavorsi2023ICRA} for examples of such trust observations and how they can be computed.

We have the following assumption on the connectivity of legitimate agents and the trust observations $\alpha_{ij}(t)$ for transmissions among the legitimate agents.
\begin{assumption}\label{leg_ag}[Legitimate Agents]
Assume that:\\
\noindent (1)
The subgraph $G_{\ccalL}=(\ccalL, \ccalE_{\ccalL})$ induced by the legitimate agents $\mathcal{L}$ is connected, where $\ccalE_{\ccalL}= \{(i,j) \in \ccalE\mid i,j \in \ccalL\}$.\\
\noindent (2) For any legitimate agent $i\in \LL$ and any of its legitimate neighbors $j$, the trust observations $\alpha_{ij}(t)$ are independent over time. 
Moreover, these observations
have static expectations that are uniform across the legitimate agents, i.e., for all $t\ge0$,
\[\mathbb{E}(\alpha_{il}(t))=E_\LL\qquad\hbox{for all $i\in \LL$ and $\ell\in \Nil$}.\]
\end{assumption}

The malicious agents can choose to attack or not with some time-varying probability. Moreover, the probability of an attack at any time $t\ge1$ can depend on the past outcomes for all $t\ge 1$, i.e.,
$\mathbb{P}(f_m(t)=1\mid f_m(0),\ldots,f_m(t-1))$.
To formalize this, for any $m\in \M$, we define the history of agent $m$'s attack decisions up to time $t$ as
\begin{equation}\label{eq-hist}
\mathcal{F}_m(t)=\{f_m(0),\ldots, f_m(t)\}\qquad\hbox{for all $t\ge0$},
\end{equation}
where $\mathcal{F}_m(-1)=\emptyset$. For all $t\ge1$, 
let $p_m(t)$ be the {\it smallest conditional probability} of the events $\{f_m(t)=1\mid \mathcal{F}_m(t-1)\}$ for all  possible past outcomes $\mathcal{F}_m(t-1)$, i.e., 
for all $m\in\M$ and $t\ge1$,
\begin{equation}\label{def-pmt}
p_m(t)\!=\!\min_{\atop \mathcal{F}_m(t-1)\in \{0,1\}^{t}}\!
\mathbb{P}(f_m(t)=1\mid \mathcal{F}_m(t-1)).
\end{equation}
Also, let
\[p_m(0)=\mathbb{P}(f_m(0)=1)\qquad\hbox{for all }m\in\M.\]
We use the following assumption for the malicious agents.
\begin{assumption}
\label{mal_ag}[Malicious Agents]
 Assume that:\\ 
\noindent (1)
The conditional expectations of the trust observations received by a legitimate agent $i$ from a malicious neighbor $m$ satisfy the following for all $t\ge0$, $i\in \LL$ and $m\in \Nim$:
\[\mathbb{E}(\alpha_{im}(t)\mid f_m(t)=0)
=E_\LL,\]
\[\mathbb{E}(\alpha_{im}(t)\mid f_m(t)=1)
=\mu_m(t).\]
We also have $E_\LL>E_\M$ where $E_\M=\max_{m\in\M, t\geq0}\mu_m(t)$.\\
\noindent (2)
For any legitimate agent $i\in\LL$ and any of its malicious neighbors $m\in \Nim$, given $f_m(t)$, the conditional trust observations  $\alpha_{im}(t)|f_m(t)$ are independent over time $t$.
\end{assumption}

Assumption~\ref{leg_ag}(1) ensures that the subgraph induced by legitimate agents is connected. This is a standard and relatively mild requirement in the resilient consensus literature~\cite{yemini2021characterizing, ballotta2024role}, and it is weaker than the strong robustness conditions often required in deterministic settings without stochastic trust observations~\cite{ishii2022overview}. Assumption~\ref{leg_ag}(2) posits independence of trust observations over time and uniform expected trust values across legitimate agents. This aligns with existing works that incorporate stochastic trust observations~\cite{yemini2021characterizing, yemini2022resilentopt, akgun2023learning, ballotta2024role, yong2024}.

In contrast, our assumptions for malicious agents, particularly in Assumption~\ref{mal_ag}, are more general than those in prior work, including our previous conference paper~\cite{aydin2024multi}. Many existing studies assume that malicious agents attack at every time step, i.e., $p_m(t)=1$ for all $t$ \cite{yemini2021characterizing,yemini2022resilentopt,akgun2023learning,ballotta2024role,yong2024}. They also typically assume that trust observations are independent over time, identically distributed for each pair of agents, and stationary in expectation \cite{yemini2021characterizing,yemini2022resilentopt,akgun2023learning,ballotta2024role,yong2024,aydin2024multi}. Our model relaxes these assumptions in two important ways. First, we allow malicious agents to vary their attack probabilities $\mathbb{P}(f_m(t)=1)$ over time based on their own histories and potentially in coordination with other malicious agents. This introduces potential temporal dependencies into the sequence of trust observations. Second, we allow the expected trust values received from malicious agents to vary across agents and over time, provided they remain uniformly bounded above by $E_\M$. As a result, our analysis does not rely on the strong independence or stationarity assumptions common in prior works. Instead, we accommodate adaptive and time-correlated trust observations while still guaranteeing detection and consensus under appropriate conditions.

In our framework, malicious agents influence the consensus process in two distinct ways: 1) By controlling their attack probabilities $\mathbb{P}(f_m(t)=1)$, and 2) By choosing the values $x_m(t)\in[-\eta,\eta]$ they transmit. This flexibility allows for a broad range of malicious behaviors, including collaborative or strategic attacks where adversaries may coordinate both when to attack and what values to transmit in order to maximize disruption. This stands in contrast to previous models that typically only allow malicious agents to choose transmitted values arbitrarily while assuming fixed or non-adaptive attack schedules \cite{yemini2021characterizing, ballotta2024role, aydin2024multi}. Importantly, we do not impose any assumptions on the behavior of malicious agents when they are not attacking, i.e., how they choose $x_m(t)$ when $f_m(t)=0$. Modeling this non-attacking behavior would require additional structure, such as assuming that non-attacking agents follow the same consensus dynamics as legitimate agents. To maintain generality and accommodate a wide range of adversarial strategies, we avoid such assumptions. As a result, our consensus analysis is more conservative and yields worst-case upper bounds on the influence of malicious agents. Finally, our attack model assumes that malicious agents broadcast the same value $x_m(t)$ to all their neighbors, and make a single attack decision $f_m(t)$ per time step. However, our detection and convergence analysis extends to more general models where a malicious agent makes neighbor-specific decisions $f_{mj}(t)$ and transmit different values $x_{mj}(t)$ to each neighbor.

\subsection{Trusted Neighborhood Learning}\label{sec:detection}
In this section, we present an algorithm that legitimate agents use to identify their malicious neighbors. To identify their malicious neighbors, at time $t$, the legitimate agents use the history of the trust observations $\{\alpha_{ij}(k)\}_{k=0}^t$ to select their trustworthy neighbors. This selection is done 
by assigning positive weights $w_{ij}(t)>0$ to such neighbors in the consensus process~\eqref{eq_con}. In the algorithm, every legitimate agent $i$ uses the aggregate trust observations about its neighbor $j$, defined as: for all $t\ge0$,
\begin{equation}\label{eq-agt}
    \bij(t)=\sum_{k=0}^{t} \aij(k)\qquad  \forall \: i \in \ccalL\hbox{ and } j \in \Ni. 
\end{equation}

In the trusted neighborhood learning algorithm (\Cref{alg_trust}), every legitimate agent $i \in \mathcal{L}$ selects its trusted neighbors based on the aggregate trust values $\beta_{ij}(t)$, as follows. At first,  each legitimate agent $i$ identifies its most trusted neighbor $\bar{j}$ (that could be malicious or legitimate at a given time), with the largest aggregate trust value in its neighbor set, i.e., $\beta_{i\bar{j}}(t)=\max_{j\in \Ni}\beta_{ij}(t)$. At second, agent $i$ evaluates whether the trust values of its other neighbors are sufficiently close to this most trusted value, using a time-varying threshold $\xi_t$ 
on the difference $\beta_{i\bar{j}}(t) - \beta_{ij}(t)$. The algorithm outputs the trusted neighbor set $\hat{\ccalN}_i(t)$ for every $i\in\LL$. As seen from the algorithm, all neighbors $j\in\Ni$ that attain the maximum $\max_{j\in\Ni}\beta_{ij}(t)$ are always included in the set $\hat{\ccalN}_i(t)$ of trusted neighbors. 

This algorithm is based on two key observations. First, by Assumption~\ref{leg_ag}, every legitimate agent has at least one legitimate neighbor, and the trust observations from legitimate neighbors are independent over time with identical expectations. Therefore, the difference between the aggregate trust values of two legitimate neighbors (e.g., $\beta_{il_1}(t) - \beta_{il_2}(t)$) is expected to remain small relative to time $t$. Second, the trust values form malicious neighbors have a strictly lower expected value when they are attacking. Therefore, as long as they attack frequently enough in probability (as formally characterized in \Cref{sec:convergence_of_consensus}), the gap between the aggregate trust values of legitimate and malicious neighbors grows sufficiently large over time. Consequently, a suitably chosen detection threshold $\xi_t$ can distinguish between legitimate and malicious neighbors. The design of the threshold sequence $\xi_t$ and its relationship to the frequency of attacks (captured by the lower bound on the attack probability $p_m(t)$) play an important role in ensuring the correctness of the algorithm. In our prior work~\cite{aydin2024multi}, we proposed Algorithm~\ref{alg_trust} using a specific threshold of the form $\xi_t=\xi(t+1)^\gamma$ where $\xi>0$ and $\gamma\in (0.5,1)$. In this work, we extend the analysis and show that the algorithm remains effective for a broader class of detection thresholds. In \Cref{sec:convergence_of_consensus}, we derive sufficient conditions under which a general sequence $\xi_t$ ensures almost sure convergence of the consensus algorithm. In \Cref{sec:deviation}, we analyze the impact of a specific choice of $\xi_t$ on the deviation from the nominal consensus value in the absence of malicious agents. 

\begin{algorithm}[H] 
   \caption{Trusted Neighborhood Learning (for every $i\in\LL$)}
\label{alg_trust}
\begin{algorithmic}[1]\label{alg_trust_neig}
  \STATE {\bfseries Input:} Time-varying threshold $\xi_t>0$.
\STATE Agent $i \in \ccalL$ selects one of its most trusted agent $\bar{j}(t)\in {\rm Argmax}_{j \in \mathcal{N}_i} \beta_{ij}(t)$.
\STATE  Agent $i \in \ccalL$ checks if $ \beta_{i\bar{j}(t)}(t)- \beta_{ij}(t) \le \xi_t$ $ \forall j \in \Ni$.
\STATE Agent $i \in \ccalL$ forms its trusted neighborhood $\hat{\ccalN}_i(t)=\{j \in \Ni\mid \beta_{i\bar{j}(t)}(t)- \beta_{ij}(t) \le \xi_t \}$.
 \STATE {\bfseries Output:} The set $\hat{\ccalN}_i(t)$ of trusted neighborhood.
   \end{algorithmic}
\end{algorithm}
Upon executing Algorithm~\ref{alg_trust},
the legitimate agents use their trusted neighborhoods $\hat{\ccalN}_i(t)$ to define the weights $w_{ij}(t)$
for the consensus process~\eqref{eq_con}, as follows:
\begin{equation} \label{eq_wij}
    w_{ij}(t)= \begin{cases}
 \frac{1}{n_{w_i}(t)} \: &\text{if } j \in \hat{\ccalN}_i(t), \\
1-\sum_{\ell \in \hat{\ccalN}_i(t)} w_{i\ell}(t) \: &\text{if } j=i,\\
0 \: &\text{otherwise},
\end{cases}
\end{equation}
where $n_{w_i}(t)= \max\{|\hat{\ccalN}_i(t)|+1, \kappa \} \ge 1$ 
and $\kappa >0$ is a common parameter bounding the effect of other agents on the consensus process.
We let $W(t)$ be the matrix with entries $w_{ij}(t)$ as defined in~\eqref{eq_wij}.
We also define the {\it nominal matrix} $\widebar{W}_{\ccalL}$ as the weight matrix
that would have been formed according to~\eqref{eq_wij} if the legitimate agents have classified their neighbors correctly, i.e.,
for all the legitimate agents $i\in \ccalL$,
\begin{equation} \label{eq_barwij}
[\widebar{W}_{\ccalL}]_{ij}= \begin{cases}
\frac{1}{\max\{ |\Nil|+1, \kappa\}} \: &\text{if } j \in \Nil,\\
1-\frac{\left|\Nil \right|}{\max\{ |\Nil|+1, \kappa\}} \: &\text{if } j=i,\\
0 \: &\text{otherwise}.
\end{cases}
\end{equation}
The nominal matrix $\widebar{W}_{\ccalL}$ is the matrix that the legitimate agents would have used in the absence of malicious agents. It serves as a reference for evaluating the performance of the consensus algorithm, as it captures the ideal, unperturbed case without adversarial influence. In the next section, we analyze the performance of both the trusted neighborhood learning algorithm and the resulting consensus dynamics.

\section{Analysis} \label{sec::conv}

In this section, we analyze the convergence properties of the consensus dynamics in~\eqref{eq_con_dy}. The performance of the consensus algorithm critically depends on the trusted neighborhood learning algorithm: for the system to behave similarly to the ideal case, legitimate agents must correctly identify their legitimate neighbors and exclude malicious ones. To this end, we first analyze the performance of the detection algorithm in \Cref{sec:prelim} and \Cref{sec:detection_analysis}. Our analysis is based on the existence of a (random but finite) time after which each legitimate agent correctly classifies all of its neighbors almost surely. In \Cref{sec:convergence_of_consensus}, we establish sufficient conditions for the existence of such a time and show that, under these conditions, the consensus algorithm converges almost surely. Next, in \Cref{sec:char_tf}, we characterize how quickly agents reach this correct classification time in probability. In \Cref{sec:deviation}, we analyze the deviation from the nominal consensus value---the consensus that would have been achieved in the absence of malicious agents. Finally, in \Cref{sec:conv_rate}, we investigate the rate of convergence.

\subsection{Preliminary Results}\label{sec:prelim}
In this section, we provide some preliminary results
that will be used to analyze the performance of the trusted-neighborhood learning algorithm (\Cref{alg_trust_neig}).
Our analysis leverages a concentration inequality to bound the probability of misclassifying a legitimate neighbor as malicious and vice versa. We first present Hoeffding's Lemma.

\begin{lemma} [Hoeffding's Lemma (\cite{concentration-ineq2013}, Lemma 2.2, pg.\ 27]\label{lem_hoeff}
    Let $X$ be a real random variable taking values in the interval $[a,b]$ almost surely. Then, for any $ \l>0$, it holds
    \begin{equation*}
        \mathbb{E}(e^{\l X}) \le e^{\l\mathbb{E}(X)+ \l^2 (b-a)^2/8}.
    \end{equation*}
\end{lemma}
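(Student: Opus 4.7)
The plan is to reduce to the centered case and then bound the moment generating function by a Gaussian-type term via a convexity argument followed by a Taylor expansion. First I would set $Y = X - \mathbb{E}(X)$, which takes values in $[a - \mathbb{E}(X), b - \mathbb{E}(X)]$, an interval of the same length $b-a$, with $\mathbb{E}(Y) = 0$. Writing $\mathbb{E}(e^{\lambda X}) = e^{\lambda \mathbb{E}(X)} \mathbb{E}(e^{\lambda Y})$, the lemma reduces to showing $\mathbb{E}(e^{\lambda Y}) \le e^{\lambda^2 (b-a)^2/8}$ for the centered, bounded variable $Y$. For notational convenience I would then relabel and simply assume $\mathbb{E}(X) = 0$ with $X \in [a,b]$ where $a \le 0 \le b$.

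Next I would use the convexity of $x \mapsto e^{\lambda x}$ on $[a,b]$. Writing any $x \in [a,b]$ as the convex combination $x = \frac{b-x}{b-a}\,a + \frac{x-a}{b-a}\,b$ gives
\begin{equation*}
e^{\lambda x} \le \frac{b-x}{b-a}\, e^{\lambda a} + \frac{x-a}{b-a}\, e^{\lambda b}.
\end{equation*}
Taking expectations and using $\mathbb{E}(X) = 0$ yields
\begin{equation*}
\mathbb{E}(e^{\lambda X}) \le \frac{b}{b-a}\, e^{\lambda a} - \frac{a}{b-a}\, e^{\lambda b} =: e^{\psi(\lambda)},
\end{equation*}
where $\psi(\lambda) := \ln\!\big(\frac{b}{b-a} e^{\lambda a} - \frac{a}{b-a} e^{\lambda b}\big)$.

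The remaining task, which is the main technical step, is to show $\psi(\lambda) \le \lambda^2 (b-a)^2/8$. I would do this by Taylor expansion of $\psi$ around $0$: direct computation gives $\psi(0) = 0$ and $\psi'(0) = 0$ (the latter because, after differentiation, the numerator equals $\mathbb{E}(X) = 0$ in the degenerate two-point distribution on $\{a,b\}$ with weights $b/(b-a)$ and $-a/(b-a)$). For the second derivative, writing $p(\lambda) := \frac{-a\, e^{\lambda b}}{b\, e^{\lambda a} - a\, e^{\lambda b}} \in [0,1]$, a direct calculation shows
\begin{equation*}
\psi''(\lambda) = p(\lambda)(1-p(\lambda))(b-a)^2 \le \tfrac{1}{4}(b-a)^2,
\end{equation*}
where the last inequality uses $p(1-p) \le 1/4$ for $p \in [0,1]$. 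By Taylor's theorem with remainder, $\psi(\lambda) = \psi(0) + \lambda\,\psi'(0) + \tfrac{\lambda^2}{2}\psi''(\tilde\lambda) \le \lambda^2 (b-a)^2/8$ for some $\tilde\lambda$ between $0$ and $\lambda$. Combining this with the bound $\mathbb{E}(e^{\lambda X}) \le e^{\psi(\lambda)}$ and reinstating the mean shift gives the claim. The only delicate calculation is verifying the clean expression for $\psi''(\lambda)$; everything else is bookkeeping.
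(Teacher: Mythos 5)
Your proof is correct: the centering step, the convexity bound leading to $\psi(\lambda)$, the identities $\psi(0)=\psi'(0)=0$, the variance interpretation $\psi''(\lambda)=p(\lambda)(1-p(\lambda))(b-a)^2\le (b-a)^2/4$, and the Taylor-remainder conclusion are all sound. The paper itself gives no proof of this lemma -- it is quoted directly from the cited reference -- and your argument is precisely the standard proof given there, so there is nothing to compare beyond noting the agreement.
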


\begin{figure}
     \centering
     \begin{subfigure}[b]{0.4\textwidth}
         \centering
         \includegraphics[width=\textwidth]{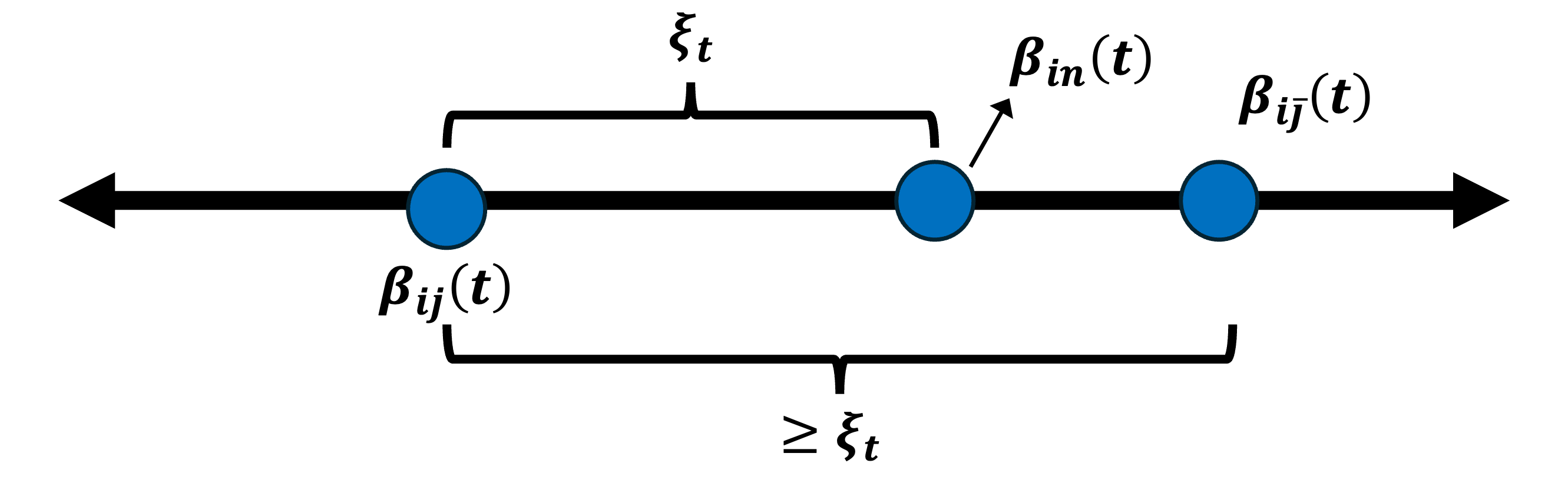}
         \caption{Legitimate neighbor $j$ is misclassified as malicious}
         \label{fig:alg_intuition_a}
     \end{subfigure}
     \hfill
     \begin{subfigure}[b]{0.4\textwidth}
         \centering
         \includegraphics[width=\textwidth]{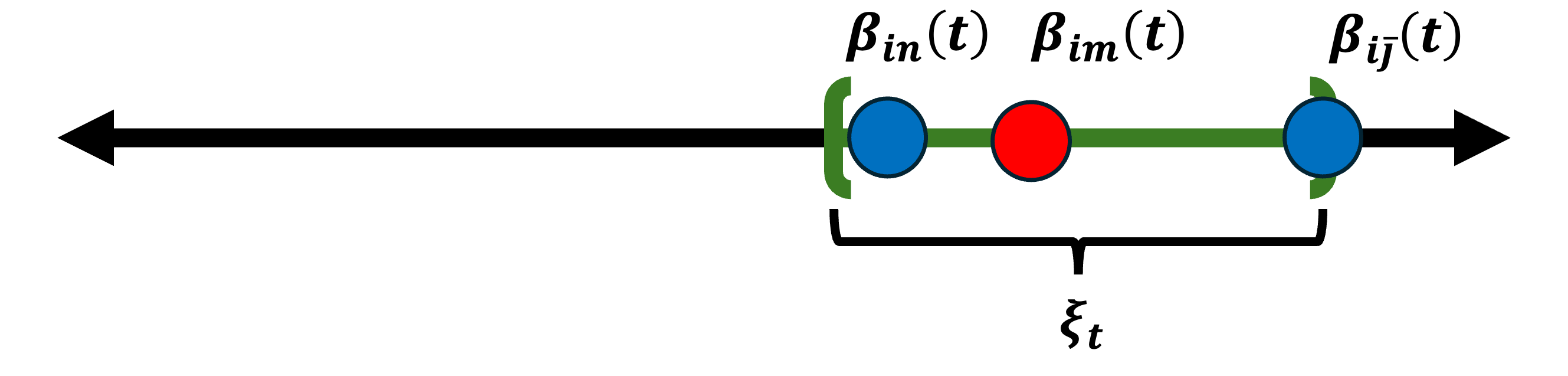}
         \caption{Malicious neighbor $m$ is misclassified as legitimate}
         \label{fig:alg_intuition_b}
     \end{subfigure}
    \caption {\footnotesize Trusted neighborhood learning algorithm for a legitimate agent $i\in\mathcal{L}$ is illustrated. Legitimate neighbors are shown in blue and malicious neighbors in red. Aggregate trust values are placed on a number line with larger values to the right. The green bracketed region represents the trusted region $\xi_t$ from Algorithm~\ref{alg_trust_neig}. (a) The accumulated trust value $\beta_{ij}(t)$ falls at least $\xi_t$ to the left of $\beta_{in}(t)$, implying that its gap to $\beta_{i\bar{j}}(t)$ is also at least $\xi_t$. (b) Since $\beta_{im}(t)$ lies within $\xi_t$ distance of $\beta_{i\bar{j}}(t)$, other aggregate trust values can either lay on its left or remain within $\xi_t$ distance on its right.}
    \label{fig:alg_intuition}
\end{figure}

Using Hoeffding's Lemma, for a legitimate agent $i$ and an arbitrary scalar $r$, we derive upper bounds on the probabilities $\mathbb{P}(\beta_{im}(t)-\b_{i\ell}(t) > r)$ and $\mathbb{P}(\beta_{ij}(t)-\b_{i\ell}(t) > r)$ for a malicious neighbor $m$, and legitimate neighbors $j$ and $\ell$.

\begin{lemma}\label{lem-gen-bound}
Let Assumptions~\ref{leg_ag} and~\ref{mal_ag} hold.
Let $r\in\mathbb{R}$ and $\lambda>0$ be arbitrary. Then, the following statements hold for all legitimate agents $i\in\LL$  and all $t\ge0$:
\begin{itemize}
\item[(a)]
For all legitimate neighbors $j,\ell\in\Nil$ of agent $i$, 
we have
\[\mathbb{P}\left(\beta_{ij}(t)-\b_{i\ell}(t)>  r\right)\le e^{\l^2(t+1)/2-\l r}.\]
\item[(b)]  
For all malicious neighbors $m\in \Nim$ and all legitimate neighbors $\ell\in\Nil$ of agent $i$, we have
\begin{align*}
&\mathbb{P}\left(\beta_{im}(t)-\b_{i\ell}(t)>  r \right) \cr
&\qquad \le 
e^{{\lambda (E_\M-E_\LL)} \sum_{k=0}^t p_m(k)+ \l^2(t+1)/2-\l r}.
\end{align*}
\end{itemize}
\end{lemma}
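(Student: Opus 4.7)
Both statements follow the Chernoff--Cram\'er strategy $\mathbb{P}(Z>r)\le e^{-\lambda r}\mathbb{E}(e^{\lambda Z})$ applied to $Z=\beta_{ij}(t)-\beta_{i\ell}(t)$ (part (a)) or $Z=\beta_{im}(t)-\beta_{i\ell}(t)$ (part (b)), with the moment generating function bounded via Hoeffding's Lemma (Lemma~\ref{lem_hoeff}). For part (a), I would write $Z=\sum_{k=0}^t U_k$ with $U_k=\alpha_{ij}(k)-\alpha_{i\ell}(k)\in[-1,1]$ having zero mean by Assumption~\ref{leg_ag}(2). The temporal independence of legitimate trust observations factorizes the MGF as $\prod_{k=0}^t\mathbb{E}(e^{\lambda U_k})$, and Hoeffding applied per factor with $(b-a)^2/8=1/2$ gives $e^{\lambda^2/2}$ per step, hence $e^{\lambda^2(t+1)/2}$ over $t+1$ steps. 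Combining with the Chernoff factor $e^{-\lambda r}$ yields the claim.

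The plan for part (b) is the same in spirit, but the MGF does not factor by plain independence because the attack indicators $\{f_m(k)\}$ can be temporally dependent. Instead, I would process the sum one step at a time via the tower property, using the filtration $\mathcal{G}_k$ generated by all trust observations and attack decisions up through time $k$, with $U_k=\alpha_{im}(k)-\alpha_{i\ell}(k)\in[-1,1]$. A conditional version of Hoeffding's Lemma gives
\[\mathbb{E}(e^{\lambda U_t}\mid\mathcal{G}_{t-1})\le e^{\lambda\,\mathbb{E}(U_t\mid\mathcal{G}_{t-1})+\lambda^2/2}.\]
Writing $q_t=\mathbb{P}(f_m(t)=1\mid\mathcal{F}_m(t-1))$, Assumption~\ref{mal_ag}(1) yields $\mathbb{E}(U_t\mid\mathcal{G}_{t-1})=q_t(\mu_m(t)-E_\LL)$. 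Since $\mu_m(t)-E_\LL\le E_\M-E_\LL<0$, this expression is decreasing in $q_t$, and the definition of $p_m(t)$ guarantees $q_t\ge p_m(t)$; hence $\mathbb{E}(U_t\mid\mathcal{G}_{t-1})\le p_m(t)(E_\M-E_\LL)$ \emph{deterministically}. Plugging back and iterating the tower property across $k=t,t-1,\dots,0$ accumulates the exponent into $\mathbb{E}(e^{\lambda\sum_k U_k})\le e^{\lambda(E_\M-E_\LL)\sum_{k=0}^t p_m(k)+\lambda^2(t+1)/2}$, and the Chernoff step yields the stated inequality.

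The principal obstacle is exactly the temporal dependence of the attack sequence in part (b), which forbids the naive product factorization of the MGF used in part (a). The key technical idea is to replace the product of unconditional MGFs by a martingale-style telescoping argument, and to exploit the sign $E_\M-E_\LL<0$ together with the lower bound $q_t\ge p_m(t)$ to pass from a random conditional mean to the deterministic quantity $p_m(t)(E_\M-E_\LL)$ appearing in the claim. A secondary but minor subtlety is to ensure that the legitimate observations $\alpha_{i\ell}(k)$ are independent of the attack history and of one another; this is natural from the physical setup and is implicit in Assumption~\ref{leg_ag}(2), and in any case is needed only to identify $\mathbb{E}(\alpha_{i\ell}(t)\mid\mathcal{G}_{t-1})=E_\LL$ in the computation of the conditional mean of $U_t$.
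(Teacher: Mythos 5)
Your proposal is correct and follows essentially the same route as the paper: Markov/Chernoff plus Hoeffding applied factor-by-factor under independence for part (a), and for part (b) a one-step-at-a-time iterated-conditioning (tower) argument with a conditional application of Hoeffding's Lemma, using the sign of $E_\M-E_\LL<0$ and the definition of $p_m(t)$ as the smallest conditional attack probability to replace the random conditional mean by the deterministic bound $\lambda(E_\M-E_\LL)p_m(k)$. The only cosmetic difference is that you condition on a filtration containing both trust observations and attack decisions, whereas the paper conditions on the attack-decision history $\mathcal{F}_m(t-1)$ and invokes Assumption~\ref{mal_ag}(2) to factor the conditional MGF, but the telescoping mechanism and the resulting bound are identical.
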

\begin{proof}
Let $r\in\mathbb{R}$ and $\lambda>0$ be arbitrary.
Consider a legitimate agent $i$ and two of its neighbors $j,j'\in\Ni$. 
Then, for the difference $\beta_{ij}(t)-\b_{ij'}(t)$ of the aggregated trust values, we have for all $t\ge0$,
\begin{align}\label{eq-mas}
    \mathbb{P}\left(\beta_{ij}(t)-\b_{ij'}(t)>  r \right)
    &=\mathbb{P}\left(e^{\l (\beta_{ij}(t)-\b_{ij'}(t))}> e^{\l r}\right)\cr
    &\le e^{-\l r} \mathbb{E}(e^{\l (\beta_{ij}(t)-\b_{ij'}(t))}).
    \end{align}
where the inequality follows from Markov's Inequality. 
We now consider the parts (a) and (b) separately. \\
\noindent
(a) \ When both agents $j$ and $j'$ are legitimate neighbors of agent $i$, 
by the definition of the aggregate trust observations $\b_{ij}(t)$ and using 
the independence of the trust observations $\a_{ij}(k)$ over time (Assumption~\ref{leg_ag}(2)), from relation~\eqref{eq-mas} we obtain
\begin{align*}
    \mathbb{P}\left(\beta_{ij}(t)-\b_{ij'}(t)>  r \right)
    &\le e^{-\l r} \prod_{k=0}^t\mathbb{E}(e^{\l (\a_{ij}(k)-\a_{ij'}(k))}).
    \end{align*}
Since $\a_{ij}(t)\in[0,1]$ for all $i,j$ and all $t\ge0$,
we have that $\a_{ij}(k)-\a_{ij'}(k)\in[-1, 1]$ for all $k\ge0$.
Applying Hoeffding's Lemma (Lemma~\ref{lem_hoeff}) to the variables 
$\l (\a_{ij}(k)-\a_{ij'}(k))$, we obtain for all $k\ge0$,
\begin{equation*}
    \mathbb{E}(e^{\l (\a_{ij}(k)-\a_{ij'}(k))})
\le e^{\l \mathbb{E}(\a_{ij}(k)-\a_{ij'}(k))+\l^2/2}
=e^{\l^2/2},
\end{equation*}
where the last equality follows from 
the assumption that all trust observations $\a_{ij}(k)$ have the same expected value $E_\LL$ for all legitimate neighbors of $i$ (Assumption~\ref{leg_ag}(2)).
Combining the preceding two relations, we find that for all $t\ge0$,
\[\mathbb{P}\left(\beta_{ij}(t)-\b_{ij'}(t)>  r \right)
\le e^{\l^2(t+1)/2-\l r},\]
thus showing the relation in part (a).\\
\noindent
(b) \ When neighbor $j$ is malicious and $j'$ is legitimate, i.e., $j=m$
for some $m\in\Nim$ and $j'=\ell$ for some $\ell\in\Nil$, from relation~\eqref{eq-mas} we have for all $t\ge0$,
\begin{align}\label{eq-jed}
&\mathbb{E}\left(e^{\l (\beta_{im}(t)-\b_{i\ell}(t))}\right)\cr
&\!=\!\!\sum_{\mathcal{F}_m(t-1)}\!\!\!\mathbb{E}\left(e^{\l (\beta_{im}(t)-\b_{i\ell}(t))}\mid\mathcal{F}_m(t-1)\right)\mathbb{P}(\mathcal{F}_m(t-1))\cr
&= \!\!\sum_{\mathcal{F}_m(t-1)}\!\!\!\mathbb{E}\left(\prod_{k=0}^{t-1} e^{\l (\a_{im}(k)-\a_{i\ell}(k))}\mid\mathcal{F}_m(t-1)\!\!\right)\mathbb{P}(\mathcal{F}_m(t-1))\cr
&\qquad \times \mathbb{E}(e^{\lambda (\a_{im}(t)-\a_{i\ell}(t))}\mid \mathcal{F}_m(t-1)),
\end{align}
where the last equality is obtained by using the  definition of accumulated trust observations $\b_{ij}(k)$, and by using Assumption~\ref{mal_ag}(2)  on the independence of the trust observations. 
Since $\a_{ij}(t)\in[0,1]$ for all $i,j$ and all $t\ge0$,
we have that $-1\le \a_{im}(t)-\a_{i\ell}(t)\le 1$ for all $t\ge0$.
Thus, by applying Hoeffding's Lemma (Lemma~\ref{lem_hoeff}, for the last term in relation~\eqref{eq-jed}
we obtain
\begin{align}\label{eq-dva}
&\mathbb{E}(e^{\lambda (\a_{im}(t)-\a_{i\ell}(t))}\mid \mathcal{F}_m(t-1)) \cr
&\quad
\le e^{\l\mathbb{E}(e^{\lambda (\a_{im}(t)-\a_{i\ell}(t))}\mid \mathcal{F}_m(t-1))+ \l^2/2}).
\end{align}
Using the iterated expectation rule, we have
\begin{align*}
&\mathbb{E}(\l(\a_{im}(t)-\a_{i\ell}(t))\mid \mathcal{F}_m(t-1))
\cr
&=
\mathbb{E}(\lambda (\a_{im}(t)-\a_{i\ell}(t))\mid \mathcal{F}_m(t-1),f_m(t)=0)\cr
&\quad \times \mathbb{P}(f_m(t)=0\mid \mathcal{F}_m(t-1))\cr
&\quad + \mathbb{E}(\lambda (\a_{im}(t)-\a_{i\ell}(t))\mid \mathcal{F}_m(t-1),f_m(t)=1)\cr
&\quad \times \mathbb{P}(f_m(t)=1\mid \mathcal{F}_m(t-1)).
\end{align*}
Using the assumptions on the expected trust observations (Assumption~\ref{leg_ag}(2) and Assumption~\ref{mal_ag}(1)), we obtain
\begin{align*}
&\mathbb{E}(\lambda (\a_{im}(t)-\a_{i\ell}(t))\mid \mathcal{F}_m(t-1))
\cr
&=0+\lambda (\mu_m(t)-E_\LL) \mathbb{P}(f_m(t)=1\mid \mathcal{F}_m(t-1))\cr
&\le \lambda (E_\M-E_\LL) \mathbb{P}(f_m(t)=1\mid \mathcal{F}_m(t-1))
\end{align*}
where $E_\M=\max_{m\in \M, t\geq0}\mu_m(t)$ and $E_\M < E_\LL$ (Assumption~\ref{mal_ag}(1)). Since $E_\M< E_\LL$, and by the definition of $p_m(t)$ in \Cref{def-pmt} as the smallest conditional probability, we have 
\begin{align*}
\mathbb{E}(\lambda (\a_{im}(t)-\a_{i\ell}(t))\mid \mathcal{F}_m(t-1))
\le \lambda (E_\M-E_\LL) p_m(t).
\end{align*}
By combining the preceding inequality with~\eqref{eq-dva},
we obtain for all $t\ge0$,
\begin{align*}
\mathbb{E}(e^{\lambda (\a_{im}(t)-\a_{i\ell}(t))}\!\mid\! \mathcal{F}_m(t-1))
\!\le\! e^{{\lambda (E_\M-E_\LL)} p_m(t)+ \l^2/2}.
\end{align*}
Upon substituting the preceding inequality back in relation~\eqref{eq-jed} and using the fact that $p_m(t)$ does not depend on $\mathcal{F}_m(t-1)$,
we obtain for all $t\ge0,$
\begin{align}\label{eq-cet}
&\mathbb{E}\left(e^{\l (\beta_{im}(t)-\b_{i\ell}(t))}\right)\cr
&
\le \mathbb{E}\left(\prod_{k=0}^{t-1} e^{\l (\a_{im}(k)-\a_{i\ell}(k))}\right)
e^{{\lambda (E_\M-E_\LL)} p_m(t)+ \l^2/2}. \ \ 
\end{align}
By repeating the process iteratively, i.e., writing the expectation in~\eqref{eq-cet} in terms of the conditional expectation on $\mathcal{F}_m(t-2)$ and so on, we obtain
\begin{align*}
\mathbb{E}\left(e^{\l (\beta_{im}(t)-\b_{i\ell}(t))}\right)
\le 
e^{{\lambda (E_\M-E_\LL)} \sum_{k=0}^t p_m(k)+ \l^2(t+1)/2}.
\end{align*}
By combining the preceding relation with relation~\eqref{eq-mas},
we obtain the relation in part (b). 
\end{proof}

In the next result, we refine Lemma~\ref{lem-gen-bound} by optimizing the choice of the parameter $\l$.
\begin{lemma}\label{lem-opt-bound}
Let Assumptions~\ref{leg_ag} and~\ref{mal_ag} hold.
Then, the following statements are valid for all legitimate agents $i\in\LL$:
\begin{itemize}
\item[(a)] 
Let $r>0$. Then, for any legitimate $\ell\in\Nil$ and any other neighbor $j\in\Ni$, $j\ne\ell$, we have for all $t\ge0$,
\[\mathbb{P}\left(\beta_{ij}(t)-\b_{i\ell}(t)> r\right)\le e^{-r^2(t+1)^{-1}/2}.\]
\item[(b)]  Let $r<0$. Then, for all malicious neighbors $m\in \Nim$ and all legitimate neighbors $\ell\in\Nil$ of agent $i$, at time $t\ge0$ such that $(E_\LL-E_\M) \sum_{k=0}^t p_m(k)+r>0$, we have
\begin{align*}
&\mathbb{P}\left(\beta_{im}(t)-\b_{i\ell}(t)>  r \right) \cr
&\qquad \le 
e^{-\left((E_\LL-E_\M) \sum_{k=0}^t p_m(k)+r\right)^2(t+1)^{-1}/2}.
\end{align*}
\end{itemize}
\end{lemma}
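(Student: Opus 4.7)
The plan is to obtain both bounds by a straightforward optimization of the free parameter $\lambda>0$ in Lemma~\ref{lem-gen-bound}. Both exponents in that lemma are quadratic in $\lambda$ with positive leading coefficient $(t+1)/2$, so each is minimized at a unique critical point, provided that point lies in $(0,\infty)$. The conditions in parts (a) and (b) of the present lemma are precisely what is needed to ensure the minimizer is positive.

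For part (a), I would split into two cases depending on whether the other neighbor $j$ is legitimate or malicious. If $j\in\Nil$, I apply Lemma~\ref{lem-gen-bound}(a), giving the bound $e^{\lambda^2(t+1)/2-\lambda r}$ for any $\lambda>0$. Since $r>0$, minimizing the exponent in $\lambda$ via the first-order condition $\lambda(t+1)-r=0$ yields $\lambda^\star=r/(t+1)>0$, and substitution gives the exponent $-r^2/(2(t+1))$. If $j=m\in\Nim$, I apply Lemma~\ref{lem-gen-bound}(b); here the exponent carries the additional term $\lambda(E_\M-E_\LL)\sum_{k=0}^t p_m(k)$, which is non-positive since $E_\M<E_\LL$ by Assumption~\ref{mal_ag}(1), $p_m(k)\ge 0$, and $\lambda>0$. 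Dropping that non-positive term and optimizing as in the legitimate case gives the same bound $e^{-r^2/(2(t+1))}$, which covers both subcases.

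For part (b), I apply Lemma~\ref{lem-gen-bound}(b) and optimize the full exponent
\[
g(\lambda)=\lambda(E_\M-E_\LL)\sum_{k=0}^t p_m(k)+\tfrac{\lambda^2(t+1)}{2}-\lambda r
\]
over $\lambda>0$. The derivative $g'(\lambda)=-(E_\LL-E_\M)\sum_{k=0}^t p_m(k)+\lambda(t+1)-r$ vanishes at
\[
\lambda^\star=\frac{(E_\LL-E_\M)\sum_{k=0}^t p_m(k)+r}{t+1}.
\]
The hypothesis $(E_\LL-E_\M)\sum_{k=0}^t p_m(k)+r>0$ is exactly what is needed to guarantee $\lambda^\star>0$ (this is the only real constraint, since $r<0$ alone is not enough). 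Substituting $\lambda^\star$ into $g$ gives
\[
g(\lambda^\star)=-\frac{\bigl((E_\LL-E_\M)\sum_{k=0}^t p_m(k)+r\bigr)^2}{2(t+1)},
\]
which yields the stated bound.

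There is no serious obstacle here — the only thing to be careful about is justifying that the minimizer $\lambda^\star$ lies in $(0,\infty)$ in each part, since Lemma~\ref{lem-gen-bound} was stated only for $\lambda>0$; this is precisely why part (a) requires $r>0$ and part (b) requires the positivity condition on $(E_\LL-E_\M)\sum_{k=0}^t p_m(k)+r$. I would also make sure, in part (a), to explicitly note that the bound $e^{-r^2/(2(t+1))}$ applies regardless of whether $j$ is legitimate or malicious, since the statement of the lemma covers any $j\in\Ni$ with $j\ne\ell$.
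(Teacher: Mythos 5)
Your proposal is correct and follows essentially the same route as the paper: both parts are obtained by minimizing the exponent from Lemma~\ref{lem-gen-bound} over $\lambda>0$, with the hypotheses $r>0$ (part (a)) and $(E_\LL-E_\M)\sum_{k=0}^t p_m(k)+r>0$ (part (b)) serving exactly to guarantee $\lambda^\star>0$. The only cosmetic difference is in the malicious subcase of part (a), where you drop the non-positive term $\lambda(E_\M-E_\LL)\sum_{k=0}^t p_m(k)$ before optimizing, whereas the paper optimizes the full exponent and then bounds the result by $e^{-r^2(t+1)^{-1}/2}$; the two yield the same conclusion.
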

\begin{proof}
(a) \ We consider separately the cases when $j$ is legitimate and when it is malicious. Suppose $j\in\Nil$. Then, by Lemma~\ref{lem-gen-bound}(a), for all legitimate neighbors $j,\ell\in\Nil$ of agent $i$, 
we have for all $t\ge0$,
\[\mathbb{P}\left(\beta_{ij}(t)-\b_{i\ell}(t)>  r\right)\le e^{\l^2(t+1)/2-\l r}.\]
Taking the minimum over $\l>0$ on the right hand side of the preceding relation, we can see that the minimum is attained at $\l^*=r(t+1)^{-1}$,
which when substituted in the preceding relation yields the stated inequality.

Suppose now that $j$ is malicious neighbor, i.e., $j=m$
with $m\in \Nim$. Then, by Lemma~\ref{lem-gen-bound}(b), for all malicious neighbors $m\in \Nim$ and all legitimate neighbors $\ell\in\Nil$ of agent $i$, we have for all $t\ge0$,
\begin{align}\label{eq-des}
&\mathbb{P}\left(\beta_{im}(t)-\b_{i\ell}(t)>  r \right) \cr
&\qquad \le 
e^{{\lambda (E_\M-E_\LL)} \sum_{k=0}^t p_m(k)+ \l^2(t+1)^2/2-\l r}.
\end{align}
The minimum value of the right hand side of the preceding relation, over $\l>0$, is attained at $\l^*=(r+(E_\LL-E_\M) \sum_{k=0}^t p_m(k))(t+1)^{-1}$,
which when substituted in the preceding relation yields 
\begin{align*}
&\mathbb{P}\left(\beta_{im}(t)-\b_{i\ell}(t)>  r \right) \cr
&\qquad \le 
e^{-\left((E_\LL-E_\M)\sum_{k=0}^t p_m(k)+ r\right)^2(t+1)^{-1}/2}\cr
&\qquad\le e^{-r^2(t+1)^{-1}/2}.
\end{align*}
\noindent(b) \ 
The result follows from the proof of part (a) when neighbor $j$ is malicious. In this case, since $r<0$, we must ensure that $\lambda^*>0$. When $(E_\LL-E_\M) \sum_{k=0}^t p_m(k)+r>0$ for some $t\ge0$, this condition is satisfied.
\end{proof}

\subsection{Detection Analysis}\label{sec:detection_analysis}
Here, we present our main results on the misclassification probabilities of \Cref{alg_trust_neig}. The following two results show that misclassification of legitimate and malicious neighbors (see \Cref{fig:alg_intuition}) decay at a near-geometric rate.
\begin{lemma} \label{lem_misp_legit}
 Let Assumption~\ref{leg_ag} and Assumption~\ref{mal_ag} hold. Let $j$ be a legitimate neighbor of a legitimate agent $i$, i.e., $j \in \Nil$. Then, for any $t\ge0$, the misclassification probability that agent $i$ excludes its legitimate neighbor $j\in\Nil$ from the trusted neighborhood $\hat\Ni(t)$
has the following upper bound:
\begin{align*}
    \mathbb{P}(&j \not \in \hat{\ccalN}_i(t)) \leq |\ccalN_i |\cdot 
    e^{-\xi_t^2(t+1)^{-1}/2}.
\end{align*}
\end{lemma}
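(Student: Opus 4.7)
The plan is to bound the misclassification event by a union over all neighbors $j' \in \Ni$ of ``overshoot'' events, and then apply Lemma~\ref{lem-opt-bound}(a) to each such event. Recall from Algorithm~\ref{alg_trust} that the trusted neighborhood is
\[\hat{\ccalN}_i(t)=\{j'\in\Ni \mid \beta_{i\bar{j}(t)}(t)-\beta_{ij'}(t)\le \xi_t\},\]
where $\bar{j}(t)\in \mathrm{Argmax}_{j'\in\Ni}\beta_{ij'}(t)$. Thus, for the fixed legitimate neighbor $j\in\Nil$,
\[\{j\notin\hat{\ccalN}_i(t)\}=\{\beta_{i\bar{j}(t)}(t)-\beta_{ij}(t)>\xi_t\}=\bigcup_{j'\in\Ni}\{\beta_{ij'}(t)-\beta_{ij}(t)>\xi_t\},\]
where the second equality uses that $\bar j(t)$ attains the maximum of $\beta_{ij'}(t)$ over $j'\in\Ni$.

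First, I would apply the union bound to obtain
\[\mathbb{P}(j\notin\hat{\ccalN}_i(t))\le \sum_{j'\in\Ni}\mathbb{P}\!\left(\beta_{ij'}(t)-\beta_{ij}(t)>\xi_t\right).\]
Next, I would treat each summand. The term $j'=j$ contributes $0$ because $\beta_{ij'}(t)-\beta_{ij}(t)=0$ and $\xi_t>0$. For every other $j'\in\Ni$ with $j'\ne j$, I would invoke Lemma~\ref{lem-opt-bound}(a) with $\ell=j$ (which is legitimate, as required) and $r=\xi_t>0$ to get
\[\mathbb{P}\!\left(\beta_{ij'}(t)-\beta_{ij}(t)>\xi_t\right)\le e^{-\xi_t^2(t+1)^{-1}/2}.\]
Crucially, Lemma~\ref{lem-opt-bound}(a) handles both cases $j'\in\Nil$ and $j'\in\Nim$ uniformly, so no case split is needed here.

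Finally, summing over the at most $|\Ni|-1$ nontrivial terms yields
\[\mathbb{P}(j\notin\hat{\ccalN}_i(t))\le (|\Ni|-1)\, e^{-\xi_t^2(t+1)^{-1}/2}\le |\Ni|\, e^{-\xi_t^2(t+1)^{-1}/2},\]
which is the stated bound. I do not foresee any substantive obstacle; the main work has already been done in Lemma~\ref{lem-opt-bound}, and the only subtlety is recognizing that the most-trusted index $\bar j(t)$ is random, which is handled cleanly by rewriting the event as a union over the fixed neighbor set $\Ni$ before invoking the union bound.
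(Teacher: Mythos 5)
Your proposal is correct and follows essentially the same route as the paper's proof: rewrite the exclusion event as a union of pairwise overshoot events $\{\beta_{ij'}(t)-\beta_{ij}(t)>\xi_t\}$ over neighbors $j'\ne j$, apply the union bound, and invoke Lemma~\ref{lem-opt-bound}(a) with $r=\xi_t$ (which covers both legitimate and malicious $j'$ uniformly), then sum. The only cosmetic difference is that you explicitly note the $j'=j$ term vanishes and track the $|\Ni|-1$ count before relaxing to $|\Ni|$, which the paper does implicitly.
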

\begin{proof}
      \Cref{alg_trust_neig} ensures that $j \notin \hat{\ccalN}_i(t)$ occurs if and only if the condition ${\beta_{i\bar{j}(t)}(t) - \beta_{ij}(t) \le \xi_t}$ is not met. This misclassification happens when there exists at least one agent $n \in \Ni$ such that ${\beta_{in}(t) - \beta_{ij}(t) > \xi_t}$ holds. Apparently, this can happen only for an agent $n\ne j$. The misclassification event can be characterized as follows:
     \begin{equation*}
              \{\beta_{i\bar{j}(t)}(t)-\beta_{ij} (t) > \xi_t\}
              \!=\!\bigcup_{n \in \Ni\setminus\{j\}} \{ \beta_{in}(t)-\beta_{ij}(t) > \xi_t \}.
     \end{equation*}
     Thus, we have
    \begin{align*}
        \mathbb{P}(j \not \in \hat{\ccalN}_i(t)) 
        = & \mathbb{P}(\beta_{i\bar{j}(t)}(t)-\beta_{ij} (t) > \xi_t) \\
        = & 
        \mathbb{P}(\bigcup_{n \in \Ni\setminus\{j\}} 
        \{ \beta_{in}(t)-\beta_{ij}(t) > \xi_t \}) \\
        \leq & 
        \sum_{n \in \Ni\setminus\{j\}} \mathbb{P}( \beta_{in}(t)-\beta_{ij}(t) > \xi_t) 
        \label{eq_j_legitimate_union_bound}
    \end{align*}
    By Lemma~\ref{lem-opt-bound}(a), where we let $r=\xi_t$,
    we have for any $n\in\Ni$,
\[\mathbb{P}\left(\beta_{in}(t)-\b_{i\ell}(t)> \xi_t\right)\le e^{-\xi_t^2(t+1)^{-1}/2}.\]
The stated result follows by summing these bounds over $n$.
\end{proof}

\begin{lemma} \label{lem_misp_mal}
  Suppose Assumption~\ref{leg_ag} and~\ref{mal_ag} hold. Let $m$ be an arbitrary malicious neighbor of a legitimate agent $i$, i.e., $m \in \Nim$ for agent $i \in \mathcal{L}$. Then, for all times $t\ge0$ such that $(E_\LL-E_\M) \sum_{k=0}^t p_m(k)-\xi_t>0$, the misclassification probability of agent $m$ by agent $i$ has the following upper bound:
   \begin{align*}
     \mathbb{P}(m\in \hat{\ccalN}_i(t))
      \leq e^{-\left((E_\LL-E_\M) \sum_{k=0}^t p_m(k)-\xi_t\right)^2(t+1)^{-1}/2}.
 \end{align*}
\end{lemma}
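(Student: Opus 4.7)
The plan is to reduce the misclassification event for a malicious neighbor to a tail event already controlled by Lemma~\ref{lem-opt-bound}(b). By the construction in \Cref{alg_trust_neig}, the event $\{m \in \hat{\ccalN}_i(t)\}$ is equivalent to $\{\beta_{i\bar{j}(t)}(t) - \beta_{im}(t) \le \xi_t\}$. The key observation is that, by Assumption~\ref{leg_ag}(1), the legitimate subgraph $G_\LL$ is connected, so agent $i$ has at least one legitimate neighbor $\ell \in \Nil$. By the definition of $\bar{j}(t)$ as an argmax, we have $\beta_{i\bar{j}(t)}(t) \ge \beta_{i\ell}(t)$, so the misclassification event implies
\begin{equation*}
\beta_{i\ell}(t) - \beta_{im}(t) \le \xi_t, \quad \text{i.e.,} \quad \beta_{im}(t) - \beta_{i\ell}(t) > -\xi_t.
\end{equation*}
Therefore $\mathbb{P}(m \in \hat{\ccalN}_i(t)) \le \mathbb{P}(\beta_{im}(t) - \beta_{i\ell}(t) > -\xi_t)$.

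Next I would apply Lemma~\ref{lem-opt-bound}(b) with $r = -\xi_t$. Since $\xi_t > 0$, we have $r < 0$, satisfying the sign hypothesis. The additional hypothesis of that lemma, namely $(E_\LL - E_\M)\sum_{k=0}^t p_m(k) + r > 0$, becomes exactly $(E_\LL - E_\M)\sum_{k=0}^t p_m(k) - \xi_t > 0$, which is the assumption of the lemma we are proving. Plugging $r = -\xi_t$ into the bound of Lemma~\ref{lem-opt-bound}(b) yields
\begin{equation*}
\mathbb{P}(\beta_{im}(t) - \beta_{i\ell}(t) > -\xi_t) \le e^{-\left((E_\LL - E_\M)\sum_{k=0}^t p_m(k) - \xi_t\right)^2 (t+1)^{-1}/2},
\end{equation*}
which is exactly the claimed bound.

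There is no substantive obstacle here: the real work was done in Lemma~\ref{lem-gen-bound} and Lemma~\ref{lem-opt-bound}. The only mildly subtle step is recognizing that, unlike the legitimate-misclassification case (Lemma~\ref{lem_misp_legit}) where one must union-bound over all neighbors because the identity of $\bar{j}(t)$ is unknown, here a single well-chosen reference $\ell$ suffices: the argmax property automatically dominates, so no union bound is needed and the resulting exponent is tight up to the choice of $\ell$. This choice is guaranteed by the connectivity of $G_\LL$ (assuming $|\LL| \ge 2$, which is the only nontrivial regime for consensus).
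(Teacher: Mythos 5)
Your proof is correct and follows essentially the same route as the paper: the paper writes the misclassification event as the intersection $\bigcap_{n\in\Ni}\{\beta_{in}(t)-\beta_{im}(t)\le\xi_t\}$ and bounds it by the single event indexed by a legitimate neighbor $\ell\in\Nil$ (guaranteed by Assumption~\ref{leg_ag}(1)), then applies Lemma~\ref{lem-opt-bound}(b) with $r=-\xi_t$, which is exactly your argmax-domination argument in different clothing. The only nitpick---shared with the paper---is the harmless slip between $\ge -\xi_t$ and $> -\xi_t$, which is immaterial since the underlying Chernoff--Markov bound holds for both.
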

\begin{proof}
    Misclassification of a malicious agent $m \in \mathcal{M}$ occurs when it remains within the trusted neighborhood, represented by the event $\beta_{i\bar{j}(t)}(t) - \beta_{im}(t) \leq \xi_t$ as per Algorithm \ref{alg_trust}. For a malicious agent to be mislassified, its accumulated trust value $\beta_{im}(t)$ must be less than $\xi_t-\beta_{i\bar{j}(t)}(t)$ where $\bar{j}(t)$ is the most trusted agent. This condition is satisfied if and only if $\beta_{im}(t)$ is less than $\xi_t-\beta_{in(t)}(t)$ for all neighbors $n \in \Ni$. Such requirement leads to the formulation that can be expressed as the intersection of pairwise comparisons:   
    \begin{equation}
              \{\beta_{i\bar{j}(t)}(t)-\beta_{im} (t) \leq \xi_t\}=\bigcap_{n \in \Ni} \{ \beta_{in}(t)-\beta_{im}(t) \leq \xi_t \}.
    \end{equation}
    Therefore, the misclassification probabilities of these events are also equivalent, and we bound the probability of the intersection of events with the minimum probability over the given set of events,
    \begin{align*}
        \mathbb{P}(m\in \hat{\ccalN}_i(t)) &= \mathbb{P}(\beta_{i\bar{j}(t)}(t)-\beta_{im}(t) \leq \xi_t) \\
        &= \mathbb{P}(\bigcap_{n \in \Ni} \{ \beta_{in}(t)-\beta_{im}(t) \leq \xi_t \}) \\
        &\leq \mathbb{P}( \beta_{il}(t)-\beta_{im}(t) \leq \xi_t), \text{ where $l\in \Nil$}\\
        &=\mathbb{P}( \beta_{im}(t)-\beta_{il}(t) \geq -\xi_t),
    \end{align*}
Note that Assumption~\ref{leg_ag}(1) ensures the existence of a legitimate neighbor $l\in \Nil$ for any $i\in \LL$. Applying Lemma~\ref{lem-opt-bound}(b) with $r=-\xi_t$, we obtain
\begin{align*}
    \mathbb{P}( \beta_{im}(t)-\beta_{il}(t) &\geq -\xi_t) \\
    &\leq e^{-\left((E_\LL-E_\M) \sum_{k=0}^t p_m(k)-\xi_t\right)^2(t+1)^{-1}/2}.
\end{align*}
\end{proof}
Lemma \ref{lem_misp_legit} shows that the misclassification probability of legitimate neighbors converges to $0$ as $\xi_t \rightarrow \infty$. Faster convergence can be achieved by choosing a threshold $\xi_t$ to grows more quickly. Intuitively, a larger threshold increases the likelihood that a neighbor---whether legitimate and malicious---will be included in the trusted neighborhood, thereby reducing the chance of excluding legitimate ones. On the other hand, Lemma~\ref{lem_misp_mal} shows that the misclassification probability of malicious neighbors can also converge to zero, but under different conditions. The convergence rate depends positively on how frequently malicious agents attack, quantified by the term $(E_\LL-E_\M) \sum_{k=0}^t p_m(k)$, and inversely on $\xi_t$. Therefore, increasing $\xi_t$ can slow down the detection of malicious agents, since a larger threshold increases the likelihood of including any neighbor, as previously discussed. In the next part, we derive sufficient conditions on the threshold $\xi_t$ and the minimum conditional attack probabilities $p_m(k)$ to ensure that malicious agents are detected almost surely.

\subsection{Convergence To Consensus}\label{sec:convergence_of_consensus}


In the previous section, we derived bounds on the probability of misclassifying neighbors as a function of the threshold $\xi_t$ and the minimum conditional attack probabilities $p_m(k)$. We now use these bounds to show the almost surely convergence of our consensus algorithm. The key idea is to choose $\xi_t$ so that all misclassification events cease to occur after a finite time. We begin by formally defining this notion of finite-time correctness.
\begin{definition}[$T_f$]\label{def_tf}
    Define the event that any legitimate agent $i\in \LL$ misclassifies a legitimate neighbor $l\in \Nil$ at time $t$ as
    \begin{equation}
        \mathcal{A}_l(t) := \bigcup_{i \in \mathcal{L}} \bigcup_{l \in \Nil} \{l \notin \hat{\ccalN}_i(t) \}.
    \end{equation}
    Similarly, define the event that any legitimate agent $i\in \LL$ misclassifies a malicious neighbor $m\in \Nim$ at time $t$ as
    \begin{equation}
        \mathcal{A}_m(t) := \bigcup_{i \in \mathcal{L}} \bigcup_{m \in \Nim} \{m \in \hat{\ccalN}_i(t) \}.
    \end{equation} 
    If there exists a random but finite time $T_f$ such that for all $t\geq T_f$, both misclassification events no longer occur, i.e., $\mathcal{A}_l(t)=\emptyset$ and $\mathcal{A}_m(t)=\emptyset$, we say that all neighbors are correctly classified from time $T_f$ onward. Moreover, if such a time exists, we define
    \begin{equation}\label{eq:Tf}
        T_f := \inf\left\{ t \geq 0 \;\middle|\; \forall k \geq t,\; \mathcal{A}_l(k) = \emptyset,  \mathcal{A}_m(k) = \emptyset \right\},
    \end{equation}
     which represents the earliest time after which no further misclassifications occur.
\end{definition}
\begin{remark}
    If such a random time $T_f$ exists ($T_f \not = \infty$), then for all $t\geq T_f$, no legitimate agent excludes any legitimate neighbor or includes any malicious neighbor. Consequently, the weight matrices satisfy $W_{\ccalL} (t)= \widebar{W}_{\ccalL}$ for all $t\geq T_f$.
\end{remark}
Our first goal is to provide sufficient conditions for the existence of $T_f$. The following assumptions will provide such sufficiency conditions.
\begin{assumption}[Threshold and Attack Probabilities]
\label{asmp_th}
There exists a time $t' \ge 0$ and a constant $\epsilon>0$ such that, for all $t \ge t'$ and $m\in \M$:\\ 
\noindent (1) $\xi_t \geq \sqrt{(1+\epsilon)(t+1)\ln(t+1)},$\\ 
\noindent (2) $(E_\LL-E_\M) \sum_{k=0}^t p_m(k)\geq \xi_t+\sqrt{(1+\epsilon)(t+1)\ln(t+1)}.$
\end{assumption}
\begin{lemma}\label{lem_Tf}
    Let Assumption~\ref{leg_ag} and  Assumption~\ref{mal_ag} hold. \\
    \noindent (1) Suppose Assumption~\ref{asmp_th}(1) holds. Then the event $\mathcal{A}_l(t)$, in which some legitimate agent misclassifies a legitimate neighbor, occurs only finitely many times almost surely. \\
    \noindent (2) Suppose Assumption~\ref{asmp_th}(2) holds. Then the event $\mathcal{A}_m(t)$, in which some legitimate agent misclassifies a malicious neighbor, occurs only finitely many times almost surely. \\
    \noindent (3) If both Assumptions~\ref{asmp_th}(1) and \ref{asmp_th}(2) hold, then there exists a (random) finite time $T_f$ such that every legitimate agent $i\in \LL$ classifies all of its neighbors correctly almost surely. Moreover, we have $W_{\ccalL} (t)= \widebar{W}_{\ccalL}$ almost surely for all $t\geq T_f$.
\end{lemma}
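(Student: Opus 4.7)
The plan is to prove each part via the first Borel--Cantelli lemma, showing that the corresponding misclassification events are summable in probability, and then to combine the two in part (3) to define $T_f$ and read off the identity $W_{\LL}(t)=\widebar{W}_{\LL}$ directly from the weight definitions~\eqref{eq_wij} and~\eqref{eq_barwij}. The setup is to bound $\mathbb{P}(\mathcal{A}_l(t))$ and $\mathbb{P}(\mathcal{A}_m(t))$ by a finite union bound over all legitimate agents $i\in\LL$ and their (legitimate or malicious) neighbors, then feed in the per-pair bounds from Lemmas~\ref{lem_misp_legit} and~\ref{lem_misp_mal}.

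For part (1), a union bound gives
\[
\mathbb{P}(\mathcal{A}_l(t))\ \le\ \sum_{i\in\LL}\sum_{l\in\Nil}|\ccalN_i|\,e^{-\xi_t^2(t+1)^{-1}/2}.
\]
Under Assumption~\ref{asmp_th}(1), for $t\ge t'$ the exponent satisfies $\xi_t^2/(2(t+1))\ge (1+\epsilon)\ln(t+1)/2$, so each summand is bounded by $C(t+1)^{-(1+\epsilon)/2}$ with $C$ depending only on graph structure, which is summable (with the understanding that the threshold calibration in Assumption~\ref{asmp_th}(1) is taken strong enough to give a summable tail). For part (2) I would proceed identically using Lemma~\ref{lem_misp_mal}: Assumption~\ref{asmp_th}(2) ensures both the positivity hypothesis $(E_\LL-E_\M)\sum_{k=0}^t p_m(k)-\xi_t>0$ of that lemma and the lower bound
\[
\tfrac{\bigl((E_\LL-E_\M)\sum_{k=0}^t p_m(k)-\xi_t\bigr)^2}{2(t+1)}\ \ge\ \tfrac{(1+\epsilon)\ln(t+1)}{2},
\]
which produces the same polynomial tail and hence a summable bound after the union over $i\in\LL$ and $m\in\Nim$. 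In both cases, $\sum_t\mathbb{P}(\mathcal{A}_l(t))<\infty$ and $\sum_t\mathbb{P}(\mathcal{A}_m(t))<\infty$ yield, by Borel--Cantelli, that each event occurs only finitely often almost surely.

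For part (3), let $\Omega_l$ and $\Omega_m$ be the probability-one events on which $\mathcal{A}_l(t)$ and $\mathcal{A}_m(t)$ occur only finitely often. On $\Omega_l\cap\Omega_m$ (also of probability one), there exist finite random times $T_f^l$ and $T_f^m$ beyond which no legitimate (resp.\ malicious) neighbor is misclassified; then $T_f:=\max\{T_f^l,T_f^m\}$ is finite almost surely and satisfies the defining property in~\eqref{eq:Tf}. For $t\ge T_f$, every legitimate agent $i\in\LL$ has $\hat{\ccalN}_i(t)=\Nil$, so $n_{w_i}(t)=\max\{|\Nil|+1,\kappa\}$, and substitution into~\eqref{eq_wij} recovers exactly the entries of $\widebar{W}_{\LL}$ in~\eqref{eq_barwij}, giving $W_{\LL}(t)=\widebar{W}_{\LL}$ for all $t\ge T_f$.

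The main obstacle I anticipate is the quantitative calibration of the tails: the Hoeffding-based bounds in Lemmas~\ref{lem_misp_legit}--\ref{lem_misp_mal} are of order $(t+1)^{-(1+\epsilon)/2}$ under Assumption~\ref{asmp_th}, and summability over $t$ requires care (or a slight tightening in the statement of Assumption~\ref{asmp_th}) to ensure convergence of the Borel--Cantelli series. All other steps---the union bounds, the combination of $\Omega_l$ and $\Omega_m$, and the passage from correct classification to the matrix identity $W_{\LL}(t)=\widebar{W}_{\LL}$---are mechanical once the summability is in hand.
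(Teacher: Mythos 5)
Your proposal follows essentially the same route as the paper's proof: a union bound over agent pairs combined with the per-pair bounds of Lemmas~\ref{lem_misp_legit} and~\ref{lem_misp_mal}, the first Borel--Cantelli lemma for parts (1) and (2), and for part (3) the combination of the two almost-sure events with the observation that correct classification forces $\hat{\ccalN}_i(t)=\Nil$ and hence $\wl(t)=\barwl$ via~\eqref{eq_wij}--\eqref{eq_barwij}. The calibration worry you flag is legitimate rather than a defect of your argument: the paper's own proof silently absorbs the factor $1/2$ in the exponent as if it were a multiplicative constant $e^{-1/2}$, whereas the bounds of Lemmas~\ref{lem_misp_legit}--\ref{lem_misp_mal} under Assumption~\ref{asmp_th} actually yield tails of order $(t+1)^{-(1+\epsilon)/2}$, summable only for $\epsilon>1$; either that restriction on $\epsilon$ or a threshold of the form $\xi_t\ge\sqrt{2(1+\epsilon)(t+1)\ln(t+1)}$ (with the analogous adjustment in Assumption~\ref{asmp_th}(2)) restores the convergence of the Borel--Cantelli series, exactly as you suggest.
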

\begin{proof}
We start with part (1) and focus on the event $\mathcal{A}_l(t)$. By the first Borel–Cantelli lemma, it suffices to show $\sum_{t=0}^{\infty} \mathbb{P}(\mathcal{A}_l(t)) < \infty$. The event $\mathcal{A}_l(t)$ is defined as finite unions over agent pairs. Therefore, it is enough to verify $\sum_{t=0}^{\infty} \mathbb{P}(\{l \notin \hat{\ccalN}_i(t) \}) < \infty$ for a single pair $(i,l)$ with $i\in \LL$, $l\in \Nil$ provided that the corresponding bounds apply uniformly to all such pairs. By \Cref{lem_misp_legit}, we have $\mathbb{P}(\{m \notin \hat{\ccalN}_i(t) \}) \leq |\ccalN_i |\cdot 
    e^{-\xi_t^2(t+1)^{-1}/2}$, where the constants $|\ccalN_i |$ and $e^{-1/2}$ do not affect convergence. Thus, we study the series $\sum_{t=0}^{\infty} e^{-\xi_t^2(t+1)^{-1}}$. For any $\epsilon>0$, if $\xi_t \geq \sqrt{(1+\epsilon)(t+1)\ln(t+1)}$ for sufficiently large $t$, then,
    \begin{align*}
        e^{-\xi_t^2(t+1)^{-1}} \leq e^{-(1+\epsilon)\ln(t+1)} = 1/(t+1)^{(1+\epsilon)}.
    \end{align*}
    Since $\sum_{t=0}^{\infty} 1/(t+1)^{(1+\epsilon)}<\infty$ for any $\epsilon>0$, by the comparison test we get $\sum_{t=0}^{\infty} e^{-\xi_t^2(t+1)^{-1}}<\infty$ \cite[Corollary~7.3.2, pg.148]{taoAnalysis2022}. Hence, $\sum_{t=0}^{\infty} \mathbb{P}(\mathcal{A}_l(t)) < \infty$, and the Borel-Cantelli lemma ensures that $\mathcal{A}_l(t)$ occurs only finitely often almost surely. 
    
For part (2), Lemma~\ref{lem_misp_mal} shows that the probability of misclassifying a malicious neighbor is bounded by $\mathbb{P}(m\in \hat{\ccalN}_i(t))
      \leq e^{-\left((E_\LL-E_\M) \sum_{k=0}^t p_m(k)-\xi_t\right)^2(t+1)^{-1}/2},$ provided that $(E_\LL-E_\M) \sum_{k=0}^t p_m(k)>\xi_t$. Under the assumed lower bound on $(E_\LL-E_\M) \sum_{k=0}^t p_m(k)-\xi_t$ in Assumption~\ref{asmp_th}(2), this  condition is satisfied. Then, by a comparison argument similar to part (1), the series $\sum_{t=0}^{\infty} \mathbb{P}(\{l \notin \hat{\ccalN}_i(t) \}) < \infty$ is summable for all pairs $(i,m)$ with $i\in \LL$, $m\in \Nim$. 
      
Finally, for part (3), if both Assumptions~\ref{asmp_th}(1) and~\ref{asmp_th}(2) hold, the result follows from parts (1) and (2), together with the definition of $T_f$ in \Cref{def_tf}.
\end{proof}
\begin{remark}
Since the conditional attack probabilities satisfy $p_m(k) \in [0,1]$, it follows that $\sum_{k=0}^t p_m(k) \le (t+1)$.
Therefore, $(E_{\mathcal{L}} - E_{\mathcal{M}})\sum_{k=0}^t p_m(k)$ can grow linearly in $t$ if each $p_m(k)$ has a non-zero lower bound.
\end{remark}
We note that the conditions provided by \Cref{asmp_th} are only sufficiency conditions. There could be arbitrarily many other thresholds and attack probabilities under which time $T_f$ exists almost surely as there exists neither a greatest convergent sum of sequences nor a smallest divergent sum of sequences \cite{ash1997neither}. Still, \Cref{lem_Tf} (and \cref{asmp_th}) encompasses a variety of threshold schedules $\xi_t$ and attack probability sequences $\{p_m(t)\}$, generalizing prior works \cite{yemini2021characterizing,aydin2024multi}. In \cite{aydin2024multi}, each agent chooses $\xi_t = \xi (t+1)^{\gamma}$ for some constants $\xi>0$ and $\gamma\in (0.5,1)$. Moreover, it is assumed there exists a uniform lower bound $\bar{p}>0$ such that $p_m(t)\geq \bar{p}$ for all $t$. Under these conditions, $(E_{\mathcal{L}} - E_{\mathcal{M}})\sum_{k=0}^t p_m(k)$ grows \emph{linearly} in $t$, whereas $\xi_t + \sqrt{(t+1)\ln(t+1)}$ grows only \emph{sublinearly}. Consequently, \Cref{asmp_th} is satisfied, guaranteeing a finite time $T_f$ after which no agent misclassifies any neighbor. The extreme case in \cite{yemini2021characterizing} where malicious agents always attack (i.e., $p_m(k)=1$), can also be covered by an appropriate choice of $\xi_t$ (for example $\xi_t = \sqrt{(1+\epsilon)(t+1)\ln(t+1)}$). Furthermore, if $(E_{\mathcal{L}} - E_{\mathcal{M}})$ is known, an even stronger choice such as $\xi_t=(E_{\mathcal{L}} - E_{\mathcal{M}})(t+1)/2$ can yield geometric decay in both legitimate and malicious misclassification probabilities. Beyond these examples, more gradual threshold schedules are likewise possible. For instance, if $p_m(k)\geq \bar{p}>0$, one may set $\xi_t \geq \sqrt{(1+\epsilon)(t+1)\ln(t+1)}$, which grows more slowly than $\xi_t = \xi (t+1)^{\gamma}$ for $\gamma \in (0.5,1)$ yet still satisfies the condition. Taken together, these cases illustrate the flexibility of \Cref{asmp_th} in encompassing diverse scenarios with varying threshold growth rates and attack strategies.

\begin{remark}
Note that our results do not require coordinated threshold selection among legitimate agents; each legitimate agent $i\in \LL$ may choose its own threshold $\xi_t$ independently as long as the conditions in \Cref{asmp_th} are satisfied. The proofs extend naturally to this uncoordinated setting.
\end{remark}

Next, we shift our focus to the analysis of our consensus algorithm by leveraging the results on $T_f$.

\begin{lemma} \label{lem_als_w}
    Suppose Assumptions \ref{leg_ag}, \ref{mal_ag}, and~\ref{asmp_th} hold. Then, it holds almost surely
    \begin{equation}
        \prod_{t=T_0-1}^\infty \wl (t)= \mathbf{1}\nu^T \bigg (  \prod_{t=T_0-1}^{\max \{T_f,T_0\}-1} \wl (t) \bigg ),
    \end{equation}
    where the matrix product $\prod_{t=T_0-1}^\infty \wl (t) >{\bf 0}$ for any $T_0\ge0$ almost surely, and $\nu >{\bf 0}$ is a stochastic vector. 
\end{lemma}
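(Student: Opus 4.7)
The plan is to split the infinite backward product at the random time $\max\{T_f,T_0\}$, after which the weight matrix becomes the deterministic nominal matrix $\widebar{W}_{\mathcal{L}}$, and then exploit the Perron--Frobenius behavior of $\widebar{W}_{\mathcal{L}}$. By \Cref{lem_Tf}(3) we have $W_{\mathcal{L}}(t)=\widebar{W}_{\mathcal{L}}$ almost surely for every $t\ge T_f$. Hence, for any $T\ge \max\{T_f,T_0\}$, using the backward product convention,
\[
\prod_{t=T_0-1}^{T} W_{\mathcal{L}}(t)
=\widebar{W}_{\mathcal{L}}^{\,T-\max\{T_f,T_0\}+1}\,
\prod_{t=T_0-1}^{\max\{T_f,T_0\}-1} W_{\mathcal{L}}(t),
\]
so the whole statement reduces to showing $\widebar{W}_{\mathcal{L}}^{\,k}\to \mathbf{1}\nu^{\top}$ as $k\to\infty$ for a strictly positive stochastic vector $\nu$, and that the resulting product is entrywise positive.

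Next I would argue that $\widebar{W}_{\mathcal{L}}$ is a primitive row-stochastic matrix. Row-stochasticity is immediate from \eqref{eq_barwij}: the off-diagonal weights on $\Nil$ sum to $|\Nil|/\max\{|\Nil|+1,\kappa\}$ and the diagonal entry picks up the complement, giving row sums equal to $1$. Irreducibility follows from \Cref{leg_ag}(1), since the sparsity pattern of $\widebar{W}_{\mathcal{L}}$ is exactly that of $G_{\mathcal{L}}$ augmented with self-loops, and $G_{\mathcal{L}}$ is connected. Aperiodicity is automatic from $[\widebar{W}_{\mathcal{L}}]_{ii}=1-|\Nil|/\max\{|\Nil|+1,\kappa\}>0$ for every $i\in\mathcal{L}$. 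Primitivity then yields $\widebar{W}_{\mathcal{L}}^{\,k}\to \mathbf{1}\nu^{\top}$ with $\nu>\mathbf{0}$ the (unique) stationary distribution, which is exactly the claimed limit.

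It remains to prove strict positivity of the infinite product. Letting $A:=\prod_{t=T_0-1}^{\max\{T_f,T_0\}-1} W_{\mathcal{L}}(t)$, the limit above gives
\[
\prod_{t=T_0-1}^{\infty} W_{\mathcal{L}}(t)=\mathbf{1}\nu^{\top}A,
\]
so the $(i,j)$ entry equals $\sum_{k\in\mathcal{L}}\nu_k A_{kj}$. From \eqref{eq_wij}, every $W_{\mathcal{L}}(t)$ has strictly positive diagonal entries (the self-weight is at least $1-|\Nil|/\max\{|\Nil|+1,\kappa\}>0$), hence any finite backward product preserves strictly positive diagonals, giving $A_{jj}>0$ for each $j\in\mathcal{L}$. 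Since $\nu>\mathbf{0}$, each entry $\sum_{k}\nu_k A_{kj}\ge \nu_j A_{jj}>0$, which gives the strict positivity.

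The main obstacle I expect is subtle: the splitting step only makes sense on the almost sure event $\{T_f<\infty\}$ delivered by \Cref{lem_Tf}(3), and the cutoff index $\max\{T_f,T_0\}$ itself is random. I would handle this by working pathwise on that full-measure event: on each such sample path $T_f$ is a deterministic finite number, so the backward-product identity and the primitivity argument apply verbatim, and the almost sure qualification then carries over to the statement. A small bookkeeping point to check carefully is the boundary case $T_f\le T_0$, where the finite product $A$ is an empty product (equal to $I$) and the identity collapses to $\prod_{t=T_0-1}^{\infty}W_{\mathcal{L}}(t)=\mathbf{1}\nu^{\top}$, which is consistent with the convention in the paper's backward-product definition.
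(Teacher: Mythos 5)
Your proof is correct and takes essentially the same route as the paper's underlying argument (which the paper delegates to Proposition 2 of its cited reference): split the backward product at the a.s.\ finite random time $\max\{T_f,T_0\}$, use that $\barwl$ is primitive row-stochastic (connected legitimate subgraph plus strictly positive diagonals) so that $\barwl^{k}\to\mathbf{1}\nu^{\top}$ with $\nu>\mathbf{0}$, and deduce strict positivity of the infinite product from the positive diagonals of the finite prefix. One minor bookkeeping slip in your closing remark: under the paper's backward-product convention, when $T_f\le T_0$ the prefix $\prod_{t=T_0-1}^{\max\{T_f,T_0\}-1}\wl(t)$ is the single factor $\wl(T_0-1)$ rather than the empty product $I$; this is harmless because for $T_f\le T_0-1$ that factor equals $\barwl$ and $\nu^{\top}\barwl=\nu^{\top}$, while for $T_f=T_0$ the factor must indeed be retained, so your main argument is unaffected.
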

\begin{proof}
    The result follows from Proposition 2 of \cite{yemini2021characterizing} by the existence of the finite time $T_f$ (\Cref{lem_Tf}), i.e., $W_{\ccalL} (t)= \widebar{W}_{\ccalL}$ for all $t \ge T_f$.
\end{proof}
We provide the results on the limit behavior of consensus process using the almost sure convergence of weight matrices. 
\begin{lemma} \label{lem_con_legit}
    Suppose Assumptions \ref{leg_ag}, \ref{mal_ag}, and \ref{asmp_th} hold. Given the initial values of legitimate agents $\xl(0)$, the process $\Tilde{x}_{\ccalL}(T_0,t)$ converges almost surely, i.e., almost surely
    \begin{equation*}
            \lim_{t \rightarrow \infty}  \Tilde{x}_{\ccalL}(T_0,t)= \bigg ( \prod_{k=T_0-1}^{\infty} \wl(k) \bigg)\xl(0) = y\mathbf{1},
    \end{equation*}
where $y \in \mathbb{R}$ is a random variable depending on $T_f$ and $T_0$. 
\end{lemma}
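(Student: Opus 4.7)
The plan is to derive the claim directly from Lemma~\ref{lem_als_w}, since that lemma has already done the hard work of showing that the infinite backward product of the weight matrices $\wl(t)$ converges almost surely to a rank-one matrix. First, I would start from definition~\eqref{eq_con_part1}, which expresses $\Tilde{x}_\ccalL(T_0,t) = \bigl( \prod_{k=T_0-1}^{t-1} \wl(k) \bigr) \xl(0)$, and take the limit as $t \to \infty$ on both sides. Because $\xl(0)$ is a fixed (deterministic) vector independent of $t$, the limit passes inside the matrix-vector product, provided the matrix-product limit exists almost surely.

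Next, I would invoke Lemma~\ref{lem_als_w}, which under Assumptions~\ref{leg_ag}, \ref{mal_ag}, and~\ref{asmp_th} guarantees that almost surely
\begin{equation*}
\prod_{k=T_0-1}^{\infty} \wl(k) \;=\; \mathbf{1}\nu^T \Bigl( \prod_{k=T_0-1}^{\max\{T_f,T_0\}-1} \wl(k) \Bigr),
\end{equation*}
where $\nu$ is a stochastic vector determined by the primitive matrix $\widebar{W}_\ccalL$, and $T_f$ is the finite random classification time from Definition~\ref{def_tf}. Substituting this expression yields
\begin{equation*}
\lim_{t\to\infty} \Tilde{x}_\ccalL(T_0,t) \;=\; \mathbf{1}\, \nu^T \Bigl( \prod_{k=T_0-1}^{\max\{T_f,T_0\}-1} \wl(k) \Bigr) \xl(0) \;=\; y\mathbf{1},
\end{equation*}
where I define
\begin{equation*}
y \;:=\; \nu^T \Bigl( \prod_{k=T_0-1}^{\max\{T_f,T_0\}-1} \wl(k) \Bigr) \xl(0).
\end{equation*}

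Finally, I would observe that $y$ is a scalar random variable whose randomness enters solely through the finite product $\prod_{k=T_0-1}^{\max\{T_f,T_0\}-1} \wl(k)$, which depends on the realized trust observations up to the random time $\max\{T_f, T_0\}-1$. Thus $y$ depends on both $T_f$ and $T_0$, as claimed; the deterministic factors $\nu$ and $\xl(0)$ contribute no additional randomness. Since the entire argument is a one-line algebraic substitution once Lemma~\ref{lem_als_w} is invoked, I do not anticipate a substantive obstacle; the only subtlety is ensuring that the almost-sure event on which the matrix product limit is valid coincides with the almost-sure event $\{T_f < \infty\}$, which is automatic because Lemma~\ref{lem_als_w} is itself conditioned on the existence of $T_f$ guaranteed by Lemma~\ref{lem_Tf}(3).
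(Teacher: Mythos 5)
Your proposal is correct and follows essentially the same route as the paper: the published proof simply cites Proposition~2 of \cite{yemini2021characterizing}, whose content is exactly the argument you give, namely that after the finite random time $T_f$ the weight matrices equal $\widebar{W}_{\ccalL}$, so Lemma~\ref{lem_als_w} gives the rank-one limit $\mathbf{1}\nu^T\bigl(\prod_{k=T_0-1}^{\max\{T_f,T_0\}-1}\wl(k)\bigr)$, which applied to $\xl(0)$ yields $y\mathbf{1}$. Your identification of $y$ and of the source of its randomness matches the paper's intended construction.
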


\begin{proof}
    The result is an immediate consequence of Proposition~2 in \cite{yemini2021characterizing}. 
\end{proof}

We next state the limit of $\phi_{\ccalM}(T_0,t)$ (see~\eqref{eq_sep_Dy}) in the consensus process.

\begin{lemma} \label{lem_con_mal}
    Suppose Assumptions \ref{leg_ag}, \ref{mal_ag}, and \ref{asmp_th} hold. Then, the effect of malicious agents $\phi_{\ccalM}(T_0,t)$ converges almost surely, i.e., we have almost surely
    \begin{align*}
            \lim_{t \rightarrow \infty}\phi_{\ccalM}(T_0,t)&=  \sum_{k=T_0-1}^{\infty} \bigg ( \prod_{\ell=k+1}^{\infty} \wl(\ell) \bigg) \wm(k) \xm(k)\\
            &= h\mathbf{1},
    \end{align*}
where $h \in \mathbb{R}$ is a random variable depending on $T_f$ and $T_0$.
\end{lemma}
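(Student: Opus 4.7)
The plan is to exploit the fact, established by Lemma~\ref{lem_Tf}, that almost surely there exists a finite random time $T_f$ past which every legitimate agent correctly classifies its neighbors. In particular, for all $t \ge T_f$ every malicious neighbor is excluded from the trusted neighborhood, so the weight assigned by any legitimate agent to any malicious agent vanishes, i.e.\ $\wm(t) = \mathbf{0}$ for all $t \ge T_f$ almost surely. This immediately makes the summation in the definition of $\phi_{\ccalM}(T_0,t)$ essentially finite: only the indices $k$ with $T_0-1 \le k \le T_f - 1$ can contribute a nonzero term, while all tail contributions from $k \ge T_f$ are zero.

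First, I would handle the trivial case $T_0 - 1 \ge T_f$, in which the above observation forces $\phi_{\ccalM}(T_0,t) = \mathbf{0}$ for every $t \ge T_0$, so we can take $h = 0$. In the more interesting case $T_0 - 1 < T_f$, I would write, for any $t > T_f$,
\begin{align*}
\phi_{\ccalM}(T_0,t)
= \sum_{k=T_0-1}^{T_f - 1} \!\!\Bigl( \prod_{s=k+1}^{t-1} \wl(s) \Bigr)\, \wm(k)\, \xm(k),
\end{align*}
and then factor the inner matrix product at the switch-over time $T_f$ using $\wl(s) = \barwl$ for $s \ge T_f$:
\begin{align*}
\prod_{s=k+1}^{t-1} \wl(s) \;=\; \barwl^{\,t - T_f}\, \prod_{s=k+1}^{T_f - 1} \wl(s).
\end{align*}

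Next I would take $t \to \infty$. Since $\barwl$ is a primitive row-stochastic matrix (by Assumption~\ref{leg_ag}(1) and the construction in \eqref{eq_barwij}), the Perron--Frobenius theorem gives $\barwl^{t-T_f} \to \mathbf{1}\nu^\top$ as $t \to \infty$, where $\nu > \mathbf{0}$ is the stationary distribution already appearing in Lemma~\ref{lem_als_w}. Therefore
\begin{align*}
\lim_{t\to\infty} \phi_{\ccalM}(T_0,t)
= \mathbf{1}\,\nu^\top \sum_{k=T_0-1}^{T_f - 1} \Bigl( \prod_{s=k+1}^{T_f - 1} \wl(s) \Bigr)\, \wm(k)\, \xm(k)
= h\,\mathbf{1},
\end{align*}
where the scalar random variable
\begin{align*}
h \;:=\; \nu^\top \sum_{k=T_0-1}^{T_f - 1} \Bigl( \prod_{s=k+1}^{T_f - 1} \wl(s) \Bigr)\, \wm(k)\, \xm(k)
\end{align*}
depends only on $T_0$, $T_f$, and the (bounded) realizations of $\wl(s)$, $\wm(k)$, $\xm(k)$ for $k,s < T_f$, and is therefore finite almost surely.

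The only mildly delicate point, and the step I expect to need the most care, is handling the randomness of $T_f$ together with the switch from the general matrix product $\prod \wl(s)$ to the deterministic power $\barwl^{t-T_f}$: one must work on the almost-sure event $\{T_f < \infty\}$ provided by Lemma~\ref{lem_Tf}, and argue pathwise, using that $\wm(k)\xm(k)$ is uniformly bounded (since $\xm(k) \in [-\eta,\eta]^{|\M|}$ and $\wl, \wm$ are stochastic) so that the finite sum is well-defined and its limit exchange with the Perron--Frobenius limit is legitimate. Aside from that, the result follows directly from Lemma~\ref{lem_Tf} and the same Perron--Frobenius argument used in Lemma~\ref{lem_als_w}.
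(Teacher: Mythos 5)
Your proposal is correct and follows essentially the same route as the paper: the paper's printed proof simply invokes Proposition 3 of \cite{yemini2021characterizing}, whose argument is exactly what you spell out — truncate the sum at the almost surely finite time $T_f$ where $\wm(k)=\mathbf{0}$, factor the remaining matrix products through $\barwl^{\,t-T_f}$, and apply the Perron--Frobenius limit $\barwl^{t}\to\mathbf{1}\nu^\top$ to obtain $h\mathbf{1}$. Your explicit handling of the case $T_0-1\ge T_f$ and of the randomness of $T_f$ is a fine (and consistent) elaboration of that same argument.
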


\begin{proof}
The result and its proof are identical to those of Proposition 3 in \cite{yemini2021characterizing}, derived from the almost sure convergence of the weight matrices, $W_{\ccalL} (t)= \widebar{W}_{\ccalL}$ for all $t \ge T_f$.
\end{proof}

Next, our final result in this part states that legitimate agents can still reach an agreement but over a random value asymptotically.
\begin{corollary} \label{cor_con}
    Suppose Assumptions \ref{leg_ag}, \ref{mal_ag}, and \ref{asmp_th} hold. Then, the consensus protocol~\eqref{eq_con_dy} among the legitimate agents  converges almost surely, i.e.,
    \begin{equation}
         \lim_{t \rightarrow \infty}  \xl(T_0,t)=  z\mathbf{1}\quad\hbox{almost surely},
    \end{equation}
where $z \in \mathbb{R}$ is a random variable given by $z=y+h$, with $y$ and $h$ from Lemma~\ref{lem_con_legit} and Lemma~\ref{lem_con_mal}, respectively.
\end{corollary}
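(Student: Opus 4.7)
The plan is to combine the two convergence results from Lemma~\ref{lem_con_legit} and Lemma~\ref{lem_con_mal} via the decomposition already established in equation~\eqref{eq_con_sum}. The entire argument reduces to showing that a sum of two almost-surely convergent sequences converges almost surely to the sum of the limits, so no new probabilistic machinery is needed.

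First, I recall the decomposition from \eqref{eq_con_sum}:
\begin{equation*}
\xl(T_0,t) = \Tilde{x}_{\ccalL}(T_0,t) + \phi_{\ccalM}(T_0,t) \qquad \text{for all } t \ge T_0 - 1,
\end{equation*}
where $\Tilde{x}_{\ccalL}(T_0,t)$ (defined in \eqref{eq_con_part1}) captures the propagation of the legitimate initial condition through the products of weight matrices $\wl(k)$, and $\phi_{\ccalM}(T_0,t)$ (defined in \eqref{eq_sep_Dy}) captures the cumulative influence of malicious inputs. Under Assumptions~\ref{leg_ag}, \ref{mal_ag}, and~\ref{asmp_th}, Lemma~\ref{lem_Tf} guarantees the existence of an almost surely finite random time $T_f$ after which $\wl(t) = \barwl$ and $\wm(t) = \mathbf{0}$. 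This is the key structural fact that both Lemma~\ref{lem_con_legit} and Lemma~\ref{lem_con_mal} exploit.

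Next, let $\Omega_1$ be the event of probability one on which Lemma~\ref{lem_con_legit} asserts $\lim_{t \to \infty} \Tilde{x}_{\ccalL}(T_0,t) = y\mathbf{1}$, and let $\Omega_2$ be the event of probability one on which Lemma~\ref{lem_con_mal} asserts $\lim_{t \to \infty} \phi_{\ccalM}(T_0,t) = h\mathbf{1}$. On the intersection $\Omega_1 \cap \Omega_2$, which still has probability one, both limits exist simultaneously. By the standard fact that the limit of a sum is the sum of the limits when both exist, we obtain
\begin{equation*}
\lim_{t \to \infty} \xl(T_0,t) = \lim_{t \to \infty} \Tilde{x}_{\ccalL}(T_0,t) + \lim_{t \to \infty} \phi_{\ccalM}(T_0,t) = y\mathbf{1} + h\mathbf{1} = (y+h)\mathbf{1}.
\end{equation*}
Setting $z := y + h$ yields the claim. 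Since $y$ and $h$ are both random variables whose realization depends on $T_f$ and $T_0$, so is $z$.

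There is essentially no hard step here: all the heavy lifting (almost sure stabilization of the weight matrix products, identification of the limiting steady-state projection via the Perron--Frobenius vector $\nu$) has been done in Lemmas~\ref{lem_als_w}--\ref{lem_con_mal}. The only subtlety worth flagging is that both limits must hold on a common full-measure event, which is automatic by taking a finite intersection. Hence the proof is a one-line application of the decomposition together with linearity of limits.
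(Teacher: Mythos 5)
Your proposal is correct and follows exactly the route the paper takes: the paper's proof simply cites Lemmas~\ref{lem_con_legit} and~\ref{lem_con_mal} together with the decomposition~\eqref{eq_con_sum}, and your write-up just makes explicit the routine step of intersecting the two full-measure events and adding the limits to obtain $z=y+h$. Nothing is missing.
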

\begin{proof}
    The result is a direct consequence of Lemmas \ref{lem_con_legit}--\ref{lem_con_mal} (Propositions 2-3 in \cite{yemini2021characterizing}).
\end{proof}

Corollary~\ref{cor_con} indicates that the legitimate agents reach the same random scalar value implying $\lim_{t\rightarrow\infty} |x_i(t)- x_j(t)|=0$ almost surely for any $(i,j) \in \mathcal{L} \times \mathcal{L}$. Note that Corollary~\ref{cor_con} does not guarantee that the consensus value $z$ lies within the convex hull of the legitimate agents' initial values $\xl
(0)$ due to the influence of malicious agents.

\begin{corollary}[Convergence in mean] \label{cor_con_exp}
    The legitimate agents' values converge to the same value in expectation, i.e.,
\begin{equation}
    \lim_{t \rightarrow \infty} \mathbb{E} (\|
\xl(T_0,t)-z\mathbf{1}\|)=0  
\end{equation}
    
\end{corollary}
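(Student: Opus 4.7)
The plan is to upgrade the almost sure convergence established in Corollary~\ref{cor_con} to convergence in expectation via the Dominated Convergence Theorem. The key observation is that the sequence $\{\|x_{\mathcal{L}}(T_0,t) - z\mathbf{1}\|\}_{t\ge 0}$ is uniformly bounded by a deterministic constant, which supplies the integrable envelope that the DCT requires.

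First, I would establish such a deterministic uniform bound. Recall from \Cref{sec:consensus_model} that every legitimate agent's initial value lies in $[-\eta,\eta]$, and that malicious agents transmit values in $[-\eta,\eta]$. Since the consensus update~\eqref{eq_con} is a convex combination (with $w_{ii}(t)>0$, $w_{ij}(t)\ge 0$, and $w_{ii}(t)+\sum_{j\in\mathcal{N}_i}w_{ij}(t)=1$) of values already lying in $[-\eta,\eta]$, a simple induction on $t$ yields $|x_i(t)|\le\eta$ for every $i\in\mathcal{L}$ and every $t\ge 0$. Passing to the almost sure limit via Corollary~\ref{cor_con} gives $|z|\le\eta$ almost surely. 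By the triangle inequality,
\[
\|x_{\mathcal{L}}(T_0,t) - z\mathbf{1}\| \;\le\; \|x_{\mathcal{L}}(T_0,t)\| + |z|\sqrt{|\mathcal{L}|} \;\le\; 2\eta\sqrt{|\mathcal{L}|},
\]
so the sequence is dominated almost surely by the deterministic (hence trivially integrable) constant $2\eta\sqrt{|\mathcal{L}|}$.

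Second, Corollary~\ref{cor_con} states that $\|x_{\mathcal{L}}(T_0,t) - z\mathbf{1}\| \to 0$ almost surely. Combining this pointwise convergence with the uniform bound above, the Dominated Convergence Theorem immediately yields
\[
\lim_{t\to\infty} \mathbb{E}\bigl(\|x_{\mathcal{L}}(T_0,t) - z\mathbf{1}\|\bigr) \;=\; \mathbb{E}\Bigl(\lim_{t\to\infty}\|x_{\mathcal{L}}(T_0,t) - z\mathbf{1}\|\Bigr) \;=\; 0.
\]

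There is no genuine obstacle; the argument is a textbook application of the DCT. The only point that merits explicit verification is the almost sure bound $|z|\le\eta$, which follows because $z$ is realized as the almost sure limit of quantities $x_i(T_0,t)$ that are themselves uniformly bounded by $\eta$ for every finite $t$. Note also that the choice of norm is immaterial: any norm on $\mathbb{R}^{|\mathcal{L}|}$ is equivalent to the one used above, so the conclusion extends without modification.
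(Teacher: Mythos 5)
Your proof is correct and follows essentially the same route as the paper: both invoke the almost sure convergence from Corollary~\ref{cor_con}, note that $\|\xl(T_0,t)\|$ (and hence the difference to $z\mathbf{1}$) is uniformly bounded because all values stay in $[-\eta,\eta]$, and conclude by the Dominated Convergence Theorem. Your version merely spells out the bound $|z|\le\eta$ and the dimension factor more explicitly than the paper does.
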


\begin{proof}
  By Corollary \ref{cor_con},  almost surely  we have $\lim_{t \rightarrow \infty}  \xl(T_0,t)=  z\mathbf{1}$, where $z \in \mathbb{R}$ is a random variable. Since $\|\xl(T_0,t)\| \le \eta $ for any time $t$,  the result follows by Lebesgue Dominated Convergence Theorem.

\end{proof}

Next, we characterize the probability of reaching $T_f$ and the deviation in the consensus process from the nominal case without malicious agents. These characterizations depend on the rate at which legitimate agents correctly classify their neighbors, as captured by the upper bounds in \Cref{lem_misp_legit} and \Cref{lem_misp_mal}. Notably, more frequent attacks by malicious agents lead to faster detection, as reflected in \Cref{lem_misp_mal}. To capture the worst-case scenario from a detection standpoint, we focus our analysis on the case where the lower bounds in \Cref{asmp_th} are attained.

\subsection{Characterizing $T_f$}\label{sec:char_tf}
We make the following assumption, which will be used instead of \Cref{asmp_th} throughout the rest of the paper.

\begin{assumption}
\label{asmp_th_lb}
Let $\epsilon_1>0$ be a constant. Moreover, let $\epsilon_2>0$ be a constant such that $\sqrt{1+\epsilon_2}\geq\frac{2\sqrt{1+\epsilon_1}}{E_\LL-E_\M}$. Assume the following hold for all $t\geq0$ and $m\in \M$:\\ 
\noindent (1) $\xi_t = \sqrt{(1+\epsilon_1)(t+1)\ln(t+1)},$\\ 
\noindent (2) $\sum_{k=0}^t p_m(k)\geq \sqrt{(1+\epsilon_2)(t+1)\ln(t+1)}.$
\end{assumption}
This assumption ensures that the lower bounds in \Cref{asmp_th} are attained. For clarity of exposition, we impose these conditions for all $t\geq 0$, although the analysis naturally extends to scenarios where they hold for $t\geq t'$ for some $t'>0$.
\begin{remark}\label{remark_pm_lb}
    \Cref{asmp_th_lb} captures scenarios in which the lower bound on the cumulative attack probability of malicious agents converges to zero, provided the decay is not too sharp. For instance, the analysis extends to cases such as $p_m(t)=1/(t+1)^c$ with $c<1/2$, where $p_m(t)$ converges to $0$ but $T_f$ still exists.
\end{remark}
Prior works such as \cite{yemini2021characterizing,ballotta2024role} focus on the extreme case where malicious agents attack at every step, i.e., $p_m(t)=1$, while our previous conference paper \cite{aydin2024multi} considers attack probabilities lower bounded by a positive constant, i.e., $p_m(t)\geq\bar{p}>0$. In contrast, this work does not impose such lower bounds, as discussed in \Cref{remark_pm_lb}. Now, using this, we analyze the probability of reaching $T_f$ at some time $t$. First, we need the following lemma regarding the infinite summation of misclassification probabilities.
\begin{lemma} \label{lem_hurwitz}
    Suppose Assumptions \ref{leg_ag}, \ref{mal_ag}, and \ref{asmp_th_lb} hold. Let $\zeta(c,t)$ denote the Hurwitz zeta function defined by $\zeta(c,t):=\sum_{k=0}^{\infty} \frac{1}{(k+t)^{c}}$ where $c>0$ and $t>0$. Let $i\in \LL$ be a legitimate agent with a legitimate neighbor $l\in \Nil$ and a malicious neighbor $m\in \Nim$. Then, for all $t\geq0$ we have
    \begin{align*}
        \sum_{k=t}^\infty \mathbb{P} ( l \not \in \hat{\ccalN}_i (k)) &\leq \zeta(1+\epsilon_1,t), \text{ and} \\ \sum_{k=t}^\infty \mathbb{P}( m  \in \hat{\ccalN}_i (k)) &\leq \zeta(1+\epsilon_2,t),
    \end{align*}
    where $\epsilon_1$ and $\epsilon_2$ are the constants defined in \Cref{asmp_th_lb}.
\end{lemma}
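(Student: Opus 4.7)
The strategy is to specialize the per-step misclassification bounds from Lemmas~\ref{lem_misp_legit} and~\ref{lem_misp_mal} using the explicit expressions for $\xi_k$ and the cumulative attack lower bound in Assumption~\ref{asmp_th_lb}, turn the exponentials into polynomial tails in $(k+1)$, and then recognize the resulting tail sums (after a trivial index shift) as Hurwitz zeta functions $\zeta(1+\epsilon_1,t)$ and $\zeta(1+\epsilon_2,t)$.

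For the legitimate-neighbor bound, I would apply Lemma~\ref{lem_misp_legit} pointwise in $k$ to get $\mathbb{P}(l\notin \hat{\ccalN}_i(k)) \le |\ccalN_i|\,\exp\bigl(-\xi_k^2/(2(k+1))\bigr)$, and then substitute $\xi_k^2 = (1+\epsilon_1)(k+1)\ln(k+1)$ from Assumption~\ref{asmp_th_lb}(1). This collapses the exponential into a negative power of $(k+1)$ of order determined by $\epsilon_1$, after which $\sum_{k=t}^\infty (k+1)^{-(1+\epsilon_1)}$ is dominated by $\sum_{j=0}^\infty (j+t)^{-(1+\epsilon_1)} = \zeta(1+\epsilon_1,t)$ by an index shift and term-by-term comparison.

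For the malicious-neighbor bound, I would apply Lemma~\ref{lem_misp_mal}, which yields
\[
\mathbb{P}(m\in \hat{\ccalN}_i(k)) \le \exp\!\left(-\bigl((E_\LL-E_\M)\textstyle\sum_{j=0}^{k} p_m(j)-\xi_k\bigr)^2\big/(2(k+1))\right),
\]
provided the quantity inside the square is positive. Inserting Assumption~\ref{asmp_th_lb}(1)--(2) lets me factor out $\sqrt{(k+1)\ln(k+1)}$ from both terms so that the parenthesized quantity is bounded below by $\bigl[(E_\LL-E_\M)\sqrt{1+\epsilon_2}-\sqrt{1+\epsilon_1}\bigr]\sqrt{(k+1)\ln(k+1)}$. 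The hypothesis $\sqrt{1+\epsilon_2} \ge 2\sqrt{1+\epsilon_1}/(E_\LL-E_\M)$ keeps this bracket strictly positive (in particular it handles the positivity requirement from Lemma~\ref{lem_misp_mal}) and, after squaring and dividing by $2(k+1)$, collapses the exponential into a power $(k+1)^{-(1+\epsilon_2)}$. One more index shift identifies $\sum_{k=t}^\infty (k+1)^{-(1+\epsilon_2)}$ with an upper bound of $\zeta(1+\epsilon_2,t)$.

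The only nontrivial step is the algebra in the malicious case: one has to track how the gap $(E_\LL-E_\M)$ and the two parameters $\epsilon_1,\epsilon_2$ combine under squaring to produce exactly an exponent of $1+\epsilon_2$, which is precisely the role of the ``factor of $2$'' in the hypothesis $\sqrt{1+\epsilon_2} \ge 2\sqrt{1+\epsilon_1}/(E_\LL-E_\M)$ — it provides the slack needed so that after subtracting $\xi_k$ from $(E_\LL-E_\M)\sum_{j=0}^k p_m(j)$ the remaining quantity is still large enough, once squared, to yield the claimed $\zeta(1+\epsilon_2,t)$ rate. The legitimate case is, by contrast, an almost immediate substitution.
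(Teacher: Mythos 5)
Your overall route is the same as the paper's: the paper's proof of this lemma is literally a one-liner stating that it follows from Lemma~\ref{lem_misp_legit} and Lemma~\ref{lem_misp_mal} with the choices of $\xi_t$ and $p_m(t)$ in Assumption~\ref{asmp_th_lb}, and your plan is exactly that substitution followed by an index shift to recognize the Hurwitz zeta function.

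However, the exponent bookkeeping you describe---which you rightly single out as the only nontrivial step---does not close as claimed. The bounds in Lemmas~\ref{lem_misp_legit} and~\ref{lem_misp_mal} carry a factor $1/2$ in the exponent, so substituting $\xi_k^2=(1+\epsilon_1)(k+1)\ln(k+1)$ gives $|\ccalN_i|\,e^{-\xi_k^2(k+1)^{-1}/2}=|\ccalN_i|\,(k+1)^{-(1+\epsilon_1)/2}$, not $(k+1)^{-(1+\epsilon_1)}$; for $\epsilon_1\le 1$ the bounding series $\sum_{k\ge t}(k+1)^{-(1+\epsilon_1)/2}$ diverges, so it is certainly not dominated by $\zeta(1+\epsilon_1,t)$, and the prefactor $|\ccalN_i|$ does not disappear either. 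In the malicious case, your own lower bound on the bracket is $(E_\LL-E_\M)\sqrt{1+\epsilon_2}-\sqrt{1+\epsilon_1}\ge \tfrac{1}{2}(E_\LL-E_\M)\sqrt{1+\epsilon_2}$, and after squaring and dividing by $2(k+1)$ the exponent is $\tfrac{1}{8}(E_\LL-E_\M)^2(1+\epsilon_2)\ln(k+1)$; since $E_\LL-E_\M\le 1$, this is far from $(1+\epsilon_2)\ln(k+1)$, so the ``factor of $2$'' in Assumption~\ref{asmp_th_lb} only buys positivity of the bracket (and an exponent of order $(1+\epsilon_1)/2$ at best), not ``exactly an exponent of $1+\epsilon_2$'' as you assert. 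To be fair, this gap is inherited from the paper itself, whose proof of Lemma~\ref{lem_Tf} makes the same slip by treating $e^{-1/2}$ as a multiplicative constant rather than a halving of the exponent; the stated conclusion becomes literally true only after rescaling $\xi_t$ and the condition linking $\epsilon_2$, $\epsilon_1$, and $E_\LL-E_\M$, or after restating the bounds with exponents $(1+\epsilon_1)/2$ and $(E_\LL-E_\M)^2(1+\epsilon_2)/8$ together with the constant $|\ccalN_i|$. As written, though, your claimed collapse to $(k+1)^{-(1+\epsilon_1)}$ and $(k+1)^{-(1+\epsilon_2)}$ is precisely where the argument fails.
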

\begin{proof}
    The proof directly follows from Lemma~\ref{lem_misp_legit} and Lemma~\ref{lem_misp_mal} with the choice of $\xi_t$ and $p_m(t)$ stated in Assumption~\ref{asmp_th_lb}.
\end{proof}
\begin{proposition} \label{prop_Tf_prob}
    Suppose Assumptions \ref{leg_ag}, \ref{mal_ag}, and \ref{asmp_th_lb} hold. Let $\zeta(c,t)$ denote the Hurwitz zeta function defined by $\zeta(c,t):=\sum_{k=0}^{\infty} \frac{1}{(k+t)^{c}}$ where $c>0$ and $t>0$. 
    The probability of the event that all agents are correctly classified after time step $t \in \mathbb{N}$ is bounded below as follows, i.e, 
    \begin{align}
    &\mathbb{P}(T_f =t) \le|\ccalL|^2 |\ccalN|\cdot 
    e^{-\xi_t^2(t+1)^{-1}/2} \nonumber\\
    &+ |\ccalL| |\ccalM|\cdot \: e^{-\left((E_\LL-E_\M) \cdot \underset{m \in \ccalM}{\min} \sum_{k=0}^t \ p_m(k)-\xi_t\right)^2(t+1)^{-1}/2},
\end{align}
and
\begin{align}
    \mathbb{P}(T_f >t-1) \le |\ccalL|^2 |\ccalN|\cdot \zeta(1+\epsilon_1,t) +|\ccalL| \cdot |\ccalM| \zeta(1+\epsilon_2,t).
    \end{align}
\end{proposition}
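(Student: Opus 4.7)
The plan is to prove both inequalities by combining a union bound over all pairs of legitimate agents and their neighbors with the per-pair misclassification bounds in \Cref{lem_misp_legit} and \Cref{lem_misp_mal}, and (for the second inequality) the tail-sum bound in \Cref{lem_hurwitz}.

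For the first bound, I would start from the observation that $\{T_f=t\}\subseteq \mathcal{A}_l(t)\cup\mathcal{A}_m(t)$ by the definition of $T_f$ in \Cref{def_tf}: if no agent misclassifies any neighbor at time $t$, the infimum defining $T_f$ cannot take the value $t$. Expanding the events via the union bound gives
\begin{align*}
\mathbb{P}(T_f=t)
&\le\sum_{i\in\mathcal{L}}\sum_{l\in\Nil}\mathbb{P}(l\notin\hat{\mathcal{N}}_i(t))\\
&\quad+\sum_{i\in\mathcal{L}}\sum_{m\in\Nim}\mathbb{P}(m\in\hat{\mathcal{N}}_i(t)).
\end{align*}
Next I would invoke \Cref{lem_misp_legit} to bound the first inner sum by $|\mathcal{N}_i|e^{-\xi_t^2(t+1)^{-1}/2}\le|\mathcal{N}|e^{-\xi_t^2(t+1)^{-1}/2}$ per term; since there are at most $|\mathcal{L}|^2$ such pairs $(i,l)$, this contributes the first summand. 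For the second inner sum, I would invoke \Cref{lem_misp_mal} to obtain $e^{-((E_\LL-E_\M)\sum_{k=0}^t p_m(k)-\xi_t)^2(t+1)^{-1}/2}$ per term; replacing $\sum_{k=0}^t p_m(k)$ by $\min_{m\in\mathcal{M}}\sum_{k=0}^t p_m(k)$ only enlarges the bound, and there are at most $|\mathcal{L}||\mathcal{M}|$ pairs $(i,m)$, yielding the second summand.

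For the second bound, I would use the containment $\{T_f>t-1\}\subseteq\bigcup_{k\ge t}(\mathcal{A}_l(k)\cup\mathcal{A}_m(k))$ that again follows from \Cref{def_tf}, then exchange the outer union over $k$ with the sums over $(i,l)$ and $(i,m)$, and finally apply \Cref{lem_hurwitz} to each tail sum. The lemma directly gives $\sum_{k\ge t}\mathbb{P}(l\notin\hat{\mathcal{N}}_i(k))\le\zeta(1+\epsilon_1,t)$ and $\sum_{k\ge t}\mathbb{P}(m\in\hat{\mathcal{N}}_i(k))\le\zeta(1+\epsilon_2,t)$; summing over at most $|\mathcal{L}|^2|\mathcal{N}|$ pairs in the legitimate case and $|\mathcal{L}||\mathcal{M}|$ pairs in the malicious case produces the claimed bound.

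No step presents a serious obstacle: the proof is essentially a two-level union bound followed by plugging in the single-time and tail misclassification bounds already established. The only point worth being careful about is the reduction from $\{T_f=t\}$ (resp.\ $\{T_f>t-1\}$) to a concrete misclassification event, which is immediate from the infimum definition of $T_f$ in \Cref{def_tf}.
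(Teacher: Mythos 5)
Your overall route is the same as the paper's: a union bound over agent pairs combined with \cref{lem_misp_legit} and \cref{lem_misp_mal} for the single-time bound, and tail sums via \cref{lem_hurwitz} for the second bound. However, your opening set inclusion is wrong in direction, and this is a genuine logical error rather than a cosmetic one. By \cref{def_tf}, on the event $\{T_f=t\}$ every neighbor is correctly classified at \emph{every} $k\ge t$; in particular, no misclassification occurs at time $t$ itself, so $\{T_f=t\}$ is contained in the \emph{complement} of $\mathcal{A}_l(t)\cup\mathcal{A}_m(t)$, not in that union. The correct reduction, and the one the paper uses, is that $T_f=t$ forces a misclassification one step earlier: otherwise the infimum would be at most $t-1$, so $\{T_f=t\}\subseteq\mathcal{A}_l(t-1)\cup\mathcal{A}_m(t-1)$ (the paper writes this as $\mathbb{P}(T_f=t)\le\mathbb{P}(\mathcal{D}^C(t-1))$ for the event $\mathcal{D}(t-1)$ of fully correct classification at time $t-1$). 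The same issue affects your second step: a sample path with a misclassification only at time $t-1$ and none afterwards has $T_f=t>t-1$ but lies outside $\bigcup_{k\ge t}\bigl(\mathcal{A}_l(k)\cup\mathcal{A}_m(k)\bigr)$, so your claimed containment fails; the correct identity is $\{T_f>t-1\}=\bigcup_{k\ge t-1}\bigl(\mathcal{A}_l(k)\cup\mathcal{A}_m(k)\bigr)$, or, as the paper argues, one uses the almost-sure finiteness of $T_f$ to write $\mathbb{P}(T_f>t-1)\le\sum_{k\ge t}\mathbb{P}(T_f=k)$ and then sums the first bound.

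The damage is localized: once the inclusion is corrected, the rest of your argument (union bound over at most $|\LL|^2$ legitimate pairs with the factor $|\ccalN|$, at most $|\LL|\,|\M|$ malicious pairs with the minimum over $m\in\M$ of $\sum_{k=0}^t p_m(k)$, then \cref{lem_hurwitz} for the tails) goes through exactly as in the paper, up to a shift of the time index from $t$ to $t-1$ in the exponents (respectively from $t$ to $t-1$ in the starting point of the zeta tails). The paper itself is loose about this same off-by-one step, silently evaluating the misclassification events of $\mathcal{D}^C(t-1)$ at time $t$, so your final expressions coincide with the stated proposition; but as a justification your version asserts a containment that is false as written, and you should repair it by invoking misclassification at time $t-1$ (for the first bound) and finiteness of $T_f$ together with summation of the pointwise bounds (for the second).
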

\begin{proof}
We first define the event that there are no misclassified neighbors of legitimate agents at time $t$. Therefore, this event can be expressed as the intersection of events of correct classification for each legitimate agent $i \in \ccalL$ and for all of its neighbors $\Ni=\Nil\cup \Nim$, 
\begin{align}
    \mathcal{D}(t):= \left \{ \bigcap_{\substack{i\in \mathcal{L} \\  l \in \Nil}} \{ l   \in \hat{\ccalN}_i (t) \} \bigcap_{\substack{i \in \mathcal{L} \\  m \not \in \Nim}} \{ m  \not \in \hat{\ccalN}_i (t) \right \}.
\end{align}
Then we have, by the definition of the earliest time step $T_f$ (Eq. \eqref{eq:Tf} where misclassification of agents no longer happens, due to the fact that 
    \begin{align*}
        \mathbb{P}(T_f=t)&= \mathbb{P} \left (\left \{ \bigcap_{k \ge t} \mathcal{D}(k) \right \} \cap \mathcal{D}^C(t-1) \right).
    \end{align*}
    
Next, we derive the upper bound for the union of misclassification agents,      

\begin{align}
        &\mathbb{P}(T_f=t) \le \mathbb{P}(\mathcal{D}^C (t-1)) \nonumber\\
        &=\mathbb{P}\left(\bigcup_{\substack{i\in \mathcal{L} \\  l \in \Nil}} \{ l   \not \in \hat{\ccalN}_i (t) \} \bigcup_{\substack{i \in \mathcal{L} \\  m \not \in \Nim}} \{ m  \in \hat{\ccalN}_i (t) \right) \nonumber\\
        &\le \sum_{i \in \ccalL}  \left(\sum_{l \in \Nil} \mathbb{P}
        (l \not \in \hat{\ccalN}_i (t)) + \sum_{m \in \ccalM}\mathbb{P}(m  \in \hat{\ccalN}_i (t)\right) \label{eq_upper_prob_bound}.   
    \end{align}
Hence, the result follows from by Lemmas \ref{lem_misp_legit} and \ref{lem_misp_mal},
\begin{align}
    \mathbb{P}(T_f=t) &\le|\ccalL|^2 |\ccalN|\cdot 
    e^{-\xi_t^2(t+1)^{-1}/2} \nonumber\\
    &+ |\ccalL| |\ccalN| \cdot e^{-\left((E_\LL-E_\M) \underset{m \in \ccalM}{\min} \sum_{k=0}^t p_m(k)-\xi_t\right)^2(t+1)^{-1}/2}.
\end{align}

Using the union bound in Eq. \eqref{eq_upper_prob_bound} and Lemma \ref{lem_hurwitz}, we also conclude that the upper bound for the probability $\mathbb{P}(T_f > t-1)$ as follows,
\begin{align}
    \mathbb{P}(T_f > t-1)&\leq \sum_{k=t}^\infty \mathbb{P}(T_f =k) \nonumber\\
    &\le |\ccalL|^2 |\ccalN|\cdot \zeta(1+\epsilon_1,t) +|\ccalL| \cdot |\ccalM| \zeta(1+\epsilon_2,t). \label{eq_sum_upper_Tf}
\end{align}
\end{proof}

\subsection{Deviation from Nominal Consensus}\label{sec:deviation}

Throughout this section, we aim to characterize the deviation from the asymptotic nominal consensus value. The nominal consensus dynamics represent the ideal scenario in which $\wl(t) = \barwl$ and $\wm(t) = \mathbf{0}$ for all $t \geq T_0 - 1$. Consequently, the asymptotic nominal consensus value, which denotes the ideal state the agents would achieve, is given by $\mathbf{1} \nu^T x_{\ccalL}(0)$, since $\lim_{t \to \infty} \barwl^t = \mathbf{1} \nu^T$. As we do not have any assumptions on the dynamics of malicious agents in Eq.~\eqref{eq_con_dy}, we will analyze the process with the worst-case approach, based on the idea that legitimate agents stop assigning positive weights to malicious agents after some (random) finite time. We first bound the probability that legitimate agents do not follow the nominal consensus dynamics with the nominal weights $\barwl$ after the observation window $T_0$. 

\begin{lemma} \label{lem_WlnotbarW}
  Suppose Assumptions \ref{leg_ag},\ref{mal_ag} and \ref{asmp_th_lb} hold. The probability of the event that the actual consensus dynamics among legitimate agents deviate from the nominal dynamics at some time step $k \ge T_0 - 1$ is bounded as follows,
\begin{align*}
    &\mathbb{P} ( \exists k \ge T_0-1: \wl(k) \neq \widebar{W}_{\ccalL} ) \nonumber\\
     & \le |\ccalL|^2 |\ccalN|\cdot \zeta(1+\epsilon_1,T_0-1) +|\ccalL| \cdot |\ccalM| \zeta(1+\epsilon_2,T_0-1).
\end{align*}
\end{lemma}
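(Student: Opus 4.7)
The plan is to bound the target event by a triple union over time, over legitimate agents, and over their neighbors, and then reduce to tail sums of misclassification probabilities that are already controlled by \Cref{lem_hurwitz}. The starting observation is that $\wl(k)=\barwl$ holds if and only if every legitimate agent $i\in\ccalL$ correctly classifies all of its neighbors at time $k$, i.e., $\hat{\ccalN}_i(k)=\Nil$. Hence,
\begin{align*}
&\{\exists k\ge T_0-1:\wl(k)\neq\barwl\}\\
&\quad\subseteq\bigcup_{k\ge T_0-1}\bigcup_{i\in\ccalL}\Bigl[\bigcup_{l\in\Nil}\{l\notin\hat{\ccalN}_i(k)\}\cup\bigcup_{m\in\Nim}\{m\in\hat{\ccalN}_i(k)\}\Bigr].
\end{align*}

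I would then apply the union bound and interchange the order of the countable nonnegative sums (Tonelli) to split the estimate into a legitimate-misclassification part and a malicious-misclassification part,
\begin{align*}
&\mathbb{P}(\exists k\ge T_0-1:\wl(k)\neq\barwl)\\
&\le\sum_{i\in\ccalL}\sum_{l\in\Nil}\sum_{k=T_0-1}^{\infty}\mathbb{P}(l\notin\hat{\ccalN}_i(k))\\
&\quad+\sum_{i\in\ccalL}\sum_{m\in\Nim}\sum_{k=T_0-1}^{\infty}\mathbb{P}(m\in\hat{\ccalN}_i(k)),
\end{align*}
and invoke \Cref{lem_hurwitz} to control each innermost tail by a Hurwitz zeta value at $T_0-1$ with parameter $1+\epsilon_1$ (legitimate branch) or $1+\epsilon_2$ (malicious branch), while the $|\ccalN_i|$ prefactor from \Cref{lem_misp_legit} is carried along in the legitimate branch.

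The proof is then completed by applying the crude cardinality bounds $|\Nil|\le|\ccalL|$, $|\Nim|\le|\ccalM|$, and $|\ccalN_i|\le|\ccalN|$ to the outer sums, which recovers the stated right-hand side $|\ccalL|^2|\ccalN|\,\zeta(1+\epsilon_1,T_0-1)+|\ccalL||\ccalM|\,\zeta(1+\epsilon_2,T_0-1)$. The only step that requires any care is bookkeeping of prefactors — in particular, propagating the $|\ccalN_i|$ factor from \Cref{lem_misp_legit} through the summation to produce the $|\ccalL|^2|\ccalN|$ coefficient. The substantive analytical content, namely the near-geometric decay of the misclassification probabilities and their summability, has already been carried out under \Cref{asmp_th_lb} via \Cref{lem_misp_legit}, \Cref{lem_misp_mal}, and \Cref{lem_hurwitz}, so no genuine analytical obstacle remains beyond this routine combination.
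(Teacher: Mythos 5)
Your proposal is correct and follows essentially the same route as the paper: the paper likewise writes the deviation event as $\bigcup_{k\ge T_0-1}\mathcal{D}^C(k)$, i.e., a union over time and over agent--neighbor pairs of misclassification events, applies the union bound as in the proof of Proposition~\ref{prop_Tf_prob}, and sums the tails via Lemma~\ref{lem_hurwitz} to obtain the Hurwitz zeta terms. Your bookkeeping of the $|\ccalN_i|\le|\ccalN|$, $|\Nil|\le|\ccalL|$, and $|\Nim|\le|\ccalM|$ prefactors reproduces the stated coefficients exactly as intended.
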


\begin{proof}
    The event in which the actual weight matrix differs from the nominal weight matrix,  $\wl(k)\not = \barwl$ at some time $k\ge T_0-1 $ is equivalent to the event that at some time $t \ge T_0-1$, there exists an agent misclassified by a legitimate agent. Therefore, we have $\mathbb{P} ( \exists k \ge T_0-1: \wl(k) \neq \widebar{W}_{\ccalL} )=\mathbb{P} (\bigcup_{k \ge T_0-1} \mathcal{D}^C (k)) $ and the bound follows from the union bounds over the individual misclassification events over time as in Eqs. \eqref{eq_upper_prob_bound} and \eqref{eq_sum_upper_Tf} in Lemma \ref{lem_hurwitz}.
    
\end{proof}

In Lemma \ref{lem_WlnotbarW}, we bounded the deviation from the nominal consensus dynamics by expressing it as the union of misclassification events by legitimate agents. Now, we derive the deviation resulting from the difference between actual and nominal weights matrices 
($\wl(t)$ and $\barwl$ in order) of legitimate agents.
\begin{lemma} \label{lem_dev_p1}
    Suppose Assumptions \ref{leg_ag},\ref{mal_ag} and \ref{asmp_th_lb} hold. Let $\varphi_i(T_0,t)$ be a deviation experienced by a legitimate agent $i \in \mathcal{L}$, stemming from the difference between actual and nominal weights of legitimate agents over time,  defined formally as follows, for all $t\ge 0$,
    \begin{equation}\label{eq_dev}
        \varphi_i(T_0,t):= \Bigg| \Bigg [ \Tilde{x}_{\ccalL}(T_0,t)-\bigg ( \prod_{k=T_0-1}^{t-1} \barwl \bigg)\xl(0) \Bigg]_i \Bigg|.
    \end{equation}
Then, for an error level $\delta>0$, we have
\begin{align*}
    \mathbb{P} \Big ( \max_{i \in \mathcal{L}} \: \limsup_{t \rightarrow \infty} \varphi_i(T_0,t) > \frac{2\eta}
    {\delta} g_{\mathcal{L}} (T_0) \Big ) < \delta , 
\end{align*}
where $\eta \ge \sup_{i \in \mathcal{N}, t \in \mathbb{N} } |x_i(t)|$, we define 
\begin{align}
 g_{\mathcal{L}} (T_0)&:=  |\ccalL|^2 |\ccalN|\cdot \zeta(1+\epsilon_1,T_0-1) \\
 &+|\ccalL| |\ccalM|\cdot  \zeta(1+\epsilon_2,T_0-1).
 \label{eq_gl}
\end{align}
\end{lemma}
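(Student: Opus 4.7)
The plan is to separate the sample space into the event that $\wl(k) = \barwl$ for every $k \geq T_0-1$ and its complement. On this event the deviation $\varphi_i(T_0,t)$ vanishes identically in $t$ and $i$, while on the complement $\varphi_i$ admits a crude deterministic upper bound. The conclusion then follows from Markov's inequality combined with the probabilistic bound furnished by Lemma~\ref{lem_WlnotbarW}.

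First, I would introduce the event
\[
E := \bigcap_{k \ge T_0-1} \{ \wl(k) = \barwl \}.
\]
On $E$, the products $\prod_{k=T_0-1}^{t-1} \wl(k)$ and $\prod_{k=T_0-1}^{t-1} \barwl$ coincide, so $\varphi_i(T_0,t) = 0$ for all $i \in \ccalL$ and all $t \ge T_0-1$. On $E^c$ I would use that $\barwl$ is row-stochastic and that $\wl(k)$ is a non-negative submatrix of the full row-stochastic matrix $W(k)$, hence both $\prod_{k=T_0-1}^{t-1} \wl(k)$ and $\prod_{k=T_0-1}^{t-1} \barwl$ have non-negative entries with row sums at most $1$. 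Combined with $\|\xl(0)\|_\infty \le \eta$ and the triangle inequality, this gives the uniform bound $\varphi_i(T_0,t) \le 2\eta$ on $E^c$, for all $t$ and $i$.

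Putting these observations together,
\[
\max_{i \in \ccalL} \, \limsup_{t \to \infty} \varphi_i(T_0,t) \;\le\; 2\eta \cdot \mathbbm{1}_{E^c},
\]
so taking expectations and invoking Lemma~\ref{lem_WlnotbarW} to bound $\mathbb{P}(E^c) \le g_{\ccalL}(T_0)$ yields
\[
\mathbb{E}\!\left[\max_{i \in \ccalL}\, \limsup_{t \to \infty} \varphi_i(T_0,t)\right] \;\le\; 2\eta \, g_{\ccalL}(T_0).
\]
Applying Markov's inequality at the level $\tfrac{2\eta}{\delta} g_{\ccalL}(T_0)$ gives the claimed probability bound.

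I do not expect a serious obstacle: the key ingredients---the tail bound on the event that the learned weights ever disagree with the nominal ones, and the stochasticity of both matrix products---are already in hand from Lemma~\ref{lem_WlnotbarW} and the definitions of $\wl$ and $\barwl$. The only subtle point is interchanging $\max_i$ and $\limsup_t$, which is handled by noting the uniform-in-$(t,i)$ bound $2\eta\,\mathbbm{1}_{E^c}$ suffices for the one-sided inequality we need; no equality between these two operators is required.
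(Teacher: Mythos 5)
Your argument is correct, and it is a genuinely leaner route than the one the paper takes. The paper proves this lemma by refining Proposition~4 of \cite{yemini2021characterizing}: it introduces an auxiliary random variable counting how long the learned weights can still disagree with $\barwl$ after $T_0$, invokes a contraction-type bound of the form $\varphi_i(T_0,t)\le 2\eta\bigl[1-(1/n_w)^{T_f(T_0,t)}\bigr]$ from that reference, passes to the limit via the Monotone Convergence Theorem, and only then observes that the resulting expectation is at most $2\eta\,\mathbb{P}\bigl(\exists k\ge T_0-1:\wl(k)\neq\barwl\bigr)$ before applying Markov's inequality. You skip all of that intermediate machinery: since the finer $(1/n_w)^{T_f}$ structure is discarded in the final bound anyway, your all-or-nothing estimate $\max_i\limsup_t\varphi_i(T_0,t)\le 2\eta\,\mathbbm{1}_{E^c}$ with $E=\bigcap_{k\ge T_0-1}\{\wl(k)=\barwl\}$, combined with Lemma~\ref{lem_WlnotbarW} and Markov, reaches the same constant $\tfrac{2\eta}{\delta}g_\LL(T_0)$ with no monotone-convergence step and no auxiliary stopping-type variable; what the paper's heavier route buys is only the (unused here) finer dependence on how long misclassification persists. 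One small point: Markov's inequality gives $\le\delta$ rather than the strict $<\delta$ claimed in the statement (the paper's own Markov-based argument has the same cosmetic issue); within your setup strictness follows immediately by a case split, since if $g_\LL(T_0)\ge\delta$ the threshold is at least $2\eta$ and the probability is $0$, while if $g_\LL(T_0)<\delta$ then $\mathbb{P}\bigl(\max_i\limsup_t\varphi_i>0\bigr)\le\mathbb{P}(E^c)\le g_\LL(T_0)<\delta$.
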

\begin{proof}
    The proof is a refinement of Proposition 4 in \cite{yemini2021characterizing} by implementing adjustments to the lower bound on the probability of the given event, and the starting time $T_0$.
\end{proof}
The result of Lemma \ref{lem_dev_p1} is the consequence of the probability of the event we defined in Lemma \ref{lem_WlnotbarW}, and monotone properties of the upper bounds on the deviation such that we analyze the probability of the given deviation using Markov's inequality. Next, we analyze the other part of the deviation resulting from the direct involvement of malicious agents in the consensus dynamics. We define the following term for each $i\in\ccalL$,
\begin{equation}\label{eq-mal-infl}
\phi_i(T_0,t)=\eta \sum_{k=T_0-1}^{t-1} \sum_{ j \in \ccalM} \Bigg [ \Bigg ( \prod_{\ell=k+1}^{t-1}   W_{\mathcal{L}} (\ell) \Bigg ) W_{\mathcal{M}} (k) \Bigg]_{ij}.
\end{equation}
The term $\phi_i(T_0,t)$ is an upper bound on the elements of the vector of malicious influence $\phi_{\mathcal{M}}(T_0,t)$ as defined in~\eqref{eq_sep_Dy},
\begin{equation*}
|[\phi_{\mathcal{M}}(T_0,t)]_i| \le \max_{i \in \mathcal{L}} \phi_i(T_0,t).
\end{equation*}

\begin{lemma} \label{lem_dev_p2}
Suppose Assumptions \ref{leg_ag},\ref{mal_ag} and \ref{asmp_th_lb} hold. For an error level $\delta > 0$, we have the following,
\begin{equation*}
\mathbb{P} \Bigg ( \max_{i \in \mathcal{L}} \limsup_{t \to\infty } \phi_i(T_0,t) > \frac{\eta }{ \kappa \delta} g_{\mathcal{M}}(T_0) \Bigg) < \delta
\end{equation*} 
where $\eta \ge \sup_{i \in \mathcal{N}, t \in \mathbb{N} } |x_i(t)|$
\begin{equation} \label{eq_gm}
    g_{\mathcal{M}} (T_0)=  |\ccalL| \cdot |\ccalM| \cdot \zeta(1+\epsilon_2,T_0-1).
\end{equation} 
\end{lemma}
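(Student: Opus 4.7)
The plan is to adapt the strategy of Lemma~\ref{lem_dev_p1}: first derive a pathwise, nondecreasing upper bound on $\max_{i\in\LL}\phi_i(T_0,t)$ in terms of indicators of malicious misclassifications, then pass to expectations via the Monotone Convergence Theorem together with Lemma~\ref{lem_hurwitz}, and finally apply Markov's inequality.

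First, I would exploit the fact that each $W_\ccalL(\ell)$ is entrywise nonnegative and row-substochastic, so the backward product $A(k,t-1) := \prod_{\ell=k+1}^{t-1} W_\ccalL(\ell)$ inherits these properties and in particular has all entries in $[0,1]$. Expanding the summand in~\eqref{eq-mal-infl} componentwise and using $[A(k,t-1)]_{i\ell}\le 1$,
\begin{align*}
\sum_{j\in\ccalM}\bigl[A(k,t-1)\, W_\ccalM(k)\bigr]_{ij}
&= \sum_{\ell\in\LL}[A(k,t-1)]_{i\ell}\bigl[W_\ccalM(k)\mathbf{1}\bigr]_\ell \\
&\le \sum_{\ell\in\LL}\bigl[W_\ccalM(k)\mathbf{1}\bigr]_\ell.
\end{align*}
Next, by~\eqref{eq_wij} every entry $[W_\ccalM(k)]_{\ell j}$ is either $0$ or $1/n_{w_\ell}(k)\le 1/\kappa$, and is nonzero precisely when the malicious neighbor $j$ lies in $\hat\ccalN_\ell(k)$. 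Combining these observations produces the pathwise bound
\[
\max_{i\in\LL}\phi_i(T_0,t) \;\le\; \frac{\eta}{\kappa}\sum_{k=T_0-1}^{t-1}\sum_{\ell\in\LL}\sum_{m\in\Nim}\mathbb{I}\{m\in\hat\ccalN_\ell(k)\}.
\]

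The right-hand side is nonnegative and nondecreasing in $t$, and by Lemma~\ref{lem_Tf} it remains finite almost surely because misclassifications cease after the random finite time $T_f$. Since $\LL$ is finite, the maximum over $i$ commutes with the $\limsup$. Applying the Monotone Convergence Theorem and then the per-pair summability provided by Lemma~\ref{lem_hurwitz} yields
\begin{align*}
\mathbb{E}\!\left[\max_{i\in\LL}\limsup_{t\to\infty}\phi_i(T_0,t)\right]
&\le \frac{\eta}{\kappa}\sum_{\ell\in\LL}\sum_{m\in\Nim}\sum_{k=T_0-1}^{\infty}\mathbb{P}(m\in\hat\ccalN_\ell(k)) \\
&\le \frac{\eta}{\kappa}\,|\LL|\,|\M|\,\zeta(1+\epsilon_2,T_0-1) \;=\; \frac{\eta}{\kappa}\,g_{\ccalM}(T_0).
\end{align*}
An application of Markov's inequality at level $\tfrac{\eta}{\kappa\delta}g_{\ccalM}(T_0)$ then delivers the stated bound.

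The main subtlety lies in justifying the interchange of $\limsup$, maximum, and expectation; this is handled cleanly by the pathwise monotonicity of the indicator-sum upper bound together with the almost sure finiteness of $T_f$ from Lemma~\ref{lem_Tf}. All remaining steps reduce to the row-substochastic control of the products $A(k,t-1)$, the $1/\kappa$ cap on the entries of $W_\ccalM(k)$ coming from~\eqref{eq_wij}, and the near-exponential decay of the malicious-misclassification probabilities from Lemma~\ref{lem_misp_mal} as packaged in Lemma~\ref{lem_hurwitz}.
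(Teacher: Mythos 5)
Your proof is correct and follows essentially the same route as the paper: bound $\phi_i(T_0,t)$ by an $\frac{\eta}{\kappa}$-scaled sum of malicious-misclassification indicators using the $1/\kappa$ cap on the entries of $W_{\mathcal{M}}(k)$ and the (sub)stochasticity of the products of $W_{\mathcal{L}}(\ell)$, exchange limit and expectation by monotone convergence, invoke Lemma~\ref{lem_hurwitz}, and finish with Markov's inequality. The only difference is bookkeeping: the paper introduces the $|\LL|$ factor via a union bound over agents at the probability level (Markov applied per agent), whereas you absorb it into a single pathwise bound on $\max_{i\in\LL}\phi_i$ by summing indicator contributions over all legitimate agents and apply Markov once—this is equivalent in outcome and, if anything, slightly cleaner since it ties each entry $[W_{\mathcal{M}}(k)]_{\ell j}$ to the correct trusted neighborhood $\hat{\ccalN}_\ell(k)$.
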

\begin{proof}
We rewrite the event as the union over the set of agents,
\begin{align*}
&\mathbb{P} \Bigg ( \max_{i \in \mathcal{L}} \limsup_{t \to \infty } \phi_i(T_0,t) >\frac{\eta}{ \kappa \delta} g_{\mathcal{M}}(T_0) \Bigg )\\
&= \mathbb{P} \Bigg ( \bigcup_{i \in \mathcal{L}} \limsup_{t \to\infty } \phi_i(T_0,t) > \frac{\eta}{ \kappa \delta} g_{\mathcal{M}}(T_0) \Bigg ),
\end{align*}
The union bound and Markov's inequality provide the upper bound, as below,
\begin{align*}
&\mathbb{P} ( \max_{i \in \mathcal{L}} \limsup_{t \to \infty } \phi_i(T_0,t) > \frac{\eta}{ \kappa \delta} g_{\mathcal{M}}(T_0))\\
&\le\sum_{ i\in \mathcal{L}} \mathbb{P} ( \limsup_{t \to\infty } \phi_i(T_0,t) > \frac{\eta }{ \kappa \delta} g_{\mathcal{M}}(T_0)) \\
& \le\frac{ \delta\kappa |\mathcal{L}|\,\mathbb{E}(\limsup_{t \to \infty } \phi_i(T_0,t)) }{\eta g_{\mathcal{M}}(T_0)}.
\end{align*}
Next, we derive an upper bound for the expectation $\mathbb{E}(\limsup_{t \to\infty } \phi_i(T_0,t))$, starting with the upper bound for the random variable $\phi_i(T_0,t)$ in~\eqref{eq-mal-infl}), as follows,
\begin{align*}
\phi_i(T_0,t)
&\le \eta \sum_{k=T_0-1}^{t-1} \sum_{ j \in \ccalM} \frac{1}{\kappa} \Bigg(\sum_{n\in \ccalL} \tilde{w}_{in} \Bigg)
\end{align*}
where $\tilde{w}_{in}=[\prod_{l=k+1}^{t-1}   W_{\mathcal{L}} (\ell)]_{in}$ for $(i,n) \in \ccalL \times \ccalL$, and we used the fact that $[W_{\mathcal{M}} (k) ]_{ij} \le 1/\kappa$ for any $(i,j) \in \ccalL \times \ccalM$. The product of row-(sub)stochactic matrices is still row-(sub)stochactic, giving the property $\sum_{n\in \ccalL} \tilde{w}_{in} \le 1$. 
we rewrite the upper bound on $\bar{\phi}_i (T_0,t)$,  with the indicator variable $\mathbbm{1}_{\{j \in \hat{\ccalN}_i (t)\}}$, equal to $1$ when a malicious agent $j$ is included in the trusted neighborhood and otherwise $0$,
\begin{align*}
\phi_i(T_0,t)&
&\le  \frac{\eta}{\kappa} \sum_{k=T_0-1}^{t-1} \sum_{ j \in \ccalM} \mathbbm{1}_{\{j \in \hat{\ccalN}_i (t)\}} 
:=\bar{\phi}_i (T_0,t).
\end{align*}
The upper bound still holds for the expectation of limit superior in both sequences, i.e.,
\begin{equation*}
\mathbb{E} (\limsup_{t \to\infty } \phi_i(T_0,t)) \le \mathbb{E} (\limsup_{t \to\infty } \bar{\phi}_i(T_0,t)).
\end{equation*}
Since the random variables $\{\bar{\phi}_i (T_0,t)\}_{t\ge T_0}$ form a nonnegative and nondecreasing sequence as $t$ increases, we utilize the Monotone Convergence Theorem, and therefore have,
\begin{align*}
\mathbb{E} (\limsup_{t \to\infty } \bar{\phi}_i(T_0,t)) &= \mathbb{E} (\lim_{t \to \infty } \bar{\phi}_i(T_0,t)) \\
&= \lim_{t \to\infty }\mathbb{E} ( \bar{\phi}_i(T_0,t)).
\end{align*}

The properties, linearity of expectation, and expectation of indicators (equal to the probabilities of events defining the indicator variables) provide the following equivalence,
\begin{align*}
\lim_{t \to\infty } \mathbb{E}(  \bar{\phi}_i(T_0,t))
&=\frac{\eta}{\kappa}
\lim_{t \to \infty } \mathbb{E}\left(
\sum_{k=T_0-1}^{t-1} \sum_{ j \in \ccalM} \mathbbm{1}_{\{j \in \hat{\ccalN}_i (t)\}}\right)\\
&=\frac{\eta}{\kappa}
\lim_{t \to \infty } \sum_{k=T_0-1}^{t-1} \sum_{ j \in \ccalM} \mathbb{P}(j \in \hat{\ccalN}_i (t)).
\end{align*} 
Misclassification probabilities of malicious agents can be bounded by Lemma \ref{lem_misp_mal} and Lemma \ref{lem_hurwitz}, 
\begin{align*}
\mathbb{E}  &(\limsup_{t \to\infty } \phi_i(T_0,t)) \le 
\frac{\eta}{\kappa} 
\lim_{t \to \infty } \sum_{k=T_0-1}^{t-1} \sum_{ m \in \ccalM} \mathbb{P}(m \in \hat{\ccalN}_i (t))  \\
&\le \frac{\eta |\mathcal{M}|}{\kappa}  \cdot  \zeta(1+\epsilon_2,T_0-1).
\end{align*} 
Thus, for any error level $\delta >0$, the following bound is concluded by Markov's inequality,
\begin{align*}
&\mathbb{P} ( \max_{i \in \mathcal{L}} \limsup_{t \to \infty } \phi_i(T_0,t) >\frac{\eta }{ \kappa \delta} g_{\mathcal{M}}(T_0)) \\
& \le\frac{\delta\kappa|\mathcal{L}|\,\mathbb{E}(\limsup_{t \to \infty } \phi_i(T_0,t)) }{\eta g_{\mathcal{M}}(T_0)}\le \delta.
\end{align*}
\end{proof}

Lemma \ref{lem_dev_p2} concludes the upper bound on the probability of maximal deviation caused directly by malicious agents. The result relies on the convergent (infinite) sum of misclassification probabilities of malicious agents. Now, we present the final characterization of the deviation from the nominal consensus process.

\begin{theorem} \label{thm_dev}
Suppose Assumptions \ref{leg_ag},\ref{mal_ag} and \ref{asmp_th_lb} hold. For an error level $\delta >0$,
we have
\begin{align*}
    \mathbb{P}( \max \: \limsup_{t \xrightarrow[]{} \infty} |[ \xl(T_0,t)- \mathbf{1} \nu^T \xl(0) ]_i | &< \Delta_{\max} (T_0,\delta) ) \cr
    \ge 1-\delta,
\end{align*}
where $\Delta_{\max} (T_0,\delta)= 2(\frac{2\eta} {\delta} g_{\mathcal{L}}  (T_0)+ \frac{\eta} {\kappa \delta} g_{\mathcal{M}}(T_0))$.
\end{theorem}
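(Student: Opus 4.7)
\begin{myproof}[Proof Sketch]
The plan is to decompose the deviation $x_\mathcal{L}(T_0,t) - \mathbf{1}\nu^T x_\mathcal{L}(0)$ into two components that are already controlled by Lemmas~\ref{lem_dev_p1} and~\ref{lem_dev_p2}, and then apply a union bound splitting the failure budget $\delta$ evenly between them. Starting from the decomposition in~\eqref{eq_con_sum}, I write
\[
x_\mathcal{L}(T_0,t) - \mathbf{1}\nu^T x_\mathcal{L}(0) = \bigl[\tilde{x}_\mathcal{L}(T_0,t) - \bar W_\mathcal{L}^{\,t-T_0+1} x_\mathcal{L}(0)\bigr] + \bigl[\bar W_\mathcal{L}^{\,t-T_0+1} - \mathbf{1}\nu^T\bigr] x_\mathcal{L}(0) + \phi_\mathcal{M}(T_0,t).
\]

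The middle term vanishes as $t \to \infty$ by the Perron--Frobenius theorem applied to the primitive stochastic matrix $\bar W_\mathcal{L}$, since $\bar W_\mathcal{L}^{\,t-T_0+1} \to \mathbf{1}\nu^T$. Consequently, for every $i \in \mathcal{L}$ the triangle inequality gives
\[
\limsup_{t\to\infty}\bigl|[x_\mathcal{L}(T_0,t) - \mathbf{1}\nu^T x_\mathcal{L}(0)]_i\bigr| \le \limsup_{t\to\infty} \varphi_i(T_0,t) + \limsup_{t\to\infty} |[\phi_\mathcal{M}(T_0,t)]_i|,
\]
where $\varphi_i$ is the quantity defined in~\eqref{eq_dev}. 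Taking the max over $i \in \mathcal{L}$ and using $|[\phi_\mathcal{M}(T_0,t)]_i| \le \phi_i(T_0,t)$ (as noted right before Lemma~\ref{lem_dev_p2}) yields
\[
\max_{i\in\mathcal{L}} \limsup_{t\to\infty}\bigl|[x_\mathcal{L}(T_0,t) - \mathbf{1}\nu^T x_\mathcal{L}(0)]_i\bigr| \le \max_{i\in\mathcal{L}} \limsup_{t\to\infty} \varphi_i(T_0,t) + \max_{i\in\mathcal{L}} \limsup_{t\to\infty} \phi_i(T_0,t).
\]

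Finally, I apply Lemma~\ref{lem_dev_p1} with error level $\delta/2$ and Lemma~\ref{lem_dev_p2} with error level $\delta/2$. With probability at least $1-\delta/2$ each, the two terms on the right are bounded by $\tfrac{4\eta}{\delta}g_\mathcal{L}(T_0)$ and $\tfrac{2\eta}{\kappa\delta}g_\mathcal{M}(T_0)$, respectively. A union bound then shows that, with probability at least $1-\delta$, their sum is at most $\Delta_{\max}(T_0,\delta) = 2\bigl(\tfrac{2\eta}{\delta}g_\mathcal{L}(T_0)+\tfrac{\eta}{\kappa\delta}g_\mathcal{M}(T_0)\bigr)$, which is precisely the claim. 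The only subtle point — and arguably the main obstacle — is justifying that the middle ``nominal mixing'' term vanishes uniformly in $i$ so that it can be absorbed into the $\limsup$, which follows directly from the spectral structure of $\bar W_\mathcal{L}$ guaranteed by Assumption~\ref{leg_ag}(1) and the definition in~\eqref{eq_barwij}.
\end{myproof}
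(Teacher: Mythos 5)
Your proposal is correct and follows essentially the same route as the paper, whose proof (deferred to Theorem 2 of \cite{yemini2021characterizing}) rests on exactly this decomposition: the $\varphi_i(T_0,t)$ term from \eqref{eq_dev}, the nominal mixing term $(\bar W_\mathcal{L}^{\,t-T_0+1}-\mathbf{1}\nu^T)x_\mathcal{L}(0)$ that vanishes by primitivity of $\bar W_\mathcal{L}$, and the malicious influence bounded by $\phi_i(T_0,t)$. Invoking Lemmas~\ref{lem_dev_p1} and~\ref{lem_dev_p2} each with error level $\delta/2$ and a union bound is precisely how the paper assembles $\Delta_{\max}(T_0,\delta)$, so nothing is missing.
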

\begin{proof}
    We conclude the theorem along the lines of Theorem 2 in \cite{yemini2021characterizing} incorporating the aforementioned modifications in the derived bounds, which are based on Lemmas \ref{lem_dev_p1}-\ref{lem_dev_p2}.
\end{proof}
In this section, we formally identified the bounds on the deviation from the nominal consensus process. Theorem \ref{thm_dev} follows from combining each part of the deviation derived in (Lemmas \ref{lem_dev_p1}-\ref{lem_dev_p2}). This result indicates that the starting time $T_0$ depends on the algorithmic parameters, and as agents start later (with increasing $T_0$), they have tighter and smaller bounds as a function of starting time $T_0$, and average trust difference $E_{\mathcal{L}}-E_{\mathcal{M}}$ between legitimate and malicious transmissions with the lower bound on the attack rate $\bar{p}$ in addition to the sequence of threshold parameters $\xi_{t}$ on the probability of deviations under the specified conditions.

\subsection{Convergence Rate}\label{sec:conv_rate}
For the analysis in this part, we firstly define a norm with respect to the stochastic vector $\nu \in ~\mathbb{R}^{|\ccalL|}$, $
    ||z||_{\nu} :=~ \sqrt{\sum_{i=1}^{\mathcal{|L|}} \nu_i z_i^2}.
$
\begin{theorem}[Convergence Rate of Consensus] \label{thm_rate}
   Suppose Assumptions \ref{leg_ag},\ref{mal_ag} and \ref{asmp_th_lb} hold. Then, we have, for  any $\tau \in \{ T_0-1,\cdots,t\}$
\begin{equation}
    || \xl (T_0,t)-  \mathbf{1}z||_\nu \le 2\eta(\tau-T_0+2)\rho_2^{t-\tau}.
\end{equation}
with a probability greater than,
\begin{align}
    &1-(|\ccalL|^2 |\ccalN|\cdot \zeta(1+\epsilon_1,T_0-1) +|\ccalL| \cdot |\ccalM| \zeta(1+\epsilon_2,T_0-1)).
\end{align}
where $\eta \ge \sup_{i \in \mathcal{N}, t \in \mathbb{N} } |x_i(t)|$, and $\nu \in \reals^{|\ccalL|}$ is a stochastic Perron vector of the matrix $\barwl$, \textit{i.e.,} $\nu ^T \barwl = \nu^T$.
\end{theorem}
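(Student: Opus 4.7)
The plan is to condition on the high-probability event $\ccalE = \{\wl(k) = \barwl \text{ for all } k \ge T_0 - 1\}$, which by Lemma \ref{lem_WlnotbarW} has probability at least $1 - g_{\ccalL}(T_0)$, matching the failure probability in the statement; correct classification of every neighbor throughout the horizon also forces $\wm(k) = \mathbf{0}$ on $\ccalE$. Consequently, on this event the dynamics in \eqref{eq_con_dy} reduce to the purely nominal process $\xl(t+1) = \barwl\, \xl(t)$ for all $t \ge T_0 - 1$, with no direct contribution from malicious agents, and the entire analysis below is carried out on $\ccalE$.

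Next, I would identify the asymptotic value $z$ on $\ccalE$. Since $\barwl$ is a primitive row-stochastic matrix (Assumption \ref{leg_ag}(1) yields irreducibility of the induced legitimate subgraph, and the self-loops in \eqref{eq_barwij} provide aperiodicity) with stochastic left Perron vector $\nu$, we have $\lim_{k \to \infty} \barwl^k = \mathbf{1}\nu^T$; combined with Corollary \ref{cor_con} this gives $z = \nu^T \xl(0)$ almost surely on $\ccalE$. The invariance $\nu^T \barwl = \nu^T$ then yields $\nu^T \xl(k) = z$ for every $k \ge T_0 - 1$, so each disagreement vector $\xl(k) - \mathbf{1}z$ lies in the hyperplane $\{v \in \reals^{|\ccalL|} : \nu^T v = 0\}$, which is the orthogonal complement of $\mathbf{1}$ under the $\nu$-weighted inner product.

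Using $\xl(t) = \barwl^{t-\tau}\xl(\tau)$ on $\ccalE$, I would peel off the contraction factor
\[
\|\xl(t) - \mathbf{1}z\|_\nu = \|\barwl^{t-\tau}(\xl(\tau) - \mathbf{1}z)\|_\nu \le \rho_2^{t-\tau}\|\xl(\tau) - \mathbf{1}z\|_\nu,
\]
where $\rho_2$ is the operator norm of the restriction of $\barwl$ to $\{v : \nu^T v = 0\}$ (that is, the second-largest singular value of $\barwl$ in the $\nu$-weighted norm). Since $|[\xl(\tau)]_i| \le \eta$, $|z| \le \eta$, and $\|\mathbf{1}\|_\nu = 1$, the triangle inequality yields $\|\xl(\tau) - \mathbf{1}z\|_\nu \le 2\eta$, which would combine with the above contraction to give the stated bound modulo the polynomial prefactor.

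The main obstacle is accounting for the polynomial factor $(\tau - T_0 + 2)$. If $\barwl$ were self-adjoint in the $\nu$-weighted inner product, the cleaner bound $2\eta\,\rho_2^{t-\tau}$ would suffice, but $\barwl$ is generally non-reversible because the weights in \eqref{eq_barwij} depend on the local legitimate degree $|\Nil|$; its restriction to $\{v : \nu^T v = 0\}$ need not be normal and may admit nontrivial Jordan structure, producing a linear-in-window growth factor in the iterated operator norm via standard Schur-decomposition estimates. I would pin down the exact prefactor by expanding $\barwl^{t-\tau}$ in Schur form on the disagreement subspace, controlling the off-diagonal nilpotent part by the window length $\tau - T_0 + 2$, and cross-checking against the analogous convergence-rate derivation in \cite{yemini2021characterizing} to confirm that the resulting bound on $\|\xl(\tau) - \mathbf{1}z\|_\nu$ is indeed at most $2\eta(\tau - T_0 + 2)$.
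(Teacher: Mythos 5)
Your core argument is correct, and it reaches the stated bound by a genuinely different (and in fact shorter) route than the paper. The paper's proof is inherited from Theorem~3 of \cite{yemini2021characterizing}: there one conditions only on correct classification from time $\tau$ onward, keeps the decomposition $\xl(T_0,t)=\Txl(T_0,t)+\phi_{\ccalM}(T_0,t)$ of \eqref{eq_con_sum}, bounds the legitimate part by $2\eta\rho_2^{t-\tau}$ and the malicious part by $(\tau-T_0+1)\,2\eta\rho_2^{t-\tau}$ --- the sum over the $\tau-T_0+1$ pre-classification steps in $\phi_{\ccalM}$ is exactly where the prefactor $(\tau-T_0+2)$ comes from --- and then pairs this with a classification-probability bound. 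You instead condition on no misclassification over the whole horizon $k\ge T_0-1$, whose probability is exactly what Lemma~\ref{lem_WlnotbarW} controls and matches the stated failure probability; on that event $\wm(k)=\mathbf 0$, the dynamics are purely nominal, and the contraction alone gives $2\eta\rho_2^{\,t-T_0+1}\le 2\eta(\tau-T_0+2)\rho_2^{t-\tau}$, so the polynomial factor is never needed. What you lose relative to the paper's route is the $\tau$-dependent trade-off between contraction horizon and conditioning probability that the original theorem in \cite{yemini2021characterizing} exploits (here the stated probability does not depend on $\tau$, so nothing is lost for this statement). One small point of hygiene: condition on the no-misclassification event $\bigcap_{k\ge T_0-1}\mathcal D(k)$ rather than literally on $\{\wl(k)=\barwl\}$, since with $\kappa$ large a misclassified malicious neighbor can leave $\wl(k)$ unchanged while making $\wm(k)\neq\mathbf 0$; the probability bound is identical.

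Two caveats on your last paragraph. First, your diagnosis of the prefactor is wrong: $(\tau-T_0+2)$ has nothing to do with Jordan or Schur structure of $\barwl$; it counts the malicious-influence terms accumulated on $\{T_0-1,\dots,\tau-1\}$ under the weaker conditioning used in the cited proof, and any non-normality correction to matrix powers would in any case scale with the number of applications $t-\tau$, not with the window length $\tau-T_0+2$, so the program you sketch there would not reproduce the paper's factor. This does not damage your proof, since on your event the factor is superfluous. Second, your contraction step is valid with $\rho_2$ defined as the $\nu$-weighted induced norm of $\barwl$ on the invariant subspace $\{v:\nu^Tv=0\}$ (which is indeed $\le 1$ for a row-stochastic matrix with left Perron vector $\nu$); if $\rho_2$ is instead the second-largest eigenvalue modulus as in the cited work, you need their Lemma~5 (the same black box the paper invokes) rather than a Schur-form estimate, since for non-normal $\barwl$ the per-step eigenvalue contraction in $\|\cdot\|_\nu$ is not automatic.
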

\begin{proof}
    The result is a restatement of Theorem 3 in \cite{yemini2021characterizing} reflecting the changes in the lower bound on the probability.
\end{proof}
\begin{corollary}
    Suppose Assumptions \ref{leg_ag},\ref{mal_ag} and \ref{asmp_th_lb} hold. For any $\tau \in \{T_0-1,\cdots, t\}$ we have,
      \begin{align}
        &\mathbb{E} (  || \xl (T_0,t)-  \mathbf{1}z||_\nu ) \le \min_{\tau \in \{T_0-1,\cdots, t\}} 2\eta(\tau-T_0+2)\rho_2^{t-\tau} \nonumber\\
        &+ 2\eta (|\ccalL|^2 |\ccalN|\cdot \zeta(1+\epsilon_1,T_0-1) +|\ccalL| \cdot |\ccalM| \zeta(1+\epsilon_2,T_0-1)).
    \end{align}
\end{corollary}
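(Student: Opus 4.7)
The plan is to bound the expectation by splitting the probability space according to the high-probability event identified in Theorem~\ref{thm_rate}. For any fixed $\tau \in \{T_0-1,\ldots,t\}$, let $A_\tau$ denote the event that $\|\xl(T_0,t)-\mathbf{1}z\|_\nu \le 2\eta(\tau-T_0+2)\rho_2^{t-\tau}$. By Theorem~\ref{thm_rate}, we have $\mathbb{P}(A_\tau^c) \le p(T_0)$, where $p(T_0) := |\ccalL|^2 |\ccalN|\cdot \zeta(1+\epsilon_1,T_0-1) +|\ccalL| \cdot |\ccalM| \zeta(1+\epsilon_2,T_0-1)$.

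The next step is to establish a worst-case uniform upper bound valid on $A_\tau^c$. Since $|x_i(T_0,t)| \le \eta$ for all legitimate $i$ and $t$, and $z$ is the almost-sure limit of such bounded values (by Corollary~\ref{cor_con}), we have $|z|\le \eta$ almost surely, so $|[\xl(T_0,t)-\mathbf{1}z]_i| \le 2\eta$ for every $i \in \ccalL$. Because $\nu$ is a stochastic vector,
\begin{equation*}
\|\xl(T_0,t)-\mathbf{1}z\|_\nu = \sqrt{\sum_{i\in\ccalL}\nu_i\,([\xl(T_0,t)-\mathbf{1}z]_i)^2} \le 2\eta.
\end{equation*}

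Combining these two bounds via the law of total expectation gives, for every admissible $\tau$,
\begin{align*}
\mathbb{E}(\|\xl(T_0,t)-\mathbf{1}z\|_\nu)
&\le 2\eta(\tau-T_0+2)\rho_2^{t-\tau}\,\mathbb{P}(A_\tau) + 2\eta\,\mathbb{P}(A_\tau^c)\\
&\le 2\eta(\tau-T_0+2)\rho_2^{t-\tau} + 2\eta\,p(T_0).
\end{align*}
Since the inequality holds for every $\tau \in \{T_0-1,\ldots,t\}$ and the failure-probability term $2\eta\,p(T_0)$ does not depend on $\tau$, I can take the minimum over $\tau$ on the right-hand side to obtain the stated bound.

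No serious obstacle arises here: the only subtlety is justifying the almost-sure bound $|z|\le\eta$ so that the trivial fallback $2\eta$ is valid on $A_\tau^c$, which follows from the boundedness of the iterates $x_i(T_0,t)$ and the almost-sure convergence in Corollary~\ref{cor_con}. Everything else is a direct application of total expectation and the high-probability statement of Theorem~\ref{thm_rate}.
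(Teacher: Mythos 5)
Your proposal is correct and follows essentially the same route as the paper: the paper's proof likewise applies the law of total expectation, conditioning on the high-probability event underlying Theorem~\ref{thm_rate} (there phrased as the weight matrices equaling the nominal $\barwl$ from time $\tau$ onward), bounds the conditional expectation on that event by $2\eta(\tau-T_0+2)\rho_2^{t-\tau}$, and uses the trivial bound $2\eta$ (valid since $|x_i(t)|\le\eta$ and hence $|z|\le\eta$ almost surely) weighted by the failure probability on the complement. Your choice to condition directly on the event that the norm bound of Theorem~\ref{thm_rate} holds, rather than on the weight-matrix event itself, is an immaterial repackaging of the same argument, and taking the minimum over $\tau$ at the end is justified exactly as you state.
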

\begin{proof}
 The result follows from the law of total expectation and the expectations conditioned on the event that the weight matrices $\wl(k)$  become equal to the nominal within a finite time horizon for $k \in \{T_0-1,\cdots, t\},$ as shown in Corollary 3 of \cite{yemini2021characterizing}.
\end{proof}

In this section, we formally analyze the finite-time performance of the consensus process. First, we establish the probability of the convergence rate. Then, we characterize the expected deviation from the consensus point within a finite time.
\section{Numerical Studies}
\begin{figure*}
	\centering
	\begin{tabular}{ccc}
	\includegraphics[width=.3\linewidth]{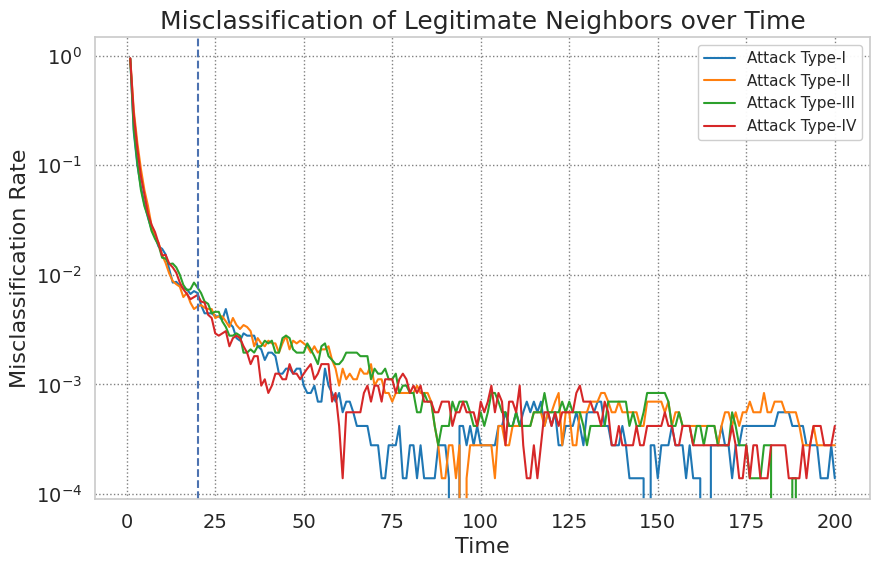}
	\includegraphics[width=.3\linewidth]{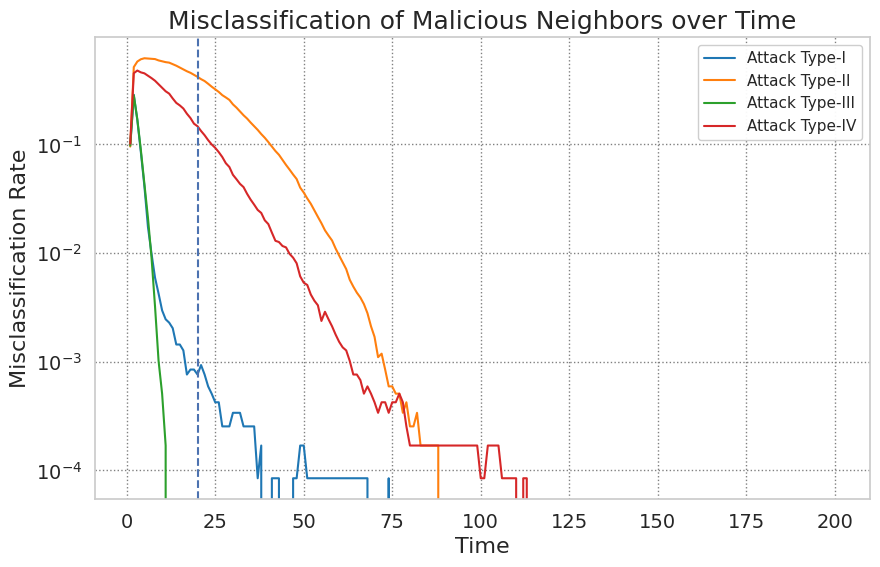}
    	\includegraphics[width=.3\linewidth]{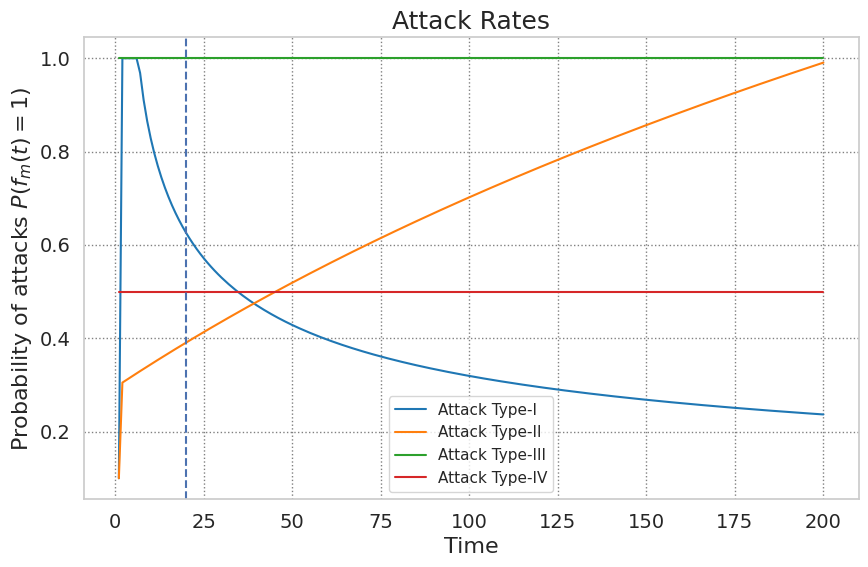}
	\end{tabular}
	\caption{ \footnotesize Average misclassification errors over and attack rates over $100$ runs. (Left) Average misclassification rates of legitimate neighbors (Middle) misclassification rates of malicious neighbors (Right) Average probability of attacks  $\frac{1}{|\ccalM|}\mathbb{P}(f_m(t)=1) $ over time.} \vspace{-1pt}
	\label{fig_mis}
	\vspace{-12pt}
\end{figure*}

\begin{figure}
	\centering
	\begin{tabular}{cc}
	\includegraphics[width=.5\linewidth]{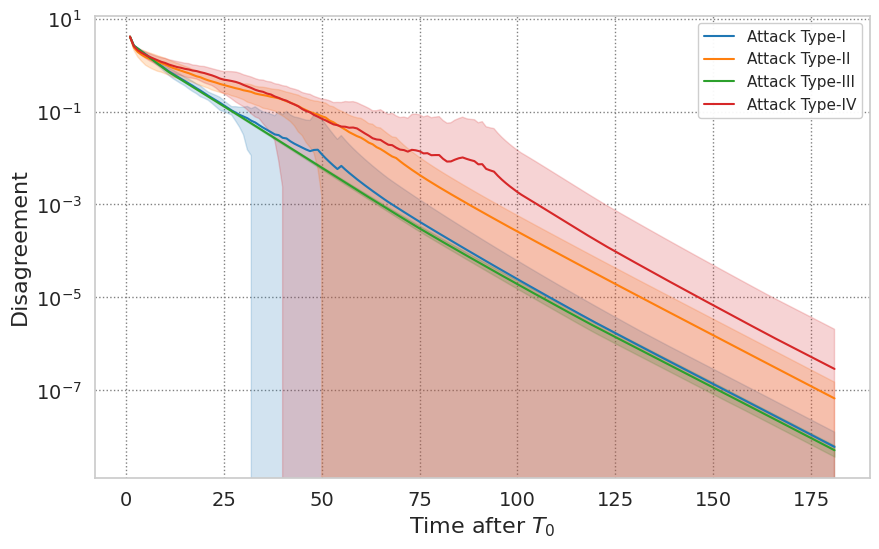}
	\includegraphics[width=.5\linewidth]{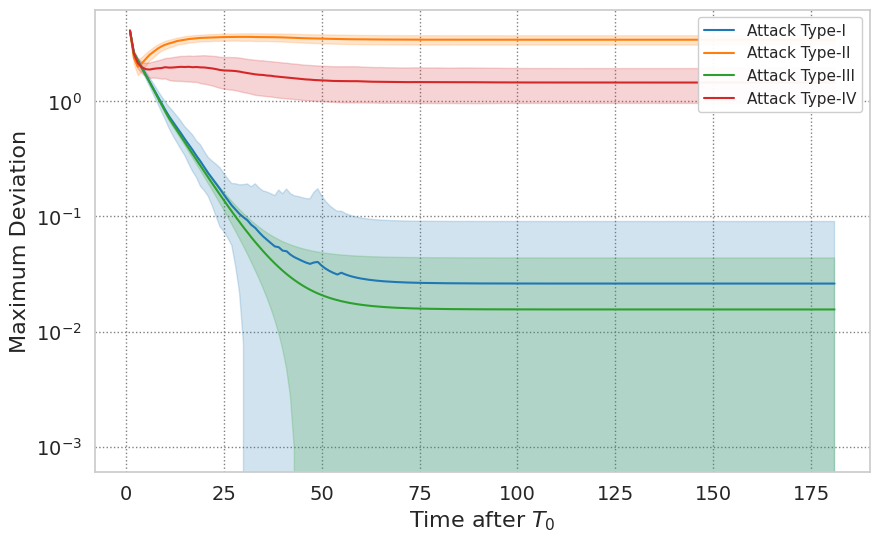}
	\end{tabular}
	\caption{ \footnotesize Trustworthy consensus over $100$ runs. (Left)   Maximum distance to average of agents' values $\max_{i \in \ccalL} |x_i(t)-\frac{1}{|\ccalL|}\sum_{l \in \ccalL} x_l (t)|$ (Right) Maximum deviation from the nominal consensus value. $\max_{i \in \ccalL} |x_i(t)-\mathbf{1} \bar{v} x_\ccalL (0)|
    $, where $\bar{v}$ is the left eigenvector of $\barwl$ corresponding to eigenvalue $1$. } \vspace{-1pt}
	\label{fig_conv}
	\vspace{-12pt}
\end{figure}

In this section, we evaluate the effectiveness of the proposed resilient consensus algorithm in countering different types of malicious attacks using numerical experiments. In this setting, we have $20$ legitimate agents and $30$ malicious agents where the majority of agents are malicious. The communication graph is constructed by first forming a cycle among the legitimate agents, followed by the addition of $20$ pairs of legitimate agents to be assigned edges between them. Malicious agents establish random connections with others, with a probability of $0.2$, ensuring that each is linked to at least one legitimate agent. We generate the communication graph once and keep it fixed during the experiments. The initial values of agents are drawn from the uniform distribution, within the interval $[-4,4]$ ($\eta=4$). Similarly, trust observations for legitimate and malicious transmissions are sampled from uniform distributions with the intervals $[0.4,1]$ and $[0,0.6]$ respectively so that the expected values of transmissions become $E_{\ccalL}=0.7$ and $E_{\ccalM}=0.3$ in order.


We devise four different attack scenarios with time-varying attack probabilities. In the first attack scheme, malicious agents $m \in \ccalM$ only use their former attack history and decide the probability of the next attack $f_m (t) \in \{0,1\}$ using softmax function as follows,

\begin{align} \label{eq_attack_1}
    &\mathbb{P}(f_m(t)=1 \mid \ccalF (t-1))\\
    &=\min \left (p_m(t)+ \exp\left(- r_1 \sum_{k=0}^{t-1} f_m(k)\right),1\right),
\end{align}

where $r_1$ is a constant set to $0.8$, and the sum of lower bounds satisfy $\sum_{k=0}^t p_m(k)= \sqrt{(1+\epsilon_2)(t+1)\ln(t+1)}$ with $\epsilon_2=5$ for all malicious agents $m \in \ccalM$ as per Assumption \ref{asmp_th_lb}. In words, agents reduce their attack rates if they attack more in the past. We correspondingly define the second attack model such that agents increase their attack rates as time increases, 
\begin{align} \label{eq_attack_2}
    \mathbb{P}(f_m(t)=1 \mid \ccalF (t-1))= \min \left (\bar{p}+ \log(1+\exp(-r_2 t )) ,1\right),
\end{align}
where we choose $r_2=0.005$. In both cases the $\min$ functions ensure that the probabilities do not exceed the value $1$. Similarly, in the last model, we assume that malicious agents use independent and identical probabilistic attacks over time with the uniform lower bound $\bar{p}=0.3$ for all time steps $t\ge0$.

In the other two attack models, we consider stationary attack probabilities $\mathbb{P}(f_m(t)=1)=0.5$, and the constant attack model in which agents always attack, implying $\mathbb{P}(f_m(t)=1)=1$ for all times $t\ge0$. Fig. ~\ref{fig_mis} (Right) indicates the average attack probabilities of malicious agents over time. 

In all of the attack models, they send the boundary value $x_m(t)=\eta$ into the consensus process to increase deviation $(f_m(t)=1)$. When they do not attack $(f_m(t)=0)$, they follow a standard consensus process $x_m(t)= w_{mm}x_m(t-1)+\sum_{j \in \mathcal{N}_m} w_{mj} x_j(t-1)$ with the static weights satisfying $w_{mm}>0$, $w_{mj}>0$ and $w_{mm}+ \sum_{j \in \mathcal{N}_m}=1$.

We investigate the four attack scenarios as described in Fig. ~\ref{fig_mis} (Right).  We choose $T_0=25$ as the starting time of the consensus process and threshold parameters $\xi_t = \sqrt{(1+\epsilon_1)(t+1)\ln(t+1)}$ with $\epsilon_1=0.005$ for all $t \ge 0$. Legitimate agents use $\kappa=10$ to form weights as described in Eq. \eqref{eq_wij}.

Fig. ~\ref{fig_mis} exhibits the average misclassification errors and the attack rates. Both of the classification errors converge close to $0$ by the final time $t=200$. In Fig. ~\ref{fig_mis}(Left), the rate of convergence for the misclassification of legitimate agents stays close to each other over time, similar to the conclusion of Lemma \ref{lem_misp_legit}, which provides an upper bound independent of attack rates. Conversely, the role of attack probabilities highly affects the misclassification of malicious agents as in Fig. ~\ref{fig_mis}(Middle). The constants attack rate (green line) is quickly detected around time $t=20$ on average, while malicious agents using the attack models with the constant attack probability $\mathbb{P}(f_m(t)=1)=0.5$ (red line) and with the increasing probabilities (orange line) stay longer in the system on average. The attack model in Eq. \eqref{eq_attack_1} (blue line) has a slowing slope as the attack rates decrease over time. In the comparison of different modes of attacks, the average attack frequency and when to attack with higher rates seem to have a role in the misclassification of malicious agents. Lower values of attack probabilities tend to be detected later. Similarly, if malicious agents have higher attack rates at the beginning, they also have lower rates of misclassification at the beginning.


Fig. ~\ref{fig_conv} shows the convergence performance of the consensus process averaged over $100$ trials. We illustrate the differences between agents' values (Left) and also the deviation from the nominal consensus value over time (Right) on the log scale. Fig. ~\ref{fig_conv} (Left) indicates the agents (nearly) reach consensus around the final time step $t=200$ in all four scenarios.  Fig. ~\ref{fig_conv} (Right) confirms the existence of deviation from the nominal consensus value. Still, the experiments show that the deviation is bounded, and agents' deviations do not fluctuate over time, especially after the time step $t=50$. Further, we see the parallels between Figs. ~\ref{fig_mis} and~\ref{fig_conv}. The attack types detected later on average have a higher impact on the deviation and the consensus error.  This effect is especially observed in Fig.~\ref{fig_conv} (Right) in the case of deviation. However, the differences between the values of legitimate agents still quickly go to $0$ in all cases, which concludes that the proposed consensus dynamics is more robust to these attacks in terms of the value of disagreement. Hence, the numerical experiments align with the theoretical findings (Corollaries \ref{cor_con}-\ref{cor_con_exp} and Theorems \ref{thm_dev}-\ref{thm_rate}). 

\section{Conclusion}
In this paper, we investigated the multi-agent trustworthy consensus problem where agents exchange their values over undirected and static communication networks. Given the availability of stochastic trust observations, we considered the scenarios with dependent sequences of malicious transmissions and trust observations. We established near-geometric decaying misclassification errors using the detection algorithm based on pairwise comparisons of accumulated trust values. This also ensured that after some finite and random time, all agents are correctly classified. Under almost sure correct classification, we also showed that agents reach a consensus almost surely and in expectation asymptotically. For a given probability of failure, we identified the maximal deviation from the nominal consensus process, in terms of the observation window and the number of legitimate and malicious agents, together with the parameters of the detection algorithm. We also derived the convergence rates in finite time. Numerical experiments illustrated the convergence of the consensus process and the deviation under different settings, together with correct classification of agents. 
\bibliographystyle{IEEEtran}
\bibliography{bibliography}

\begin{thebibliography}{10}
\providecommand{\url}[1]{#1}
\csname url@samestyle\endcsname
\providecommand{\newblock}{\relax}
\providecommand{\bibinfo}[2]{#2}
\providecommand{\BIBentrySTDinterwordspacing}{\spaceskip=0pt\relax}
\providecommand{\BIBentryALTinterwordstretchfactor}{4}
\providecommand{\BIBentryALTinterwordspacing}{\spaceskip=\fontdimen2\font plus
\BIBentryALTinterwordstretchfactor\fontdimen3\font minus \fontdimen4\font\relax}
\providecommand{\BIBforeignlanguage}[2]{{%
\expandafter\ifx\csname l@#1\endcsname\relax
\typeout{** WARNING: IEEEtran.bst: No hyphenation pattern has been}%
\typeout{** loaded for the language `#1'. Using the pattern for}%
\typeout{** the default language instead.}%
\else
\language=\csname l@#1\endcsname
\fi
#2}}
\providecommand{\BIBdecl}{\relax}
\BIBdecl

\bibitem{NedicOzdaglar}
A.~Nedi{\'c} and A.~Ozdaglar, ``Distributed subgradient methods for multiagent optimization,'' \emph{IEEE Trans. Autom. Control}, vol.~54, no.~1, 2009.

\bibitem{nedic2018distributed}
A.~Nedi{\'c} and J.~Liu, ``Distributed optimization for control,'' \emph{Annual Review of Control, Robotics, and Autonomous Systems}, vol.~1, pp. 77--103, 2018.

\bibitem{andreasson2014distributed}
M.~Andreasson, D.~V. Dimarogonas, H.~Sandberg, and K.~H. Johansson, ``Distributed control of networked dynamical systems: Static feedback, integral action and consensus,'' \emph{IEEE Transactions on Automatic Control}, vol.~59, no.~7, pp. 1750--1764, 2014.

\bibitem{meng2017distributed}
W.~Meng, Q.~Yang, J.~Sarangapani, and Y.~Sun, ``Distributed control of nonlinear multiagent systems with asymptotic consensus,'' \emph{IEEE Transactions on Systems, Man, and Cybernetics: Systems}, vol.~47, no.~5, pp. 749--757, 2017.

\bibitem{garin2010survey}
F.~Garin and L.~Schenato, ``A survey on distributed estimation and control applications using linear consensus algorithms,'' in \emph{Networked control systems}.\hskip 1em plus 0.5em minus 0.4em\relax Springer, 2010, pp. 75--107.

\bibitem{schizas2007consensus}
I.~D. Schizas, A.~Ribeiro, and G.~B. Giannakis, ``Consensus in ad hoc wsns with noisy links—part i: Distributed estimation of deterministic signals,'' \emph{IEEE Transactions on Signal Processing}, vol.~56, no.~1, pp. 350--364, 2007.

\bibitem{kia2019tutorial}
S.~S. Kia, B.~Van~Scoy, J.~Cortes, R.~A. Freeman, K.~M. Lynch, and S.~Martinez, ``Tutorial on dynamic average consensus: The problem, its applications, and the algorithms,'' \emph{IEEE Control Systems Magazine}, vol.~39, no.~3, pp. 40--72, 2019.

\bibitem{bullo2006rendezvous}
J.~Cortes, S.~Martinez, and F.~Bullo, ``Robust rendezvous for mobile autonomous agents via proximity graphs in arbitrary dimensions,'' \emph{IEEE Transactions on Automatic Control}, vol.~51, no.~8, pp. 1289--1298, 2006.

\bibitem{martinez2009distributed}
S.~Martinez, ``Distributed interpolation schemes for field estimation by mobile sensor networks,'' \emph{IEEE Transactions on Control Systems Technology}, vol.~18, no.~2, pp. 491--500, 2009.

\bibitem{pasqualetti2011consensus}
F.~Pasqualetti, A.~Bicchi, and F.~Bullo, ``Consensus computation in unreliable networks: A system theoretic approach,'' \emph{IEEE Transactions on Automatic Control}, vol.~57, no.~1, pp. 90--104, 2011.

\bibitem{sundaram2010distributed}
S.~Sundaram and C.~N. Hadjicostis, ``Distributed function calculation via linear iterative strategies in the presence of malicious agents,'' \emph{IEEE Transactions on Automatic Control}, vol.~56, no.~7, pp. 1495--1508, 2010.

\bibitem{degroot1974reaching}
M.~H. DeGroot, ``Reaching a consensus,'' \emph{Journal of the American Statistical association}, vol.~69, no. 345, pp. 118--121, 1974.

\bibitem{olfati2004consensus}
R.~Olfati-Saber and R.~M. Murray, ``Consensus problems in networks of agents with switching topology and time-delays,'' \emph{IEEE Transactions on Automatic Control}, vol.~49, no.~9, pp. 1520--1533, 2004.

\bibitem{dolev1982byzantine}
D.~Dolev, ``The byzantine generals strike again,'' \emph{Journal of algorithms}, vol.~3, no.~1, pp. 14--30, 1982.

\bibitem{leblanc2013resilient}
H.~J. LeBlanc, H.~Zhang, X.~Koutsoukos, and S.~Sundaram, ``Resilient asymptotic consensus in robust networks,'' \emph{IEEE Journal on Selected Areas in Communications}, vol.~31, no.~4, pp. 766--781, 2013.

\bibitem{sundaram2019optimization}
S.~Sundaram and B.~Gharesifard, ``Distributed optimization under adversarial nodes,'' \emph{IEEE Transactions on Automatic Control}, vol.~64, no.~3, pp. 1063--1076, 2019.

\bibitem{rawat2011sensing}
A.~S. Rawat, P.~Anand, H.~Chen, and P.~K. Varshney, ``Collaborative spectrum sensing in the presence of byzantine attacks in cognitive radio networks,'' \emph{IEEE Transactions on Signal Processing}, vol.~59, no.~2, pp. 774--786, 2011.

\bibitem{gil2017guaranteeing}
S.~Gil, S.~Kumar, M.~Mazumder, D.~Katabi, and D.~Rus, ``Guaranteeing spoof-resilient multi-robot networks,'' \emph{Autonomous Robots}, vol.~41, pp. 1383--1400, 2017.

\bibitem{cavorsi2023ICRA}
M.~Cavorsi, O.~E. Akgün, M.~Yemini, A.~J. Goldsmith, and S.~Gil, ``Exploiting trust for resilient hypothesis testing with malicious robots,'' in \emph{2023 IEEE International Conference on Robotics and Automation (ICRA)}, 2023, pp. 7663--7669.

\bibitem{Pierson2016}
\BIBentryALTinterwordspacing
A.~Pierson and M.~Schwager, \emph{Adaptive Inter-Robot Trust for Robust Multi-Robot Sensor Coverage}.\hskip 1em plus 0.5em minus 0.4em\relax Cham: Springer International Publishing, 2016, pp. 167--183. [Online]. Available: \url{https://doi.org/10.1007/978-3-319-28872-7_10}
\BIBentrySTDinterwordspacing

\bibitem{xiong2023securearray}
\BIBentryALTinterwordspacing
J.~Xiong and K.~Jamieson, ``Securearray: improving wifi security with fine-grained physical-layer information,'' in \emph{Proceedings of the 19th Annual International Conference on Mobile Computing \& Networking}, ser. MobiCom '13.\hskip 1em plus 0.5em minus 0.4em\relax New York, NY, USA: Association for Computing Machinery, 2013, p. 441–452. [Online]. Available: \url{https://doi.org/10.1145/2500423.2500444}
\BIBentrySTDinterwordspacing

\bibitem{gil2019consensus}
S.~Gil, C.~Baykal, and D.~Rus, ``Resilient multi-agent consensus using wi-fi signals,'' \emph{IEEE Control Systems Letters}, vol.~3, no.~1, pp. 126--131, 2019.

\bibitem{yemini2021characterizing}
M.~Yemini, A.~Nedi{\'c}, A.~J. Goldsmith, and S.~Gil, ``Characterizing trust and resilience in distributed consensus for cyberphysical systems,'' \emph{IEEE Transactions on Robotics}, vol.~38, no.~1, pp. 71--91, 2021.

\bibitem{nurellari2018detection}
E.~Nurellari, D.~McLernon, and M.~Ghogho, ``A secure optimum distributed detection scheme in under-attack wireless sensor networks,'' \emph{IEEE Transactions on Signal and Information Processing over Networks}, vol.~4, no.~2, pp. 325--337, 2018.

\bibitem{kailkhura2015detection}
B.~Kailkhura, Y.~S. Han, S.~Brahma, and P.~K. Varshney, ``Asymptotic analysis of distributed bayesian detection with byzantine data,'' \emph{IEEE Signal Processing Letters}, vol.~22, no.~5, pp. 608--612, 2015.

\bibitem{aydin2024multi}
S.~Ayd{\i}n, O.~E. Akg{\"u}n, S.~Gil, and A.~Nedi{\'c}, ``Multi-agent resilient consensus under intermittent faulty and malicious transmissions,'' in \emph{2024 IEEE 63rd Conference on Decision and Control (CDC)}, 2024, pp. 6057--6062.

\bibitem{lynch1996distributed}
N.~Lynch, \emph{Distributed Algorithms}.\hskip 1em plus 0.5em minus 0.4em\relax San Francisco, CA: Morgan Kaufmann Publishers, 1996.

\bibitem{xiao2004fast}
L.~Xiao and S.~Boyd, ``Fast linear iterations for distributed averaging,'' \emph{Systems \& Control Letters}, vol.~53, no.~1, pp. 65--78, 2004.

\bibitem{olshevsky2009convergence}
A.~Olshevsky and J.~N. Tsitsiklis, ``Convergence speed in distributed consensus and averaging,'' \emph{SIAM journal on control and optimization}, vol.~48, no.~1, pp. 33--55, 2009.

\bibitem{nedich2015convergence}
A.~Nedich \emph{et~al.}, ``Convergence rate of distributed averaging dynamics and optimization in networks,'' \emph{Foundations and Trends{\textregistered} in Systems and Control}, vol.~2, no.~1, pp. 1--100, 2015.

\bibitem{shi2015consensus}
G.~Shi, B.~D. Anderson, and K.~H. Johansson, ``Consensus over random graph processes: Network borel--cantelli lemmas for almost sure convergence,'' \emph{IEEE Transactions on Information Theory}, vol.~61, no.~10, pp. 5690--5707, 2015.

\bibitem{touri2009distributed}
B.~Touri and A.~Nedic, ``Distributed consensus over network with noisy links,'' in \emph{2009 12th International Conference on Information Fusion}.\hskip 1em plus 0.5em minus 0.4em\relax IEEE, 2009, pp. 146--154.

\bibitem{kar2008distributed}
S.~Kar and J.~M. Moura, ``Distributed consensus algorithms in sensor networks with imperfect communication: Link failures and channel noise,'' \emph{IEEE Transactions on Signal Processing}, vol.~57, no.~1, pp. 355--369, 2008.

\bibitem{li2010distributed}
T.~Li, M.~Fu, L.~Xie, and J.-F. Zhang, ``Distributed consensus with limited communication data rate,'' \emph{IEEE Transactions on Automatic Control}, vol.~56, no.~2, pp. 279--292, 2010.

\bibitem{liu2018products}
B.~Liu, W.~Lu, L.~Jiao, and T.~Chen, ``Products of generalized stochastic matrices with applications to consensus analysis in networks of multiagents with delays,'' \emph{IEEE Transactions on Cybernetics}, vol.~50, no.~1, pp. 386--399, 2018.

\bibitem{sluvciak2016consensus}
O.~Slu{\v{c}}iak and M.~Rupp, ``Consensus algorithms with state-dependent weights,'' \emph{IEEE Transactions on Signal Processing}, vol.~64, no.~8, pp. 1972--1985, 2016.

\bibitem{xiao2006distributed}
L.~Xiao, S.~Boyd, and S.~Lall, ``Distributed average consensus with time-varying metropolis weights,'' \emph{Automatica}, vol.~1, pp. 1--4, 2006.

\bibitem{wu2018consensus}
Y.~Wu, B.~Hu, and Z.-H. Guan, ``Consensus problems over cooperation-competition random switching networks with noisy channels,'' \emph{IEEE Transactions on Neural Networks and Learning Systems}, vol.~30, no.~1, pp. 35--43, 2018.

\bibitem{khalyavin2024non}
L.~Khalyavin and W.~Abbas, ``On the non-resiliency of subsequence reduced resilient consensus in multiagent networks,'' \emph{European Journal of Control}, vol.~80, p. 101120, 2024.

\bibitem{ishii2022overview}
H.~Ishii, Y.~Wang, and S.~Feng, ``An overview on multi-agent consensus under adversarial attacks,'' \emph{Annual Reviews in Control}, vol.~53, pp. 252--272, 2022.

\bibitem{yuan2024resilient}
L.~Yuan and H.~Ishii, ``Resilient average consensus with adversaries via distributed detection and recovery,'' \emph{arXiv preprint arXiv:2405.18752}, 2024.

\bibitem{wei2024resilient}
H.~Wei, K.~Zhang, H.~Zhang, and Y.~Shi, ``Resilient and constrained consensus against adversarial attacks: A distributed mpc framework,'' \emph{Automatica}, vol. 160, p. 111417, 2024.

\bibitem{akgun2023learning}
O.~E. Akgun, A.~K. Dayi, S.~Gil, and A.~Nedich, ``Learning trust over directed graphs in multiagent systems,'' in \emph{Learning for Dynamics and Control Conference}.\hskip 1em plus 0.5em minus 0.4em\relax PMLR, 2023, pp. 142--154.

\bibitem{hadjicostis2024trustworthy}
C.~N. Hadjicostis and A.~D. Dom{\'\i}nguez-Garc{\'\i}a, ``Trustworthy distributed average consensus based on locally assessed trust evaluations,'' \emph{IEEE Transactions on Automatic Control}, 2024.

\bibitem{ballotta2024role}
L.~Ballotta and M.~Yemini, ``The role of confidence for trust-based resilient consensus,'' in \emph{2024 American Control Conference (ACC)}.\hskip 1em plus 0.5em minus 0.4em\relax IEEE, 2024, pp. 2822--2829.

\bibitem{fioravanti2023secure}
C.~Fioravanti, G.~Oliva, and C.~Hadjicostis, ``Secure gossip against intermittently malicious agents,'' \emph{Systems \& Control Letters}, vol. 171, p. 105415, 2023.

\bibitem{cheng2021general}
M.~Cheng, C.~Yin, J.~Zhang, S.~Nazarian, J.~Deshmukh, and P.~Bogdan, ``A general trust framework for multi-agent systems,'' in \emph{Proceedings of the 20th International Conference on Autonomous Agents and MultiAgent Systems}, 2021, pp. 332--340.

\bibitem{yang2024enhancing}
Z.~Yang and R.~Tron, ``Enhancing security in multi-robot systems through co-observation planning, reachability analysis, and network flow,'' \emph{arXiv preprint arXiv:2403.13266}, 2024.

\bibitem{pippin2014trust}
C.~Pippin and H.~Christensen, ``Trust modeling in multi-robot patrolling,'' in \emph{2014 IEEE International Conference on Robotics and Automation (ICRA)}.\hskip 1em plus 0.5em minus 0.4em\relax IEEE, 2014, pp. 59--66.

\bibitem{mo2015physical}
Y.~Mo, S.~Weerakkody, and B.~Sinopoli, ``Physical authentication of control systems: Designing watermarked control inputs to detect counterfeit sensor outputs,'' \emph{IEEE Control Systems Magazine}, vol.~35, no.~1, pp. 93--109, 2015.

\bibitem{krotofil2015process}
M.~Krotofil, J.~Larsen, and D.~Gollmann, ``The process matters: Ensuring data veracity in cyber-physical systems,'' in \emph{Proceedings of the 10th ACM Symposium on Information, Computer and Communications Security}, 2015, pp. 133--144.

\bibitem{cavorsi2024exploiting}
M.~Cavorsi, O.~E. Akg{\"u}n, M.~Yemini, A.~J. Goldsmith, and S.~Gil, ``Exploiting trust for resilient hypothesis testing withmalicious robots,'' \emph{IEEE Transactions on Robotics}, 2024.

\bibitem{yemini2022resilentopt}
M.~Yemini, A.~Nedić, S.~Gil, and A.~J. Goldsmith, ``Resilience to malicious activity in distributed optimization for cyberphysical systems,'' in \emph{2022 IEEE 61st Conference on Decision and Control (CDC)}, 2022, pp. 4185--4192.

\bibitem{yong2024}
Y.~Du, F.~Chen, J.~Yuan, Z.~Liu, and F.~Yang, ``Resilient distributed source localization for multi-vehicle systems under sybil attacks,'' \emph{IEEE Transactions on Intelligent Vehicles}, pp. 1--10, 2024.

\bibitem{concentration-ineq2013}
S.~Boucheron, G.~Lugosi, and P.~Massart, \emph{Concentration Inequalities: A Nonasymptotic Theory of Independence}.\hskip 1em plus 0.5em minus 0.4em\relax Oxford University Press, 2013.

\bibitem{taoAnalysis2022}
T.~Tao, \emph{Analysis {{I}}}, ser. Texts and {{Readings}} in {{Mathematics}}.\hskip 1em plus 0.5em minus 0.4em\relax Singapore: Springer Nature, 2022, vol.~37.

\bibitem{ash1997neither}
J.~M. Ash, ``Neither a worst convergent series nor a best divergent series exists,'' \emph{The College Mathematics Journal}, vol.~28, no.~4, pp. 296--297, 1997.

\end{thebibliography}
\begin{IEEEbiography}[{\includegraphics[width=1in,height=1.25in,clip,keepaspectratio]{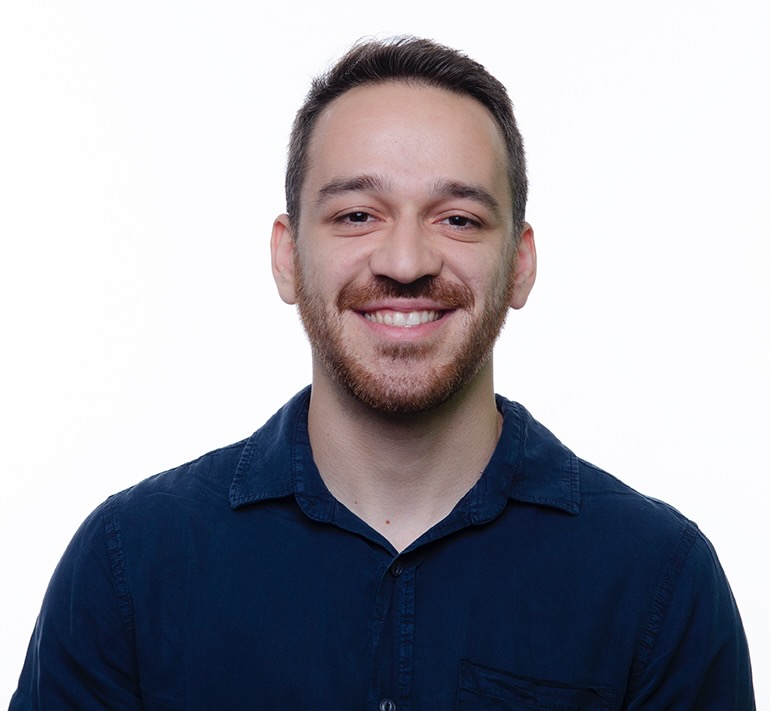}}]{Orhan Eren Akg\"un}
is a Computer Science Ph.D. student in the School of Engineering and Applied Sciences at Harvard University, advised by Prof.~Stephanie Gil. His research is on the development of resilient algorithms to counter adversaries in networked multi-agent systems, specifically within the domain of multi-robot systems. He received his Bachelor’s degree in Electrical and Electronics Engineering from the Bogazici University in 2021.
\end{IEEEbiography}
\begin{IEEEbiography}[{\includegraphics[width=1.05in,height=1.25in,clip,keepaspectratio]{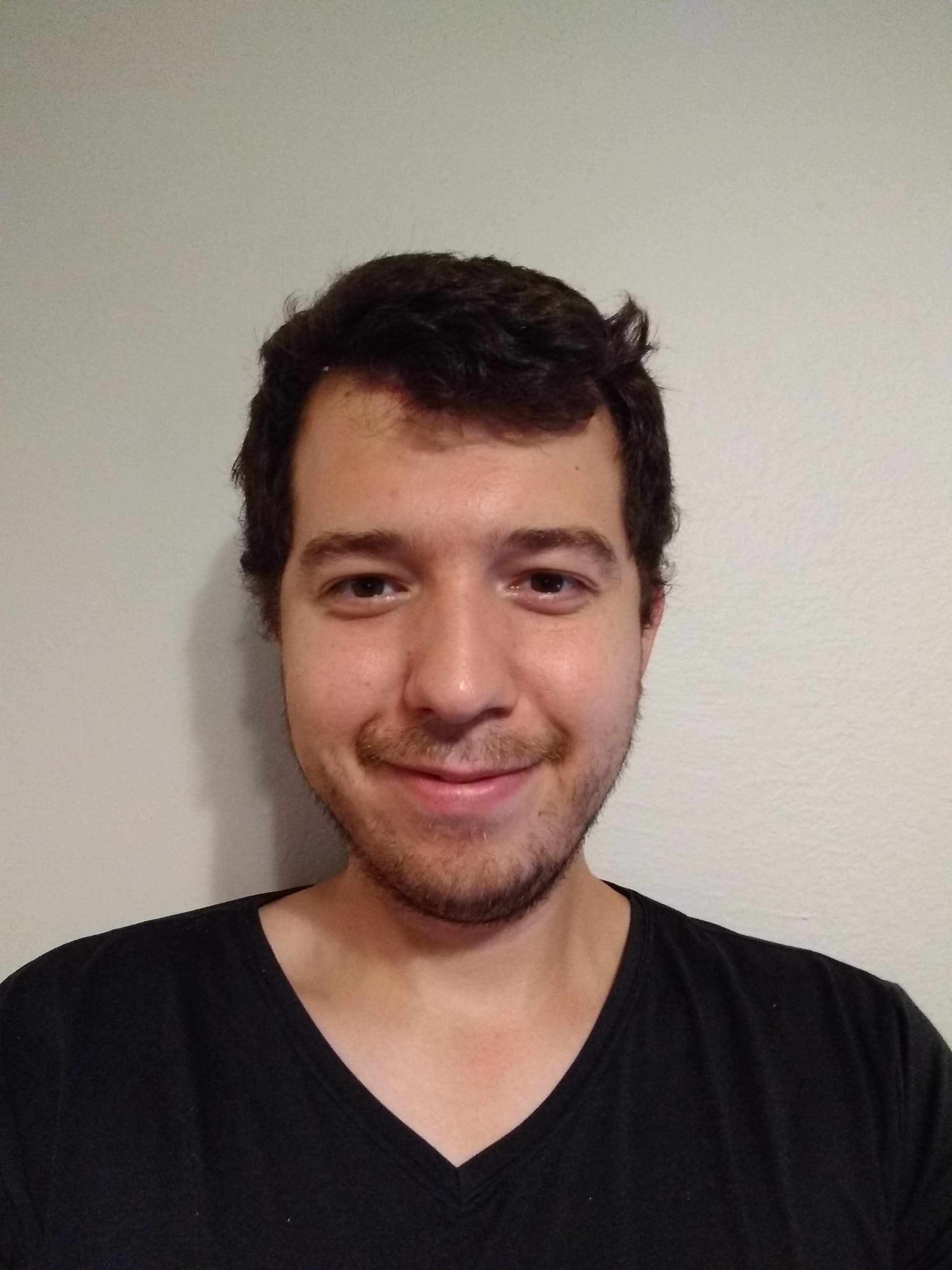}}] {Sarper Ayd\i n } (M'20) earned his B.Sc. degree in Industrial Engineering from Bilkent University, Ankara, Turkey, in 2017. He pursued his Ph.D. studies at Lehigh University, Bethlehem, PA, USA, from 2017 to 2019 before joining Texas A\&M University, where he completed his Ph.D. in Industrial Engineering. He is currently a postdoctoral fellow in the School of Engineering and Applied Sciences at Harvard University. His research focuses on decentralized and resilient algorithms for multi-agent systems.
\end{IEEEbiography}
\begin{IEEEbiography}[{\includegraphics[width=1in,height=1.25in,clip,keepaspectratio]{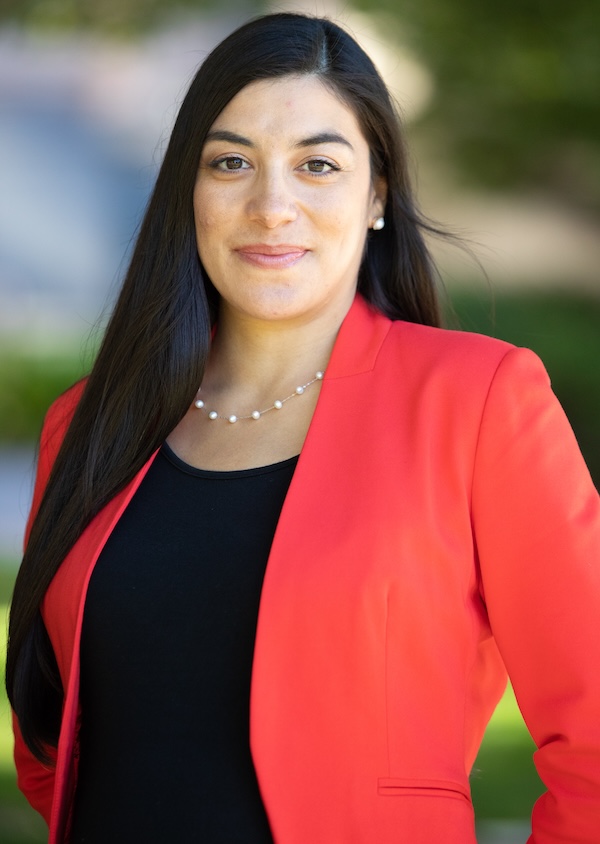}}]{Stephanie Gil}
is an Assistant Professor in the Computer Science Department at the School of Engineering and Applied Sciences at Harvard University where she directs the Robotics, Embedded Autonomy and Communication Theory (REACT) Lab. Prior she was an Assistant Professor at Arizona State University. Her research focuses on multi-robot systems where she studies the impact of information exchange and communication on resilience and trusted coordination. She is the recipient of the 2019 Faculty Early Career Development Program Award from the National Science Foundation (NSF CAREER), the Office of Naval Research Young Investigator Program (ONR YIP) recipient, and has been selected as a 2020 Alfred P. Sloan Fellow.  She obtained her PhD from the Massachusetts Institute of Technology in 2014.
\end{IEEEbiography}
\begin{IEEEbiography}[{\includegraphics[width=1in,height=1.25in,clip,keepaspectratio]{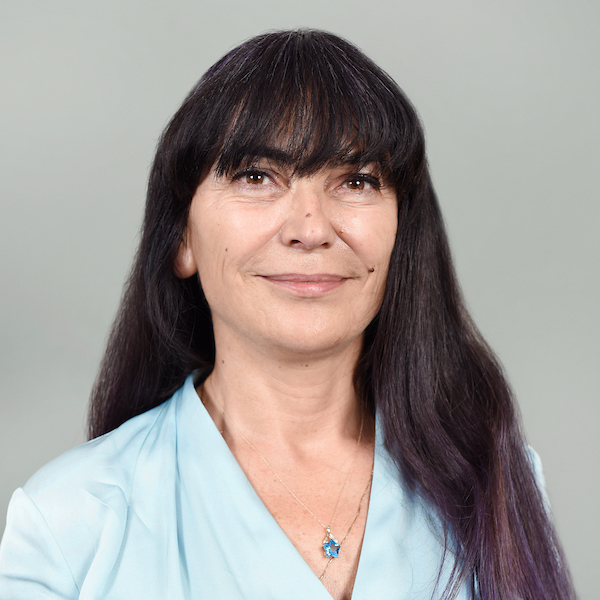}}]{Angelia Nedi\'c} (Member, IEEE),
has Ph.D.\ from Moscow State University, Moscow, Russia, in Computational Mathematics and Mathematical Physics (1994), and Ph.D.\ from Massachusetts Institute of Technology, Cambridge, USA in Electrical and Computer Science Engineering (2002). She has worked as a senior engineer in BAE Systems North America, Advanced Information Technology Division at Burlington, MA. Currently, she is a faculty at the School of Electrical, Computer, and Energy Engineering at Arizona State University at Tempe. Before joining Arizona State University, she was a Willard Scholar faculty member at the University of Illinois at Urbana-Champaign. She is a recipient (jointly with her co-authors) of the Best Paper Award at the Winter Simulation Conference 2013 and the Best Paper Award at the International Symposium on Modeling and Optimization in Mobile, Ad Hoc and Wireless Networks (WiOpt) 2015.  Her general research interest is in optimization, large scale complex systems dynamics, variational inequalities and games.
\end{IEEEbiography}

\end{document}